\newtheorem{theorem}{Theorem}
\newtheorem{proposition}{Proposition}
\newtheorem{definition}{Definition}
\newtheorem{lemma}{Lemma}
\newtheorem{remark}{Remark}
\newtheorem{conjecture}{Conjecture}
\newenvironment{proof}[1][Proof]{\textbf{#1.} }{\ \rule{0.5em}{0.5em}}
\title{\boldmath Asymptotic factorization of n-particle $SU(N)$ form factors}
\author[a]{Hrachya M. Babujian}
\author[b]{Angela Foerster}
\author[c]{and Michael Karowski}
\affiliation[a]{Yerevan Physics Institute,\\
Alikhanian Brothers 2, Yerevan, 375036 Armenia and\\
International Institute of Physics,
Universidade Federal do Rio Grande do Norte (UFRN),\\
59078-400 Natal-RN, Brazil}
\affiliation[b]{Instituto de F\'{\i}sica da UFRGS,\\
Av. Bento Gon\c{c}alves 9500, Porto Alegre, RS - Brazil}
\affiliation[c]{Institut f\"{u}r Theoretische Physik, FU-Berlin,\\
Arnimallee 14, 14195 Berlin, Germany}
\emailAdd{babujian@yerphi.am}
\emailAdd{angela@if.ufrgs.br}
\emailAdd{karowski@physik.fu-berlin.de}
\abstract{
We investigate the high energy behavior of the $SU(N)$ chiral Gross-Neveu
model in $1+1$ dimensions. The model is integrable and matrix elements of
several local operators (form factors) are known exactly. The form factors show
rapidity space clustering, which means factorization, if a group of rapidities
is shifted to infinity. We analyze this phenomenon for the $SU(N)$ model.
For several operators the factorization formulas are presented explicitly.
\\[3mm]
\textsc{Keywords}: Exact S-Matrix, Form Factors, Bethe Ansatz,
Integrable Field Theories
}
\begin{document}

\maketitle

\section{Introduction}

The Bjorken scattering or inelastic lepton-hadron scattering at high energies
has been a very important and crucial stage in the development of modern QCD
\cite{Bjorken1,Bjorken2,Feynman}. This well known experimental investigation
in high energy physics is very actual and has now a modern continuation, being
part of lepton-hadron experimental research \cite{Ducloue,Wing}. The essential
point in these studies is the behavior of the structure functions of the
hadrons \cite{Feynman}. They describe the parton (quark) structure of the
hadrons and the nature of the interaction between the quarks inside of the
hadrons. The amplitude of the lepton-hadron interaction consists of two parts,
where the lepton part is well known. The hadron part, whose invariant
decomposition provides the hadron form factors or structure functions
\cite{Feynman}, is not known. In QCD the calculation of the structure function
for all values of the Bjorken variable $x$ is still an open problem.

On the other side, the existence of exact integrable models in 1+1 dimensional
asymptotically free theories may be relevant, providing valuable insights into
this discussion. Remarkably, due to integrability, it is possible to obtain
exact form factors of local operators \cite{KW,Sm,BFKZ,BFK1}\footnote{Other
approaches to form factors in integrable quantum field theories can be found
in \cite{CM2,FMS,YZ,Lu0,Lu,Lu1,BL,LuZ,Or}.}. In the remarkable papers
\cite{Balog1,Balog2,BaW} Balog and Weisz define analogs of the structure
functions in two-dimensional integrable quantum field theories. In particular,
they consider form factors of the current operator (related to the
structure-function) of the $O(3)$ sigma model, which are accurately computed
over the whole $x$ range; in addition, the structure functions and some
moments are compared with renormalized perturbation theory. They also
calculate structure functions in the $O(N)$ sigma model using 1/N expansion
and make some conjectures on possible universal formulae in 4 dimensional QCD
for small $x$. Interestingly, in \cite{BaW} the authors employ the so called
cluster behavior of the form factors to calculate the same structure
functions. Here we mention that in all of the previously cited papers the
authors use only 2,3 and 4 particle form factors in $O(3)$ or in $O(N)$ sigma models.

In this article we will start an investigation of the above mentioned problems
in an opposite order: we will analyze the cluster behavior of the $SU(N)$
chiral Gross-Neveu model\footnote{For $N=2$ also called $SU(2)$ Thirring
model.}, which is an asymptotically free theory. For this, we do not only use
the 2,3 and 4 particle form factors, but also the general n -particle form
factors. We should point out that the first investigation of the cluster
behavior of the exact form factors was performed by Smirnov \cite{Sm} in the
case of the sine-Gordon, the $SU(2)$ Thirring model and the $O(3)$ sigma
model. He also applied these results to the current algebra \cite{Sm}. For the
sinh-Gordon model the cluster property of form factors was investigated in
\cite{KoMu}. Here we will consider the high energy behavior of the exact form
factors in 1+1 dimensional asymptotically free quantum field theories
\cite{Sm,BFK}, with connection to the factorization property and the Bjorken scattering.

The paper is organized as follows: In Section \ref{s2} we recall some known
formulae, which will be used in the following. In particular we present the
$SU(N)$ S-matrix and construct the form factors which are $n$-particle matrix
elements of local operators. In Section \ref{s3} we investigate the
\textquotedblleft rapidity space clustering\textquotedblright\ of form
factors, which describes the behavior of form factors, if a group of
rapitities is shifted to infinity. Several examples of operators are
considered, as the Noether current, the energy-momentum tensor, the
fundamental field of the $SU(N)$ chiral Gross-Neveu model, etc. In Section
\ref{s4} we present the proofs. Some more technical details are delegated to
the Appendices.

\section{Generalities}

\label{s2}

\subsection{SU(N) S-matrix}

The two particle S-matrix is $S(\theta)=\mathbf{1}\,b(\theta)+\mathbf{P}%
c(\theta)$ or in terms of matrix elements \cite{BKKW,BFK0,BFK1,BFK3}%
\begin{equation}
S_{\alpha\beta}^{\delta\gamma}(\theta)=\delta_{\alpha}^{\gamma}\delta_{\beta
}^{\delta}\,b(\theta)+\delta_{\alpha}^{\delta}\delta_{\beta}^{\gamma
}\,c(\theta) \label{S}%
\end{equation}
where $\alpha,\beta,\gamma,\delta=1,\dots,N$ denote fundamental particles. We
introduce also%
\begin{equation}
\tilde{S}_{\alpha\beta}^{\delta\gamma}(\theta)=S_{\alpha\beta}^{\delta\gamma
}(\theta)/a(\theta)=\delta_{\alpha}^{\gamma}\delta_{\beta}^{\delta}\tilde
{b}(\theta)+\delta_{\alpha}^{\delta}\delta_{\beta}^{\gamma}\tilde{c}(\theta)
\label{St}%
\end{equation}
where%
\begin{align*}
a(\theta)  &  =b(\theta)+c(\theta)=\frac{\Gamma\left(  -\frac{\theta}{2\pi
i}\right)  \Gamma\left(  1-\frac{1}{N}+\frac{\theta}{2\pi i}\right)  }%
{\Gamma\left(  \frac{\theta}{2\pi i}\right)  \Gamma\left(  1-\frac{1}{N}%
-\frac{\theta}{2\pi i}\right)  }\\
\tilde{b}(\theta)  &  =\frac{b(\theta)}{a(\theta)}=\frac{\theta}{\theta-i\eta
},~\tilde{c}(\theta)=\frac{c(\theta)}{a(\theta)}=\frac{-i\eta}{\theta-i\eta
},~\eta=\frac{2\pi}{N}.
\end{align*}
\label{2}

\subsection{SU(N) form factors}

\paragraph{ Minimal form factor function $F(\theta)$, $\phi$- and $\tau
$-function:}

To construct the form factors we need the "minimal form factor function
$F\left(  \theta\right)  $" for two particles \cite{BFK1,BFK3}%
\begin{align}
F\left(  \theta\right)   &  =c\exp\int\limits_{0}^{\infty}\frac{dt}{t\sinh
^{2}t}e^{\frac{t}{N}}\sinh t\left(  1-1/N\right)  \left(  1-\cosh t\left(
1-\theta/(i\pi)\right)  \right) \label{F}\\
&  =\frac{G\left(  \frac{1}{2\pi i}\theta\right)  G\left(  1-\frac{1}{2\pi
i}\theta\right)  }{G\left(  1-\frac{1}{N}+\frac{1}{2\pi i}\theta\right)
G\left(  2-\frac{1}{N}-\frac{1}{2\pi i}\theta\right)  }~,~~c=F\left(
i\pi\right)  =\frac{G^{2}\left(  \frac{1}{2}\right)  }{G^{2}\left(  \frac
{3}{2}-\frac{1}{N}\right)  }\nonumber
\end{align}
where $G(z)$ is Barnes G-function. It is the minimal solution of the
equations
\[
F(\theta)=F(-\theta)a(\theta)\,,~~F(i\pi-\theta)=F(i\pi+\theta)
\]
where $a(\theta)$ is the highest weight amplitude of the corresponding channel
of the S-matrix (\ref{S}).

The $\phi$-function satisfies \cite{BFK1,BFK3}%
\[
\prod_{k=0}^{N-2}\tilde{\phi}\left(  -\theta-ki\eta\right)  \prod_{k=0}%
^{N-1}F\left(  \theta+ki\eta\right)  =1
\]
with the solution%
\begin{equation}
\tilde{\phi}(\theta)=\left(  F\left(  -\theta\right)  \bar{F}(i\pi
+\theta)\right)  ^{-1}=\Gamma\left(  -\frac{\theta}{2\pi i}\right)
\Gamma\left(  1-\frac{1}{N}+\frac{\theta}{2\pi i}\right)  \label{phi}%
\end{equation}
where
\begin{align}
\bar{F}\left(  \theta\right)   &  =\bar{c}\exp\int\limits_{0}^{\infty}%
\frac{dt}{t\sinh^{2}t}e^{\frac{t}{N}}\sinh t/N\left(  1-\cosh t\left(
1-\theta/(i\pi)\right)  \right) \label{Fbar}\\
&  =\frac{G\left(  \frac{1}{2}-\frac{1}{N}+\frac{1}{2}\frac{\theta}{i\pi
}\right)  G\left(  \frac{3}{2}-\frac{1}{N}-\frac{1}{2}\frac{\theta}{i\pi
}\right)  }{G\left(  \frac{1}{2}+\frac{1}{2}\frac{\theta}{i\pi}\right)
G\left(  \frac{3}{2}-\frac{1}{2}\frac{\theta}{i\pi}\right)  }~,~~\bar{c}%
=\bar{F}\left(  i\pi\right)  =G^{2}\left(  1-\frac{1}{N}\right) \nonumber
\end{align}
is the minimal F-function for a particle and an anti-particle satisfying
\begin{equation}
\bar{F}\left(  \theta\right)  =-\bar{F}\left(  -\theta\right)  b(i\pi-\theta).
\label{Fbb}%
\end{equation}
The $\tau$-function is%
\begin{equation}
\tau(z)=\left(  \tilde{\phi}(z)\tilde{\phi}(-z)\right)  ^{-1}=\frac{1}%
{2\pi^{2}}\frac{z\sinh\frac{1}{2}z}{\Gamma\left(  1-\frac{1}{N}+\frac{1}%
{2}\frac{z}{i\pi}\right)  \Gamma\left(  1-\frac{1}{N}-\frac{1}{2}\frac{z}%
{i\pi}\right)  }. \label{tau}%
\end{equation}

\paragraph{n particle form factors:}

The matrix element of a local operator $\mathcal{O}(x)$ for a state of $n$
particles of kind $\alpha_{i}$ with rapidities $\theta_{i}$
\begin{equation}
\langle\,0\,|\,\mathcal{O}(x)\,|\,\theta_{1},\dots,\theta_{n}\,\rangle
_{\underline{\alpha}}^{in}=e^{-ix(p_{1}+\cdots+p_{n})}F_{\underline{\alpha}%
}^{\mathcal{O}}(\underline{\theta})\, \label{F1}%
\end{equation}
defines the generalized form factor $F_{1\dots n}^{\mathcal{O}}({\underline
{\theta}})$, which is a co-vector valued function with components
$F_{\underline{\alpha}}^{\mathcal{O}}(\underline{\theta})\,$. The form factors
satisfy the \textbf{form factor equations (i) - (v)} (see Appendix \ref{sf}).
Solutions of these equations can be written as follows:

As usual we split off the minimal part \cite{KW}%
\begin{equation}
F_{\underline{\alpha}}^{\mathcal{O}}(\underline{\theta})=N_{n}F(\underline
{\theta})K_{\underline{\alpha}}(\underline{\theta}),~~F(\underline{\theta
})=\prod_{1\leq i<j\leq n}F(\theta_{ij})) \label{FK}%
\end{equation}
where $\underline{\alpha}=(\alpha_{1},\dots,\alpha_{n}),~\underline{\theta
}=\left(  \theta_{1},\dots,\theta_{n}\right)  $ and $F(\theta)$ is defined by
(\ref{F}). The K-function is given by \textbf{an `off-shell' Bethe ansatz} in
terms of the multiple contour integral%
\begin{equation}
\fbox{$\rule[-0.2in]{0in}{0.5in}\displaystyle~K_{\underline{\alpha}%
}^{\mathcal{O}}(\underline{\theta})=\int_{\mathcal{C}_{\underline{\theta}}%
}d\underline{z}\,\tilde{h}(\underline{\theta},\underline{z})\,p^{\mathcal{O}%
}(\underline{\theta},\underline{z})\,\tilde{\Psi}_{\underline{\alpha}%
}(\underline{\theta},\underline{z})$~} \label{K}%
\end{equation}
with $\underline{z}=\left(  z_{1},\dots,z_{m}\right)  $ and $\int
_{\mathcal{C}_{\underline{\theta}}}d\underline{z}=\frac{1}{m!}\int
_{\mathcal{C}_{\underline{\theta}}}dz_{1}\dots\int_{\mathcal{C}_{\underline
{\theta}}}dz_{m}$. The integration contour $\mathcal{C}_{\underline{\theta}}$
(see Fig. \ref{f5.1}) and the scalar function $h(\underline{\theta}%
,\underline{z})$ depend only on the S-matrix and not on the specific operator
$\mathcal{O}(x)$%
\begin{equation}
\tilde{h}(\underline{\theta},\underline{z})=\prod_{i=1}^{n}\prod_{j=1}%
^{m}\tilde{\phi}(\theta_{i}-z_{j})\prod_{1\leq i<j\leq m}\tau(z_{i}%
-z_{j})\,,~\tau(z)=\frac{1}{\tilde{\phi}(-z)\tilde{\phi}(z)}\,. \label{h}%
\end{equation}
The dependence on the specific operator $\mathcal{O}(x)$ is encoded in the
scalar p-function $p^{\mathcal{O}}(\underline{\theta},\underline{z})$ which is
in general a simple function of $e^{\theta_{i}}$ and $e^{z_{j}}$%
.\begin{figure}[tbh]%
\[
{\unitlength4mm\begin{picture}(27,13)
\thicklines\put(1,0){
\put(0,0){$\bullet~\theta_n-2\pi i$}
\put(0,4.8){$\bullet$}
\put(.15,3.8){$\theta_n-2\pi i\frac1N$}
\put(0,6){$\bullet~\theta_n$}
\put(0,11){$\bullet\,\theta_n+2\pi i(1-\frac1N)$}
}
\put(7,6.5){\dots}
\put(12,0){
\put(0,0){$\bullet~\theta_2-2\pi i$}
\put(0,4.8){$\bullet$}
\put(.19,3.8){$\theta_2-2\pi i\frac1N$}
\put(0,6){$\bullet~\theta_2$}
\put(0,11){$\bullet\,\theta_2+2\pi i(1-\frac1N)$}
}
\put(20,1){
\put(0,0){$\bullet~\theta_1-2\pi i$}
\put(0,4.8){$\bullet$}
\put(.19,3.8){$\theta_1-2\pi i\frac1N$}
\put(0,6){$\bullet~\theta_1$}
\put(0,11){$\bullet\,\theta_1+2\pi i(1-\frac1N)$}
}
\put(9,2){\vector(1,0){0}}
\put(9,5.6){\vector(-1,0){0}}
\put(9,8.6){\vector(1,0){0}}
\put(-1,2.5){\oval(36,1)[br]}
\put(24,2.5){\oval(14,1)[tl]}
\put(24,4.8){\oval(3,3.6)[r]}
\put(24,6.1){\oval(14,1)[tl]}
\put(2,6.1){\oval(30,1)[br]}
\put(2,7.1){\oval(4,3)[l]}
\put(2,8.6){\line(1,0){25}}
\end{picture}
}
\]
\caption{The integration contour $\mathcal{C}_{\underline{\theta}}$. The
bullets refer to poles of the integrand in (\ref{K}).}%
\label{f5.1}%
\end{figure}
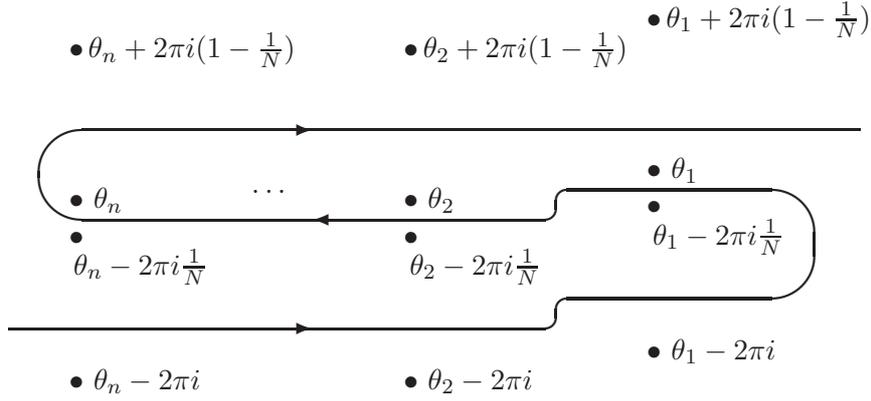

\paragraph{Bethe state:}

The state $\tilde{\Psi}_{\underline{\alpha}}$ in (\ref{K}) is a linear
combination of the basic Bethe ansatz co-vectors
\begin{equation}
\tilde{\Psi}_{\underline{\alpha}}(\underline{\theta},\underline{z}%
)=L_{\underline{\beta}}(\underline{z})\tilde{\Phi}_{\underline{\alpha}%
}^{\underline{\beta}}(\underline{\theta},\underline{z})\,,~~\text{with
}1<\beta_{i}\leq N\,. \label{Psi}%
\end{equation}
As usual in the context of the algebraic Bethe ansatz \cite{FST,TF} the basic
Bethe ansatz co-vectors are obtained from the monodromy matrix%
\begin{align}
\tilde{T}_{1\dots n,0}(\underline{\theta},z)  &  =\tilde{S}_{10}(\theta
_{1}-z)\,\cdots\tilde{S}_{n0}(\theta_{n}-z)=%
\begin{array}
[c]{c}%
\unitlength3mm\begin{picture}(8,4.5)(1,0) \put(1,2){\line(1,0){7}} \put(7,.8){$0$} \put(3,0){\line(0,1){4}} \put(2.2,0){$1$} \put(6,0){\line(0,1){4}} \put(5.,0){$n$} \put(3.7,2.7){$\dots$} \end{picture}
\end{array}
\label{T}\\
&  \equiv\left(
\begin{array}
[c]{cc}%
\tilde{A}_{1\dots n}(\underline{\theta},z) & \tilde{B}_{1\dots n,\beta
}(\underline{\theta},z)\\
\tilde{C}_{1\dots n}^{\beta}(\underline{\theta},z) & \tilde{D}_{1\dots
n,\beta}^{\beta^{\prime}}(\underline{\theta},z)
\end{array}
\right)  ,~~2\leq\beta,\beta^{\prime}\leq N\,.\nonumber
\end{align}
where the S-matrix $\tilde{S}_{i0}$ is given by (\ref{St}).

The reference co-vector is defined as usual by $\Omega\tilde{B}_{\beta}=0$
which implies
\[
\,\Omega_{\underline{\alpha}}=\delta_{\alpha_{1}}^{1}\dots\delta_{\alpha_{n}%
}^{1}\,.
\]
It is an eigenstates of $\tilde{A}$ and $\tilde{D}_{\beta}^{\beta^{\prime}}$%
\[
\Omega\,\tilde{A}(\underline{\theta},z)=\Omega\,,\,~\Omega\tilde{D}_{\beta
}^{\beta^{\prime}}(\underline{\theta},z)=\delta_{\beta}^{\beta^{\prime}}%
\prod\limits_{i=1}^{n}\tilde{b}(\theta_{i}-z)\Omega\,.
\]
where the indices $1\dots n$ are suppressed. The basic Bethe ansatz co-vectors
in (\ref{Psi}) are defined as%
\begin{equation}
\tilde{\Phi}_{\underline{\alpha}}^{\underline{\beta}}(\underline{\theta
},\underline{z})=\left(  \Omega\tilde{C}^{\beta_{m}}(\underline{\theta}%
,z_{m})\cdots\tilde{C}^{\beta_{1}}(\underline{\theta},z_{1})\right)
_{\underline{\alpha}}=%
\begin{array}
[c]{c}%
\unitlength4mm\begin{picture}(9,7) \put(9,5){\oval(14,2)[lb]} \put(9,5){\oval(18,6)[lb]} \put(4,1){\line(0,1){4}} \put(3.8,.4){$\alpha_1$} \put(8,1){\line(0,1){4}} \put(7.8,.4){$\alpha_n$} \put(-.2,5.4){$\beta_1$} \put(1.8,5.4){$\beta_m$} \put(3.8,5.4){$1$} \put(7.8,5.4){$1$} \put(9.2,1.8){$1$} \put(9.2,3.8){$1$} \put(3,2.5){$\theta_1$} \put(6.8,2.5){$\theta_{n}$} \put(.8,2.5){$z_1$} \put(1.7,3.6){$z_m$} \put(5.4,4.5){$\dots$} \put(8.5,2.6){$\vdots$} \end{picture}
\end{array}
\label{Phi0}%
\end{equation}
where $1<\beta_{i}\leq N$.

The technique of the \textbf{`nested Bethe ansatz'} means that for the
coefficients $L_{\underline{\beta}}(\underline{z})$ in (\ref{Psi}) one makes
the analogous construction as for $K_{\underline{\alpha}}(\underline{\theta})$
in (\ref{K}), where now the indices $\underline{\beta}$ take only the values
$2\leq\beta_{i}\leq N$. This nesting is repeated until the space of the
coefficients becomes one dimensional. The final result is%
\begin{equation}
K_{\underline{\alpha}}^{\mathcal{O}}(\underline{\theta})=\int d\underline
{\underline{z}}\,\tilde{h}(\underline{\theta},\underline{\underline{z}%
})\,p^{\mathcal{O}}(\underline{\theta},\underline{\underline{z}})\,\tilde
{\Phi}_{\underline{\alpha}}(\underline{\theta},\underline{\underline{z}})
\label{Kc}%
\end{equation}
with the complete h-function
\begin{equation}
\tilde{h}\,(\underline{\theta},\underline{\underline{z}})=\prod_{j=0}%
^{N-2}\tilde{h}(\underline{z}_{j},\underline{z}_{j+1})\,,~~\underline{z}%
_{0}=\underline{\theta} \label{hh}%
\end{equation}
and the complete Bethe ansatz state
\begin{equation}
\tilde{\Phi}_{\underline{\alpha}}(\underline{\theta},\underline{\underline{z}%
})=(\tilde{\Phi}^{(N-2)})_{\underline{\alpha}_{N-2}}^{\underline{\alpha}%
_{N-1}}(\underline{z}^{(N-2)},\underline{z}^{(N-1)})\dots(\tilde{\Phi}%
^{(1)})_{\underline{\alpha}_{1}}^{\underline{\alpha}_{2}}(\underline{z}%
^{(1)},\underline{z}^{(2)})\tilde{\Phi}_{\underline{\alpha}}^{\underline
{\alpha}_{1}}(\underline{\theta},\underline{z}^{(1)}) \label{PHI}%
\end{equation}
where $\underline{\underline{z}}=(\underline{z}^{(1)},\dots,\underline
{z}^{(N-1)}),~\underline{z}^{(j)}=(z_{1}^{(j)},\dots,z_{n_{j}}^{(j)})$ and
$\underline{\alpha}_{N-1}=(N,\dots,N)$.

$\,$It is well known (see \cite{BKZ2}) that the `off-shell' Bethe ansatz
states are highest weight states if they satisfy certain matrix difference
equations. If there are $n$ particles the $SU(N)$ weights are \cite{BFK3}%
\begin{align}
w  &  =\left(  n-n_{1},n_{1}-n_{2},\dots,n_{N-2}-n_{N-1},n_{N-1}\right)
\label{w}\\
&  =w^{\mathcal{O}}+L(1,\dots,1)\nonumber
\end{align}
where $n_{1}=m,n_{2},\dots$ are the numbers of $C$ operators in the various
levels of the nesting, $w^{\mathcal{O}}$ is the weight vector of the operator
$\mathcal{O}$ and $L=0,1,2,\dots$; note that $w=(1,\dots,1)$ correspond to the
vacuum sector.

\section{Rapidity space clustering}

\label{s3}

We shift $k$ of the $n$ rapidities in the form factor $F_{\underline{\alpha}%
}^{\mathcal{O}}(\underline{\theta})$ (\ref{FK}) to $\infty$ and define

$\underline{\theta}_{W}=(\theta_{1}+W,\dots,\theta_{k}+W,\theta_{k+1}%
,\dots,\theta_{k+l})=(\underline{\hat{\theta}}+W,\underline{\check{\theta}})$.

\noindent We investigate the behavior of $F_{\underline{\alpha}}^{\mathcal{O}%
}(\underline{\theta}_{W})$ for $W\rightarrow\infty:$ The result is of the form%
\begin{equation}
F_{\underline{\alpha}}^{\mathcal{O}}(\underline{\theta}_{W})\overset
{W\rightarrow\infty}{\rightarrow}c_{\hat{\mathcal{O}}\mathcal{\check{O}}%
}^{\mathcal{O}}(k,l,W)F_{\underline{\hat{\alpha}}}^{\hat{\mathcal{O}}%
}(\underline{\hat{\theta}})F_{\underline{\check{\alpha}}}^{\mathcal{\check{O}%
}}(\underline{\check{\theta}})\,. \label{FFF}%
\end{equation}
We calculate the functions $c_{\hat{\mathcal{O}}\mathcal{\check{O}}%
}^{\mathcal{O}}(W)$ for several operators.

\subsection{Examples of local fields:}

In this article we consider the following fields:

\paragraph{The $SU(N)$ Noether current}%

\[
J_{a}^{\mu}=\bar{\psi}_{\beta}\gamma^{\mu}\left(  T_{a}\right)  _{\alpha
}^{\beta}\psi^{\alpha}%
\]
transforms as the adjoint representation with highest weights $w^{J}%
=(2,1,\dots,1,0)$. The $N^{2}-1$ generators of $SU(N)$ satisfy
\[
\left[  T_{a},T_{b}\right]  =if_{abc}T_{c},~~\operatorname*{Tr}T_{a}%
=0,~~\operatorname*{Tr}(T_{a}T_{b})=\tfrac{1}{2}\delta_{ab}\,.
\]
The conservation law $\partial_{\mu}J_{a}^{\mu}(x)=0$ implies that $J_{a}%
^{\mu}(x)$ may be written in terms of the pseudo potential $J_{a}(x)$ as%
\begin{equation}
J_{a}^{\mu}(x)=\epsilon^{\mu\nu}\partial_{\nu}J_{a}(x) \label{J}%
\end{equation}
with the quantum numbers%
\begin{equation}%
\begin{array}
[c]{lcl}%
\text{charge} &  & Q^{J}=0\\
\text{weight vector} &  & w^{J}=\left(  2,1,\dots,1,0\right) \\
\text{statistics factor} &  & \sigma^{J}=1\\
\text{spin} &  & s^{J}=0.
\end{array}
\label{QnJ}%
\end{equation}
Due the Swieca et al \cite{KKS} the bound state of $N-1$ particles is to be
identified with the anti-particle. This means that the anti-particle
$\bar{\alpha}$ of a fundamental particle $\alpha$ of rank $1$ is a bound state
of rank $N-1$
\begin{equation}
\bar{\alpha}=(\rho)=(\rho_{1}\dots\rho_{N-1}),~\text{with }\rho_{1}<\dots
<\rho_{N-1},~\rho_{i}\neq\alpha\,. \label{bs}%
\end{equation}
The charge conjugation matrix is given by%
\begin{equation}
\mathbf{C}_{\beta\bar{\alpha}}=\mathbf{C}_{\beta(\rho_{1}\dots\rho_{N-1}%
)}=\mathbf{C}^{\bar{\alpha}\beta}=\epsilon_{\beta\rho_{1}\dots\rho_{N-1}}\,
\label{C}%
\end{equation}
with $\mathbf{C}_{\beta\bar{\alpha}}\mathbf{C}^{\bar{\alpha}\gamma}%
=\delta_{\beta}^{\gamma}$. In terms of fields this means $\bar{\psi}_{\beta
}=\mathbf{C}_{\beta(\rho)}\bar{\psi}^{(\rho)}=\mathbf{C}_{\beta(\rho)}%
\psi^{\rho_{1}}\dots\psi^{\rho_{N-1}}$.

For the Bethe ansatz the formulation of the Noether current given by%
\[
J_{\mu}^{\alpha(\rho)}=\bar{\psi}^{(\rho)}\gamma_{\mu}\psi^{\alpha}%
-\mathbf{C}^{\alpha(\rho)}\mathbf{C}_{(\sigma)\beta}\bar{\psi}^{(\sigma
)}\gamma_{\mu}\psi^{\beta}/N
\]
with $\mathbf{C}_{\alpha(\rho)}J_{\mu}^{\alpha(\rho)}=0$ is more convenient,
which means for the pseudo potentials
\begin{equation}
J_{a}=\mathbf{C}_{\beta(\rho)}\left(  T_{a}\right)  _{\alpha}^{\beta}%
J^{\alpha(\rho)}. \label{JJ}%
\end{equation}
Because the Bethe ansatz yields highest weight states we obtain the matrix
elements of the highest weight component$\ J(x)=J^{1\bar{N}}(x)=J^{1(12\dots
N-1)}(x)\,$. The form factor is given by (\ref{FK}) and (\ref{K}) with the
p-function for the operator $J(x)$ \cite{BFK1}%
\begin{equation}
p^{J}(\underline{\theta},\underline{\underline{z}})=e^{i\pi\frac{1}{N}n_{1}%
}\left(  {\textstyle\prod\nolimits_{i=1}^{n}} e^{-\frac{1}{2}\theta_{i}%
}\right)  \left(  {\textstyle\prod\nolimits_{i=1}^{n_{1}}} e^{\frac{1}{2}%
z_{i}^{(1)}}\right)  \left(  {\textstyle\prod\nolimits_{i=1}^{n_{N-1}}}
e^{\frac{1}{2}z_{i}^{(N-1)}}\right)  /\left(  {\textstyle\sum\nolimits_{i=1}%
^{n}} e^{-\theta_{i}}\right)  \label{pJ}%
\end{equation}
for $n=0\operatorname{mod}N$. The general weight formula of the Bethe states
(\ref{w}) implies that the numbers of integrations in (\ref{Kc}) satisfy
\begin{equation}
n_{j}=n\left(  1-j/N\right)  -1,~j=1,\dots,N-1. \label{nJ}%
\end{equation}
In particular the one particle and one anti-particle form factor is
\cite{BFK1}%
\begin{align}
F_{\alpha\bar{\beta}}^{J_{a}}(\theta,\omega)  &  =\left(  T_{a}\right)
_{\alpha\bar{\beta}}\frac{1}{\cosh\frac{1}{2}\left(  \theta-\omega\right)
}\bar{F}(\theta-\omega)/\bar{F}(i\pi)\label{FJ2}\\
F_{\alpha\bar{\beta}}^{J^{\gamma\bar{\delta}}}(\theta,\omega)  &  =\left(
\delta_{\alpha}^{\gamma}\delta_{\bar{\beta}}^{\bar{\delta}}-\mathbf{C}%
^{\gamma\bar{\delta}}\mathbf{C}_{\bar{\beta}\alpha}/N\right)  \frac{1}%
{\cosh\frac{1}{2}\left(  \theta-\omega\right)  }\bar{F}(\theta-\omega)/\bar
{F}(i\pi)\nonumber
\end{align}
where $\left(  T_{a}\right)  _{\alpha\bar{\beta}}=\mathbf{C}_{\delta\bar
{\beta}}\left(  T_{a}\right)  _{\alpha}^{\delta}$ and $\bar{F}(\theta)$
defined in (\ref{Fbar}) is the "minimal form factor function" for one particle
and one anti-particle.

\paragraph{Energy momentum $T^{\mu\nu}$:}

We write the energy momentum tensor in terms of an energy momentum potential%
\begin{equation}
T^{\mu\nu}(x)=R^{\mu\nu}(i\partial_{x})T(x),~~R^{\mu\nu}(P)=-P^{\mu}P^{\nu
}+g^{\mu\nu}P^{2} \label{Tmunu}%
\end{equation}
with
\begin{equation}%
\begin{array}
[c]{lcl}%
\text{charge} &  & Q^{T}=0\\
\text{weight vector} &  & w^{T}=\left(  0,\dots,0\right) \\
\text{statistics factor} &  & \sigma^{T}=1\\
\text{spin} &  & s^{T}=0.
\end{array}
\end{equation}
We propose the p-function of the potential
\[
p^{T}(\underline{\theta},\underline{\underline{z}})=\frac{\sum e^{z_{j}^{(1)}%
}}{\sum e^{\theta_{j}}}-\frac{\sum e^{-z_{j}^{(1)}}}{\sum e^{-\theta_{j}}%
}=p^{T_{+}}(\underline{\theta},\underline{z})+p^{T_{-}}(\underline{\theta
},\underline{z}).
\]
The general weight formula of Bethe states (\ref{w}) implies that the numbers
of integrations in (\ref{Kc}) satisfy
\begin{equation}
n_{j}=n\left(  1-j/N\right)  ,~j=1,\dots,N-1. \label{nT}%
\end{equation}
The one particle and one anti-particle form factors are \cite{BFK1}%
\begin{align}
F_{\alpha\bar{\beta}}^{T}(\theta,\omega)  &  =\mathbf{C}_{\alpha\bar{\beta}%
}\frac{-i}{\cosh\frac{1}{2}\left(  \theta-\omega\right)  }\frac{1}%
{\theta-\omega-i\pi}\bar{F}(\theta-\omega)/\bar{F}(i\pi)\label{FT2}\\
F_{\alpha\bar{\beta}}^{T^{\rho\sigma}}(\theta,\omega)  &  =4m^{2}%
\mathbf{C}_{\alpha\bar{\beta}}e^{\frac{1}{2}\left(  \rho+\sigma\right)
\left(  \theta+\omega+i\pi\right)  }\frac{\sinh\frac{1}{2}\left(
\theta-\omega-i\pi\right)  }{\theta-\omega-i\pi}\bar{F}(\theta-\omega)/\bar
{F}(i\pi),~\rho,\sigma=\pm\nonumber
\end{align}

\paragraph{The iso-scalar field $\phi(x)$}

with the quantum numbers%
\[%
\begin{array}
[c]{lcl}%
\text{charge} &  & Q^{\phi}=0\\
\text{weight vector} &  & w^{\phi}=\left(  0,\dots,0\right) \\
\text{statistics factor} &  & \sigma^{\phi}=e^{-i\eta}\\
\text{spin} &  & s^{\phi}=0,
\end{array}
\]
and the p-function
\begin{equation}
p^{\phi}(\underline{\theta},\underline{\underline{z}})=e^{i\frac{\pi}{N}n_{1}%
}\left(  {\textstyle\prod\nolimits_{i=1}^{n}} e^{-\left(  1-\frac{1}%
{N}\right)  \theta_{i}}\right)  \left(  {\textstyle\prod\nolimits_{i=1}%
^{n_{1}}} e^{z_{j}^{(1)}}\right)  \label{pphi}%
\end{equation}
for $n=0\operatorname{mod}N$. The general weight formula of Bethe states
(\ref{w}) implies that the numbers of integrations in (\ref{Kc}) satisfy
\begin{equation}
n_{j}=n\left(  1-j/N\right)  ,~j=1,\dots,N-1. \label{nphi}%
\end{equation}
The one particle and one anti-particle form factor is%
\[
F_{\alpha\bar{\delta}}^{\phi}(\theta,\omega)=\mathbf{C}_{\alpha\bar{\delta}%
}2i\left(  1-\sigma^{\phi}\right)  \frac{e^{-\left(  \frac{1}{2}-\frac{1}%
{N}\right)  \left(  \theta-\omega-i\pi\right)  }}{\theta-\omega-i\pi}%
\frac{\bar{F}(\theta-\omega)}{\bar{F}(i\pi)}%
\]
if we normalize the field by $\left\langle 0|\phi(x)|0\right\rangle =1$.

\paragraph{The fundamental field $\psi^{\alpha}(x)$}

of the chiral $SU(N)$ Gross-Neveu model with the quantum numbers
\begin{equation}%
\begin{array}
[c]{lcl}%
\text{charge} &  & Q^{\psi}=1\\
\text{weight vector} &  & w^{\psi}=\left(  1,0,\dots,0\right) \\
\text{statistics factor} &  & \sigma^{\psi}=e^{\left(  1-\frac{1}{N}\right)
i\pi}\\
\text{spin} &  & s^{\psi}=-\frac{1}{2}\left(  1-\frac{1}{N}\right)
\end{array}
\label{Qpsi}%
\end{equation}
The p-function of the highest weight component $\psi=\psi^{1}$ for
$n=1\operatorname{mod}N$ is \cite{BFK1}%
\begin{equation}
p^{\psi}(\underline{\theta},\underline{\underline{z}})=e^{\frac{1}{2}%
n_{1}i\eta}\left(  {\textstyle\prod\nolimits_{i=1}^{n}} e^{-\frac{1}{2}\left(
1-\frac{1}{N}\right)  \theta_{i}}\right)  \left(  {\textstyle\prod
\nolimits_{i=1}^{n_{1}}} e^{\frac{1}{2}z_{i}^{(1)}}\right)  \label{ppsi}%
\end{equation}
and the 1-particle matrix element is%
\begin{equation}
F_{\alpha}^{\psi}(\theta)=\delta_{\alpha}^{1}e^{-\frac{1}{2}\left(  1-\frac
{1}{N}\right)  \theta}\,. \label{F1psi}%
\end{equation}
The general weight formula of Bethe states (\ref{w}) with $w^{\psi}=\left(
1,0,\dots,0\right)  $ implies that the numbers of integrations in (\ref{Kc})
satisfy
\begin{equation}
n_{j}=\left(  n-1\right)  \left(  1-j/N\right)  ,~j=1,\dots,N-1. \label{npsi}%
\end{equation}

\paragraph{The field $\chi^{\bar{\alpha}}(x)$}

with the quantum numbers%
\[%
\begin{array}
[c]{lcl}%
\text{charge} &  & Q^{\chi}=N-1\\
\text{weight vector} &  & w^{\chi}=\left(  1,1,\dots1,0\right) \\
\text{statistics factor} &  & \sigma_{1}^{\chi}=e^{i\pi\left(  N-\frac{1}%
{N}\right)  }\\
\text{spin} &  & s^{\chi}=\frac{1}{2}\left(  1-\frac{1}{N}\right)  .
\end{array}
\]
The p-function of the highest weight component $\chi=\chi^{\bar{N}}$ for
$n=\left(  N-1\right)  \operatorname{mod}N$ is%
\begin{equation}
p^{\chi}(\underline{\theta},\underline{\underline{z}})=e^{\left(  n_{1}%
+\frac{1}{2}n_{N-1}\right)  i\eta}\left(  {\textstyle\prod\nolimits_{j=1}^{n}}
e^{-\left(  1-\frac{1}{2N}\right)  \theta_{j}}\right)  \left(
{\textstyle\prod\nolimits_{j=1}^{n_{1}}} e^{z_{j}^{(1)}}\right)  \left(
{\textstyle\prod\nolimits_{j=1}^{n_{N-1}}} e^{\frac{1}{2}z_{j}^{(N-1)}%
}\right)  /{\textstyle\sum} e^{-\theta_{i}} \label{pchi}%
\end{equation}
with $n_{j}=\left(  n+1\right)  \left(  1-j/N\right)  -1$ and the
1-anti-particle matrix element is (see \cite{BFK3})%
\begin{equation}
F_{\bar{\alpha}}^{\chi^{\bar{\beta}}}(\omega)=\delta_{\bar{\alpha}}%
^{\bar{\beta}}e^{\frac{1}{2}\left(  1-\frac{1}{N}\right)  \omega}.
\label{F1chi}%
\end{equation}

\subsection{Results}

As examples of the general formula (\ref{FFF}) we obtain:

\label{0}

\begin{enumerate}
\item Particle number $n=0\operatorname{mod}N$ and $k=0\operatorname{mod}N$%
\begin{align}
&  F_{\underline{\alpha}}^{J_{a}}(\underline{\theta}_{W})\overset
{W\rightarrow\infty}{\rightarrow}-2\eta W^{-1}f_{abc}F_{\underline{\hat
{\alpha}}}^{J_{b}}(\underline{\hat{\theta}})F_{\underline{\check{\alpha}}%
}^{J_{c}}(\underline{\check{\theta}})~,~~\text{see Theorem \ref{t1}}%
\label{FJ}\\
&  F_{\underline{\alpha}}^{\phi}(\underline{\theta}_{W})\overset
{W\rightarrow\infty}{\rightarrow}F_{\underline{\hat{\alpha}}}^{\phi
}(\underline{\hat{\theta}})F_{\underline{\check{\alpha}}}^{\phi}%
(\underline{\check{\theta}})~,~~\text{see Theorem \ref{t1a}}\label{Fphi}\\
&  F_{\underline{\alpha}}^{T}(\underline{\theta}_{W})\overset{W\rightarrow
\infty}{\rightarrow}2\eta W^{-2}F_{\underline{\hat{\alpha}}}^{J_{a}%
}(\underline{\hat{\theta}})F_{\underline{\check{\alpha}}}^{J_{a}}%
(\underline{\check{\theta}})~,~~\text{see Theorem \ref{t1b}.} \label{FT}%
\end{align}

\item Particle number $n=0\operatorname{mod}N$ and $k=1\operatorname{mod}N$%
\begin{align}
F_{\underline{\alpha}}^{J}(\underline{\theta}_{W})  &  \overset{W\rightarrow
\infty}{\rightarrow}c_{\psi\chi}^{J}(k,l,W)F_{\underline{\hat{\alpha}}}^{\psi
}(\underline{\hat{\theta}})F_{\underline{\check{\alpha}}}^{\chi}%
(\underline{\check{\theta}})~,~~\text{see Theorem \ref{t2}}\label{FJpsichi}\\
&  c_{\psi\chi}^{J}(k,l,W)=e^{i\pi l_{1}}d\,W^{\frac{1}{N^{2}}}e^{-\frac{1}%
{2}\left(  1-\frac{1}{N}\right)  W}\nonumber\\
F_{\underline{\alpha}}^{T}(\underline{\theta}_{W})  &  \overset{W\rightarrow
\infty}{\rightarrow}c_{\psi\chi}^{T}(k,l,W)\mathbf{C}_{\alpha\bar{\beta}%
}F_{\underline{\hat{\alpha}}}^{\psi^{\alpha}}(\underline{\hat{\theta}%
})F_{\underline{\check{\alpha}}}^{\chi^{\bar{\beta}}}(\underline{\check
{\theta}})~,~~\text{see Conjecture \ref{t2a}}\label{FTpsichi}\\
&  c_{\psi\chi}^{T}(k,l,W)=-ie^{i\pi l_{1}}d\,W^{\frac{1}{N^{2}}-1}%
e^{-\frac{1}{2}\left(  1-\frac{1}{N}\right)  W}\nonumber
\end{align}
with the constant $d=2\left(  2\pi\right)  ^{-\frac{1+N}{N^{2}}}%
e^{-i\pi\left(  N+\frac{1}{2N}\right)  }/\bar{F}(i\pi)$.

\item Particle number $n=1\operatorname{mod}N$ and $k=0\operatorname{mod}N$%
\begin{align}
&  F_{\underline{\alpha}}^{\psi^{\beta}}(\underline{\theta}_{W})\overset
{W\rightarrow\infty}{\rightarrow}i\eta W^{-1}\,\mathbf{C}_{\gamma\bar{\delta}%
}F_{\underline{\hat{\alpha}}}^{J^{\beta\bar{\delta}}}(\underline{\hat{\theta}%
})F_{\underline{\check{\alpha}}}^{\psi^{\gamma}}(\underline{\check{\theta}%
})\label{psiJpsi}\\
&  =2i\eta W^{-1}F_{\underline{\hat{\alpha}}}^{J_{a}}(\underline{\hat{\theta}%
})\left(  T_{a}\right)  _{\delta}^{\beta}F_{\underline{\check{\alpha}}}%
^{\psi^{\delta}}(\underline{\check{\theta}})~,~~\text{see Theorem \ref{t3}%
.}\nonumber
\end{align}

\item Particle number $n=1\operatorname{mod}N$ and $k=1\operatorname{mod}N$%
\begin{equation}
F_{\underline{\alpha}}^{\psi^{\alpha}}(\underline{\theta}_{W})\overset
{W\rightarrow\infty}{\rightarrow}e^{i\pi l_{1}}e^{-\frac{1}{2}\left(
1-\frac{1}{N}\right)  W}F_{\underline{\hat{\alpha}}}^{\psi^{\alpha}%
}(\underline{\hat{\theta}})F_{\underline{\check{\alpha}}}^{\phi}%
(\underline{\check{\theta}})~,~~\text{see Theorem \ref{t4}.} \label{psipsiphi}%
\end{equation}

\end{enumerate}

\section{Proofs}

\label{s4}We use the short notation $\underline{\theta}_{W}$ of Section
\ref{s3} and in addition $\underline{\underline{z}}_{W}=(\underline{z}%
_{W}^{(1)},\dots,\underline{z}_{W}^{(N-1)})$ where we shift $k_{j}$ of the
$z_{i}^{(j)}$ and define

\noindent\ $\underline{z}_{W}^{(j)}=(z_{1}^{(j)}+W,\dots,z_{k_{j}}%
^{(j)}+W,z_{k_{j}+1}^{(j)},\dots,z_{n_{j}}^{(j)})=(\underline{\hat{z}}%
^{(j)}+W,\underline{\check{z}}^{(j)}),~(j=1,\dots,N-1)$.

\noindent The choice of the $k_{j}$ integrations out of the $n_{j}$ ones in
(\ref{K}) is arbitrary therefore there is a factor of $\binom{n_{j}}{k_{j}}$
such that $\binom{n_{j}}{k_{j}}\frac{1}{n_{j}!}=\frac{1}{k_{j}!}\frac{1}%
{l_{j}!},~(l_{j}=n_{j}-k_{j})$ and there is the replacement
\[
\int d\underline{z}_{j}\dots\rightarrow\int d\underline{\hat{z}}_{j}\dots\int
d\underline{\check{z}}_{j}\dots
\]
The asymptotic behavior of the form factors given by (\ref{FK}) and (\ref{K})
with $\underline{\theta}=\underline{\theta}_{W}$ for $W\rightarrow\infty$ is
obtained from the asymptotic behavior of $F(\underline{\theta}_{W}),~\tilde
{h}(\underline{\theta}_{W},\underline{\underline{z}}_{W}),~\tilde{\Psi
}(\underline{\theta}_{W},\underline{\underline{z}}_{W})$ and the p-functions
(see Appendix \ref{sa0}). In the following, some equations are written for
simplicity up to constant factors. Constant factors in eq. (\ref{FFF}) are
finally obtained by form factor equation (iii).

\subsection{Theorem \ref{t1}}

\begin{theorem}
\label{t1}The form factor of the pseudo-potential of the current for particle
number $n=0\operatorname{mod}N$ and $k=0\operatorname{mod}N$ shows the cluster
behavior%
\[
F_{\underline{\alpha}}^{J^{\beta(\sigma)}}(\underline{\theta}_{W}%
)\overset{W\rightarrow\infty}{\rightarrow}i\eta\frac{1}{W}\mathbf{C}%
_{\gamma(\lambda)}\left(  F_{\underline{\hat{\alpha}}}^{J^{\beta(\lambda)}%
}(\underline{\hat{\theta}})F_{\underline{\check{\alpha}}}^{J^{\gamma(\sigma)}%
}(\underline{\check{\theta}})-F_{\underline{\hat{\alpha}}}^{J^{\gamma(\sigma
)}}(\underline{\hat{\theta}})F_{\underline{\check{\alpha}}}^{J^{\beta
(\lambda)}}(\underline{\check{\theta}})\right)
\]
which is equivalent to\footnote{For $SU(2)$ this result (\ref{FJ}) was
obtained previously by Smirnov \cite{Sm}.}%
\[
F_{\underline{\alpha}}^{J_{a}}(\underline{\theta}_{W})\overset{W\rightarrow
\infty}{\rightarrow}-2\eta\frac{1}{W}f_{abc}F_{\underline{\hat{\alpha}}%
}^{J_{b}}(\underline{\hat{\theta}})F_{\underline{\check{\alpha}}}^{J_{c}%
}(\underline{\check{\theta}})\,.
\]

\end{theorem}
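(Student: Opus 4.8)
The plan is to start from the integral representation of the form factor, $F_{\underline\alpha}^{\mathcal{O}}=N_{n}F(\underline\theta)K_{\underline\alpha}(\underline\theta)$ of (\ref{FK}), with $K$ given by the complete nested off-shell Bethe ansatz (\ref{Kc}). After setting $\underline\theta=\underline\theta_{W}$, I would split the integration variables at every level of the nesting into a shifted group $\underline{\hat z}^{(j)}+W$ and an unshifted group $\underline{\check z}^{(j)}$, using the binomial bookkeeping already recorded before Section~\ref{s4} (which replaces $\tfrac{1}{n_{j}!}\int d\underline z_{j}$ by $\tfrac{1}{k_{j}!}\int d\underline{\hat z}_{j}\,\tfrac{1}{l_{j}!}\int d\underline{\check z}_{j}$). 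The $W\to\infty$ behaviour of $F_{\underline\alpha}^{J}(\underline\theta_{W})$ then decomposes into the asymptotics of the four building blocks $F(\underline\theta_{W})$, $\tilde h(\underline\theta_{W},\underline{\underline z}_{W})$, $\tilde\Psi(\underline\theta_{W},\underline{\underline z}_{W})$ and $p^{J}(\underline\theta_{W},\underline{\underline z}_{W})$, which I would treat separately as indicated in Appendix~\ref{sa0}.

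The scalar factors are essentially mechanical. The minimal product $F(\underline\theta_{W})=\prod_{i<j}F(\theta_{W,ij})$ splits into a hat block, a check block, and cross pairs $F(\hat\theta_{i}+W-\check\theta_{j})$; inserting the large-argument asymptotics of the two-particle minimal form factor (\ref{F}) turns the cross pairs into an explicit power of $W$ times an exponential, while the hat and check blocks reassemble into $F(\underline{\hat\theta})F(\underline{\check\theta})$. The complete $h$-function (\ref{hh})--(\ref{h}) is handled identically, its cross factors $\tilde\phi(\hat\theta_{i}+W-\check z_{j})$ and cross $\tau$'s producing further powers of $W$, and the $p$-function (\ref{pJ}) is read off by direct substitution. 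The key structural point is that, because here $k\equiv0$ and $l=n-k\equiv0\pmod N$, both sectors separately support a current (which requires $n\equiv0\pmod N$), and the fractional powers of $W$ together with the exponentials coming from $F$, $\tilde h$ and $p^{J}$ must cancel among themselves; this is what distinguishes the clean $W^{-1}$ of the statement from the fractional-power, exponentially suppressed coefficients of the $k\equiv1$ cases.

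The heart of the matter is the Bethe state $\tilde\Psi$, equivalently the monodromy matrix (\ref{T}) out of which the co-vectors $\tilde\Phi$ in (\ref{Phi0}) are built. Here I would use that the reduced S-matrix (\ref{St}) tends to the identity, $\tilde b(\theta)\to1$, $\tilde c(\theta)\to0$, with the crucial subleading behaviour $\tilde c(\theta)=-i\eta/\theta+O(\theta^{-2})$. Each $\tilde S_{i0}(\hat\theta_{i}+W-\check z)$ coupling a shifted physical line to an unshifted auxiliary line is therefore trivial to leading order, so the monodromy matrix factorizes into a hat-block times a check-block and the state factorizes into $\tilde\Phi(\underline{\hat\theta},\underline{\hat z})\,\tilde\Phi(\underline{\check\theta},\underline{\check z})$. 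I expect this to be the main obstacle, because the leading factorization combined with the symmetric scalar factors would produce a \emph{symmetric} product of two currents, whereas the theorem asserts the antisymmetric combination. The resolution must come from the single $O(1/W)$ insertion: one off-diagonal element $\tilde c\simeq-i\eta/W$ linking the two sectors survives and, acting through the nested $C$-operators, exchanges a pair of auxiliary indices between the hat and check states. Organizing this as an expansion of the monodromy product to first subleading order, and tracking how the $p$-function denominator $\sum_{i}e^{-\theta_{W,i}}\to\sum_{\mathrm{check}}e^{-\check\theta_{i}}+O(e^{-W})$ breaks the symmetry between the two groups, I expect the two ways of attaching the single $\tilde c$-coupling to generate exactly the difference $F_{\underline{\hat\alpha}}^{J^{\beta(\lambda)}}F_{\underline{\check\alpha}}^{J^{\gamma(\sigma)}}-F_{\underline{\hat\alpha}}^{J^{\gamma(\sigma)}}F_{\underline{\check\alpha}}^{J^{\beta(\lambda)}}$ contracted with the charge-conjugation matrix $\mathbf{C}_{\gamma(\lambda)}$ of (\ref{C}). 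The index bookkeeping in this step, especially deciding which sector inherits the extra index, is where the argument will be most delicate.

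Finally I would fix the overall constant $i\eta$ not by chasing every normalization through the nested integrals but by imposing the form factor equation (iii) of Appendix~\ref{sf}, exactly as announced before Section~\ref{s4}: this residue/recursion relation pins down the cluster coefficient once its $W$-dependence and tensor structure are known. The passage to the manifestly Lie-algebraic form $-2\eta W^{-1}f_{abc}F_{\underline{\hat\alpha}}^{J_{b}}F_{\underline{\check\alpha}}^{J_{c}}$ then follows by substituting the definition (\ref{JJ}) of $J_{a}$ in terms of $J^{\alpha(\rho)}$: contracting the $\mathbf{C}$-antisymmetrized product with the generators and using $\operatorname{Tr}(T_{a}T_{b})=\tfrac12\delta_{ab}$ together with the commutator $[T_{b},T_{c}]=if_{bca}T_{a}$ converts the antisymmetrization into the structure constants, reproducing (\ref{FJ}).
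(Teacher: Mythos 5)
Your overall skeleton (the integral representation (\ref{FK}), (\ref{Kc}); the splitting of the integration variables at each level of the nesting; the separate asymptotics of $F$, $\tilde h$, $p^{J}$ and the monodromy matrix; the constant fixed a posteriori by form factor equation (iii); the Lie-algebraic rewriting via (\ref{JJ}) and (\ref{TT})) matches the paper. But the central analytic step is not right as you describe it. You assert that the leading factorization ``would produce a symmetric product of two currents.'' It does not: precisely because the denominator $\sum e^{-\theta_i}$ of $p^{J}$ collapses onto the unshifted group (as you yourself note), the entire $p$-function ends up in the check sector, and the hat sector is left with a Bethe-ansatz integral whose effective $p$-function is independent of $\underline{\hat z}$ --- see (\ref{F0a}). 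The essential point missing from your proposal is that this hat-sector integral vanishes identically (Lemma \ref{l1}: the off-shell Bethe integral with constant $p$-function in the vacuum weight sector is zero). If the leading order were a nonzero symmetric product of currents the result would be $O(1)$, contradicting the $W^{-1}$ behavior you are trying to prove; the vanishing of the leading term is a nontrivial identity that does not follow from the asymptotics of the building blocks, and it is what forces the computation to the next order.

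Second, you attribute the entire antisymmetric combination to ``the two ways of attaching the single $\tilde c$-coupling'' in the monodromy matrix. In the paper's proof only one of the two terms comes from the $O(1/W)$ part of the Bethe state (via (\ref{PHI1a}) and (\ref{Kbb})); the other comes from the $O(1/W)$ expansion of the \emph{scalar} $h$-function, namely the term $\tilde h_1\propto-\sum\hat z_j$ of (\ref{h1}), which converts the otherwise vanishing hat-sector integral into the non-highest-weight current form factor $\left(F^{J}M_1^2\right)_{\underline{\hat\alpha}}$ via Lemma \ref{l3}, eq. (\ref{Kz}). Without this second source --- and without the identities (\ref{Kz}), (\ref{Kbb}) that evaluate the surviving $1/W$ integrals as current form factors with one lowered $SU(N)$ weight --- the bookkeeping of the off-diagonal $\tilde c$ insertion alone will not produce the stated $\mathbf{C}_{\gamma(\lambda)}$-contracted commutator structure. (Note also that the paper carries this $1/W$ step out in full only for $SU(2)$, supplementing general $N$ by the example of Appendix \ref{sex} and by consistency with equation (iii); a complete argument along your lines would need at least that much.)
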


\begin{proof}
We use the short notations of (\ref{FK}) $\dots$ (\ref{PHI}) and investigate
(for $J=J^{1\bar{N}}$)%
\[
F_{\underline{\alpha}}^{J}(\underline{\theta}_{W})=N_{n}^{J}F(\underline
{\theta}_{W})\int d\underline{\underline{z}}\tilde{h}\,(\underline{\theta}%
_{W},\underline{\underline{z}}_{W})p^{J}(\underline{\theta}_{W},\underline
{\underline{z}}_{W})\tilde{\Phi}_{\underline{\alpha}}(\underline{\theta}%
_{W},\underline{\underline{z}}_{W}).
\]
From the asymptotic behavior of $F(\underline{\theta}_{W})\tilde
{h}\,(\underline{\theta}_{W},\underline{\underline{z}}_{W})$ and
$p^{J}(\underline{\theta}_{W},\underline{z}_{W}^{(1)},\underline{z}%
_{W}^{(N-1)})$ in (\ref{Fha}), (\ref{pJa}) and (\ref{expad}) we derive for
$W\rightarrow\infty$ the exponential behavior%
\begin{equation}
F(\underline{\theta}_{W})\tilde{h}\,(\underline{\theta}_{W},\underline
{\underline{z}}_{W})p^{J}(\underline{\theta}_{W},\underline{\underline{z}}%
_{W})\varpropto\left(  e^{-\frac{1}{2}W}\right)  ^{\tilde{k}_{1}^{2}+\tilde
{k}_{N-1}^{2}+\sum_{j=1}^{N-2}\left(  \tilde{k}_{j}-\tilde{k}_{j+1}\right)
^{2}} \label{expJ}%
\end{equation}
where $\tilde{k}_{j}=k_{j}-k\left(  1-j/N\right)  $. For
$k=0\operatorname{mod}N$ the leading behavior $\left(  e^{-\frac{1}{2}%
W}\right)  ^{0}$ is obtained for $\tilde{k}_{j}=0$, therefore%
\[
k_{j}=k\left(  1-j/N\right)  ,~l_{j}=l\left(  1-j/N\right)  -1,~j=1,\dots
,N-1.
\]
For these values of $k_{j}$ and $l_{j}$ we obtain, more precisely, with
(\ref{Fha}), (\ref{pJa}) and (\ref{PHI0}) in leading order the asymptotic
behavior (up to a constant factor)%
\begin{multline}
F(\underline{\theta}_{W})\tilde{h}\,(\underline{\theta}_{W},\underline
{\underline{z}}_{W})p^{J}(\underline{\theta}_{W},\underline{\underline{z}}%
_{W})\tilde{\Phi}_{\underline{\alpha}}(\underline{\theta}_{W},\underline
{\underline{z}}_{W})\label{F0a}\\
\overset{W\rightarrow\infty}{\rightarrow}\left(  F(\underline{\hat{\theta}%
})\tilde{h}\,(\underline{\hat{\theta}},\underline{\underline{\hat{z}}}%
)\tilde{\Phi}_{\underline{\alpha}}(\underline{\hat{\theta}},\underline
{\underline{\hat{z}}})\right)  \left(  F(\underline{\check{\theta}})\tilde
{h}\,(\underline{\check{\theta}},\underline{\underline{\check{z}}}%
)p^{J}(\underline{\check{\theta}},\underline{\underline{\check{z}}}%
)\tilde{\Phi}_{\underline{\alpha}}(\underline{\check{\theta}},\underline
{\underline{\check{z}}})\right)  .
\end{multline}
The $\underline{\underline{\hat{z}}}$-integral vanishes because of Lemma
\ref{l1} and therefore in leading order
\[
F_{\underline{\alpha}}^{J}(\underline{\theta}_{W})\rightarrow0.
\]

\medskip

\noindent\textbf{Order }$\frac{1}{W}$: we have to apply the asymptotic
behavior of the h-function (\ref{ha}), (\ref{h0}), (\ref{h1}) and the Bethe
state (\ref{PHI1}) and (\ref{PHI1a}).

We present a complete proof of this $\frac{1}{W}$-term for $SU(2)$ and for
general $N$ the example of Appendix \ref{sex} for one particle and one
anti-particle. In addition we show consistency of the general clustering
formula with the form factor equation (iii) (see Remark \ref{rem2}).

We have to consider the 2 contributions:

\medskip

\noindent A) From the \textbf{h-function: }Note that because of Lemma \ref{l1}
in $\tilde{h}_{1}(\underline{\theta},\underline{z})$ of (\ref{h1}) only the
$\hat{z}_{j}$-dependent terms contribute. Therefore we get on the rhs of
(\ref{F0a}) from $\tilde{h}_{1}$ for $k_{1}=k\left(  1-1/N\right)
,~l_{1}=l\left(  1-1/N\right)  -1$%
\[
\left(  F(\underline{\hat{\theta}})\tilde{h}(\underline{\hat{\theta}%
},\underline{\hat{z}})\left(  -{\textstyle\sum} \hat{z}_{j}\right)
\tilde{\Phi}_{\underline{\hat{\alpha}}}(\underline{\hat{\theta}}%
,\underline{\hat{z}})\right)  \left(  F(\underline{\check{\theta}})\tilde
{h}(\underline{\check{\theta}},\underline{\check{z}})p^{J}(\underline
{\check{\theta}},\underline{\check{z}})\tilde{\Phi}_{\underline{\check{\alpha
}}}(\underline{\check{\theta}},\underline{\check{z}})\right)
\]
and (up to a constant factor)%
\begin{equation}
F_{\underline{\alpha}}^{J}(\underline{\theta}_{W})_{A}\rightarrow\frac{1}%
{W}\left(  F^{J}(\underline{\hat{\theta}})M_{1}^{2}\right)  _{\underline
{\hat{\alpha}}}\left(  F_{\underline{\check{\alpha}}}^{J}(\underline
{\check{\theta}})\right)  \label{FA}%
\end{equation}
where (\ref{Kz}) and the definition (\ref{K}) for $\mathcal{O}=J$ have been used.

\medskip

\noindent B) From the \textbf{Bethe state: }Again because of Lemma \ref{l1} we
may take in (\ref{PHI1a}) only the first term and write with $\tilde{\Phi
}_{\underline{\hat{\alpha}}}^{D_{j}}(\underline{\hat{\theta}},\underline
{\hat{z}})=\left(  \Omega{C}({\underline{\hat{\theta}}},\hat{z}_{k})\dots
{D}({\underline{\hat{\theta}}},\hat{z}_{j})\dots{C}({\underline{\hat{\theta}}%
},\hat{z}_{1})\right)  _{\underline{\hat{\alpha}}}$%
\[
\tilde{\Phi}_{\underline{\alpha}1}(\underline{\hat{\theta}},\underline
{\check{\theta}},\underline{\hat{z}},\underline{\check{z}})\rightarrow
{\textstyle\sum\limits_{j}} \left(  \tilde{\Phi}_{\underline{\hat{\alpha}}%
}^{D_{j}}(\underline{\hat{\theta}},\underline{\hat{z}})\right)  \left(
\tilde{\Phi}(\underline{\check{\theta}},\underline{\check{z}})M_{1}%
^{2}\right)  _{\underline{\check{\alpha}}}%
\]
and we get%
\[
\sum_{j}\left(  F(\underline{\hat{\theta}})\tilde{h}(\underline{\hat{\theta}%
},\underline{\hat{z}})\tilde{\Phi}^{D_{j}}(\underline{\hat{\theta}}%
,\underline{\hat{z}})\right)  _{\underline{\hat{\alpha}}}\left(
F(\underline{\check{\theta}})\tilde{h}(\underline{\check{\theta}}%
,\underline{\check{z}})p^{J}(\underline{\check{\theta}},\underline{\check{z}%
})\tilde{\Phi}(\underline{\check{\theta}},\underline{\check{z}})M_{1}%
^{2}\right)  _{\underline{\check{\alpha}}}%
\]
and (up to a constant factor)
\begin{equation}
F_{\underline{\alpha}}^{J}(\underline{\theta}_{W})_{B}\rightarrow\frac{1}%
{W}\left(  F^{J}(\underline{\hat{\theta}})\right)  _{\underline{\hat{\alpha}}%
}\left(  F^{J}(\underline{\hat{\theta}})M_{1}^{2}\right)  _{\underline
{\check{\alpha}}}. \label{FB}%
\end{equation}
where (\ref{Kbb}) has been used. The final result is
\[
F_{\underline{\alpha}}^{J}(\underline{\theta}_{W})\rightarrow i\eta\frac{1}%
{W}\left(  \left(  F^{J}(\underline{\hat{\theta}})M_{1}^{2}\right)
_{\underline{\hat{\alpha}}}\left(  F_{\underline{\check{\alpha}}}%
^{J}(\underline{\hat{\theta}})\right)  -\left(  F_{\underline{\hat{\alpha}}%
}^{J}(\underline{\hat{\theta}})\right)  \left(  F^{J}(\underline{\check
{\theta}})M_{1}^{2}\right)  _{\underline{\check{\alpha}}}\right)  \,
\]
which is for $SU(2)$ the component $\left(  \beta,(\sigma)\right)  =(1,1)$ of
(\ref{FJ}) because $F_{\underline{\alpha}}^{J}(\underline{\theta
})=F_{\underline{\alpha}}^{J^{11}}(\underline{\theta})$ and $\left(
F^{J}(\underline{\theta})M_{1}^{2}\right)  _{\underline{\alpha}}%
=F_{\underline{\alpha}}^{J^{12}}(\underline{\theta})+F_{\underline{\alpha}%
}^{J^{21}}(\underline{\theta})$. The other components are obtained by
$SU(2)$-transformations. The constant factor is calculated below and the minus
sign is due to $SU(2)$ invariance. In terms of the components $J_{a}$
(\ref{JJ}) this can be written as in (\ref{FJ}) (see (\ref{Equi})).
\end{proof}

\subparagraph{Calculation of the functions $c_{JJ}^{J}(k,l,W):$}

defined by%
\[
F_{\underline{\alpha}}^{J^{\beta(\sigma)}}(\underline{\theta}_{W}%
)\overset{W\rightarrow\infty}{\rightarrow}c_{JJ}^{J}(k,l,W)\left(
\mathbf{C}_{\gamma(\lambda)}F_{\underline{\hat{\alpha}}}^{J^{\beta(\lambda)}%
}(\underline{\hat{\theta}})F_{\underline{\check{\alpha}}}^{J^{\gamma(\sigma)}%
}(\underline{\check{\theta}})-\mathbf{C}_{\gamma(\lambda)}F_{\underline
{\hat{\alpha}}}^{J^{\gamma(\sigma)}}(\underline{\hat{\theta}})F_{\underline
{\check{\alpha}}}^{J^{\beta(\lambda)}}(\underline{\check{\theta}})\right)
\]
for general $N$. Here and in the following we use the short notation%
\begin{equation}
S_{\alpha\underline{\alpha}}^{\underline{\alpha}^{\prime}\alpha^{\prime}%
}(\theta,\underline{\theta})=S_{\gamma_{n}\alpha_{n}}^{\alpha_{n}^{\prime
}\alpha^{\prime}}(\theta-\theta_{n})\dots S_{\alpha\alpha_{1}}^{\alpha
_{1}^{\prime}\gamma_{2}}(\theta-\theta_{1})\,. \label{sn}%
\end{equation}
We also use the satistics factor $\dot{\sigma}_{\alpha}^{\mathcal{O}}$, which
is related to the \textquotedblleft physical\textquotedblright\ statistics
by\footnote{See eqs. (27) and (28) in \cite{BFK1}.}
\begin{equation}
\dot{\sigma}_{\alpha}^{\mathcal{O}}=\sigma_{\alpha}^{\mathcal{O}%
}(-1)^{(N-1)+(1-1/N)(n-Q^{\mathcal{O}})} \label{st}%
\end{equation}
where $Q^{\mathcal{O}}$ is the charge of $\mathcal{O}$.

We apply the general procedure of Appendix \ref{ac}: Using $a(W)\rightarrow
e^{-i\pi\left(  1-\frac{1}{N}\right)  }$ of (\ref{Sa}) and $\sigma_{1}%
^{J}=1,~Q^{J}=0$ we check (\ref{tocheck}) and (\ref{tocheck1}) for this case%
\begin{gather*}
\dot{\sigma}_{1}^{J}(n)S_{1\underline{\check{\alpha}}}^{\underline
{\check{\alpha}}^{\prime}1}(\theta+W,\underline{\check{\theta}})\overset
{W\rightarrow\infty}{\rightarrow}(-1)^{(N-1)+(1-1/N)(k+l)}(a(W))^{l}%
1_{\underline{\check{\alpha}}}^{\underline{\check{\alpha}}^{\prime}%
}\rightarrow\dot{\sigma}_{1}^{J}(k)1_{\underline{\check{\alpha}}}%
^{\underline{\check{\alpha}}^{\prime}}\,\\
\dot{\sigma}_{1}^{J}(n)S_{\underline{\hat{\alpha}}\bar{1}}^{\bar{1}\hat
{\alpha}^{\prime}}(\underline{\hat{\theta}}+W,\omega)\overset{W\rightarrow
\infty}{\rightarrow}(-1)^{(N-1)+(1-1/N)(k+l)}(a(W))^{k(N-1)}1_{\underline
{\hat{\alpha}}}^{\underline{\hat{\alpha}}^{\prime}}\rightarrow(-1)^{\left(
N-1\right)  k}\dot{\sigma}_{1}^{J}(l)1_{\underline{\hat{\alpha}}}%
^{\underline{\hat{\alpha}}^{\prime}}%
\end{gather*}
Therefore, as proofed in Appendix \ref{ac}, $c_{JJ}^{J}(k,l,W)$ is independent
of $k$ and $l$, because $(-1)^{\left(  N-1\right)  k}=1$ for
$k=0\operatorname{mod}N$. It is convenient to consider the special case
$c_{JJ}^{J}(N,N,W)$:

1) We take the bound states $\bar{1}=(\hat{\alpha}_{2}\dots\hat{\alpha}_{N})$
and $\bar{N}=(\check{\alpha}_{1}\dots\check{\alpha}_{N-1})$ and calculate for
$\underline{\theta}_{W}=(\hat{\theta}+W,\hat{\omega}+W,\check{\omega}%
,\check{\theta})$%
\begin{multline}
\operatorname*{Res}_{\hat{\theta}=i\pi+\hat{\omega}}F_{1\bar{1}\bar{N}%
1}^{J^{1\bar{N}}}(\underline{\theta}_{W})=2i\,\mathbf{C}_{1\bar{1}}%
\,F_{\bar{N}1}^{J^{1\bar{N}}}(\check{\omega},\check{\theta})\left(
1-\dot{\sigma}_{\bar{1}}^{J}(2N)S_{\bar{1},\bar{N}1}^{\bar{N}1,\bar{1}}%
(\hat{\omega}+W;\check{\omega},\check{\theta})\right) \label{3J}\\
\overset{W\rightarrow\infty}{\rightarrow}-2i\,\mathbf{C}_{1\bar{1}}%
\,i\eta\frac{1}{W}F_{\bar{N}1}^{J^{1\bar{N}}}(\check{\omega},\check{\theta}).
\end{multline}
It was used that (\ref{S}), (\ref{Sb}), (\ref{Sa}) including $1/W$ terms and
$a(\theta)a(-\theta)=1$ imply%
\begin{multline*}
\dot{\sigma}_{1}^{J}(2N)S_{\bar{1},\bar{N}1}^{\bar{N}1,\bar{1}}(\hat{\omega
}+W;\check{\omega},\check{\theta})\\
\overset{W\rightarrow\infty}{\rightarrow}(-1)^{(N-1)+(1-1/N)2N}\left(
a(W)\tilde{b}(W)\right)  \left(  (-1)^{N-1}a(-W)\right)  \rightarrow
1+i\eta\frac{1}{W}.
\end{multline*}

2) Taking first $W\rightarrow\infty$ and then the $\operatorname*{Res}$ means%
\begin{multline}
\operatorname*{Res}_{\hat{\theta}=i\pi+\hat{\omega}}\left(  F_{1\bar{1}\bar
{N}1}^{J^{1\bar{N}}}(\underline{\theta}_{W})\overset{W\rightarrow\infty
}{\rightarrow}c_{JJ}^{J}(N,N,W)\mathbf{C}_{\gamma(\lambda)}\left(  F_{1\bar
{1}}^{J^{1(\lambda)}}(\underline{\hat{\theta}})F_{\bar{N}1}^{J^{\gamma\bar{N}%
}}(\underline{\check{\theta}})-F_{1\bar{1}}^{J^{\gamma\bar{N}}}(\underline
{\hat{\theta}})F_{\bar{N}1}^{J^{1(\lambda)}}(\underline{\check{\theta}%
})\right)  \right) \\
=c_{JJ}^{J}(N,N,W)\left(  -2i\right)  \mathbf{C}_{1\bar{1}}F_{\bar{N}%
1}^{J^{1\bar{N}}}(\check{\omega},\check{\theta}). \label{4J}%
\end{multline}
where (\ref{FJ2}) was used. As result we obtain from (\ref{3J}) and
(\ref{4J})
\[
c_{JJ}^{J}(k,l,W)=i\eta\frac{1}{W}.
\]

\begin{remark}
\label{rem2}Note that this also proves consistency of the clustering formula
(\ref{FJ}) for general $N$ with the form factor equation (iii).
\end{remark}

\subparagraph{Equivalence:}

We prove that%
\begin{equation}
F_{\underline{\alpha}}^{J_{a}}(\underline{\theta}_{W})\rightarrow-2\eta
\frac{1}{W}f_{abc}F_{\underline{\hat{\alpha}}}^{J_{b}}(\underline{\hat{\theta
}})F_{\underline{\check{\alpha}}}^{J_{c}}(\underline{\check{\theta}})
\label{FJabc}%
\end{equation}
is equivalent to
\[
F_{\underline{\alpha}}^{J^{\alpha(\rho)}}(\underline{\theta}_{W}%
)\rightarrow\frac{1}{W}i\eta\mathbf{C}_{\gamma(\sigma)}\left(  F_{\underline
{\hat{\alpha}}}^{J^{\alpha(\sigma)}}(\underline{\hat{\theta}})F_{\underline
{\check{\alpha}}}^{J^{\gamma(\rho)}}(\underline{\check{\theta}})-F_{\underline
{\hat{\alpha}}}^{J^{\gamma(\rho)}}(\underline{\hat{\theta}})F_{\underline
{\check{\alpha}}}^{J^{\alpha(\sigma)}}(\underline{\check{\theta}})\right)
\,.
\]
We have the general relations \cite{Ha,Wiki}%
\begin{equation}
\left[  T_{a},T_{b}\right]  _{\alpha}^{\beta}=if_{abc}\left(  T_{c}\right)
_{\alpha}^{\beta}~,~~\left(  T_{b}\right)  _{\alpha}^{\beta}\left(
T_{b}\right)  _{\gamma}^{\delta}=\frac{1}{2}\left(  \delta_{\alpha}^{\delta
}\delta_{\gamma}^{\beta}-\frac{1}{N}\delta_{\alpha}^{\beta}\delta_{\gamma
}^{\delta}\right)  \,. \label{TT}%
\end{equation}
By (\ref{JJ}) and (\ref{FJabc}) we obtain for $W\rightarrow\infty$%
\begin{align}
F_{\underline{\alpha}}^{J_{a}}(\underline{\theta}_{W})  &  =\mathbf{C}%
_{\gamma(\rho)}\left(  T_{a}\right)  _{\alpha}^{\gamma}F_{\underline{\alpha}%
}^{J^{\alpha(\rho)}}(\underline{\theta}_{W})\label{Equi}\\
&  \rightarrow-2\eta\frac{1}{W}f_{abc}F_{\underline{\hat{\alpha}}}^{J_{b}%
}(\underline{\hat{\theta}})F_{\underline{\check{\alpha}}}^{J_{c}}%
(\underline{\check{\theta}})\nonumber\\
&  =i\eta\frac{1}{W}\mathbf{C}_{\gamma(\rho)}\left(  T_{a}\right)  _{\alpha
}^{\gamma}\left(  \mathbf{C}_{\gamma^{\prime}(\rho)^{\prime}}\left(
F_{\underline{\hat{\alpha}}}^{J^{\alpha(\rho)^{\prime}}}\right)  \left(
F_{\underline{\check{\alpha}}}^{J^{\gamma^{\prime}(\rho)}}\right)
-\mathbf{C}_{\gamma^{\prime}(\rho)^{\prime}}\left(  F_{\underline{\hat{\alpha
}}}^{J^{\gamma^{\prime}(\rho)}}\right)  \left(  F_{\underline{\check{\alpha}}%
}^{J^{\alpha(\rho)^{\prime}}}\right)  \right) \nonumber
\end{align}
where the relations (\ref{TT}) have been used. This proves the equivalency.

\subsection{Theorem \ref{t1a}}

\begin{theorem}
\label{t1a}The form factor of the field $\phi(x)$ for particle number
$n=0\operatorname{mod}N$ and $k=0\operatorname{mod}N$ shows the cluster
behavior
\[
F_{\underline{\alpha}}^{\phi}(\underline{\theta}_{W})\overset{W\rightarrow
\infty}{\rightarrow}F_{\underline{\hat{\alpha}}}^{\phi}(\underline{\hat
{\theta}})F_{\underline{\check{\alpha}}}^{\phi}(\underline{\check{\theta}%
})\,.
\]

\end{theorem}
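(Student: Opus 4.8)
The plan is to run the asymptotic analysis of Theorem \ref{t1} with $\mathcal{O}=\phi$, and to show that the simplifications special to $\phi$ turn the clustering into a clean product with unit coefficient. Starting from (\ref{FK}) and (\ref{Kc}) with $\underline{\theta}=\underline{\theta}_{W}$, I split each level of the nested integration into $k_{j}$ shifted and $l_{j}=n_{j}-k_{j}$ unshifted variables, as in Section \ref{s4}, and examine the $W\rightarrow\infty$ behavior of $F(\underline{\theta}_{W})\,\tilde{h}(\underline{\theta}_{W},\underline{\underline{z}}_{W})\,p^{\phi}(\underline{\theta}_{W},\underline{\underline{z}}_{W})$. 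Since $F(\underline{\theta})$ and $\tilde{h}$ depend only on the S-matrix, their exponential behavior is the same operator-independent Gaussian that enters (\ref{expJ}); only the linear-in-$\tilde{k}_{j}$ shift produced by the p-function differs from the current case, where $\tilde{k}_{j}=k_{j}-k(1-j/N)$.

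First I would locate the leading configuration. Inserting the shift into the p-function (\ref{pphi}) gives the $W$-factor $e^{-(1-1/N)kW}$ from $\prod_{i}e^{-(1-1/N)\theta_{i}}$ and $e^{k_{1}W}$ from $\prod_{i}e^{z_{i}^{(1)}}$, i.e. a net $e^{\tilde{k}_{1}W}$, which vanishes at the proportional split $k_{j}=k(1-j/N)$, $l_{j}=l(1-j/N)$ (that is $\tilde{k}_{j}=0$). Combining with the Gaussian factor from $F\tilde{h}$, the leading term is attained at this split and carries no power of $e^{-W/2}$; what has to be checked is only that the resulting integer minimization of the exponent is solved uniquely by $\tilde{k}_{j}=0$ with value zero, so that $c^{\phi}_{\phi\phi}(W)=O(1)$. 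At $\tilde{k}_{j}=0$ one moreover reads off, using $e^{i\pi n_{1}/N}=e^{i\pi k_{1}/N}e^{i\pi l_{1}/N}$ and the exact cancellation just noted, the factorization
\[
p^{\phi}(\underline{\theta}_{W},\underline{\underline{z}}_{W})\overset{W\rightarrow\infty}{\rightarrow}p^{\phi}(\underline{\hat{\theta}},\underline{\underline{\hat{z}}})\,p^{\phi}(\underline{\check{\theta}},\underline{\underline{\check{z}}})\,.
\]

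With the p-function factorized, the whole integrand at the leading split splits as
\[
\Big(F(\underline{\hat{\theta}})\tilde{h}(\underline{\hat{\theta}},\underline{\underline{\hat{z}}})p^{\phi}(\underline{\hat{\theta}},\underline{\underline{\hat{z}}})\tilde{\Phi}_{\underline{\hat{\alpha}}}(\underline{\hat{\theta}},\underline{\underline{\hat{z}}})\Big)\Big(F(\underline{\check{\theta}})\tilde{h}(\underline{\check{\theta}},\underline{\underline{\check{z}}})p^{\phi}(\underline{\check{\theta}},\underline{\underline{\check{z}}})\tilde{\Phi}_{\underline{\check{\alpha}}}(\underline{\check{\theta}},\underline{\underline{\check{z}}})\Big)\,,
\]
and, after integration, each bracket is exactly $K^{\phi}_{\underline{\hat{\alpha}}}(\underline{\hat{\theta}})$, respectively $K^{\phi}_{\underline{\check{\alpha}}}(\underline{\check{\theta}})$. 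The decisive point, and the contrast with Theorem \ref{t1}, is that here the leading term does \emph{not} vanish: for the current the integration numbers (\ref{nJ}) are $n_{j}^{J}=n(1-j/N)-1$, so the proportional split leaves the hat group with one superfluous integration per level and Lemma \ref{l1} annihilates the leading contribution, forcing the $1/W$ correction; for $\phi$ the numbers (\ref{nphi}) are $n_{j}=n(1-j/N)$, which split proportionally into exactly the counts required of $\phi$-form factors for $k$ and for $l$ particles, so the leading term survives and no power of $W$ is generated.

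It remains to fix the overall constant, which I expect to be $1$. I would determine it exactly as the functions $c^{J}_{JJ}$ are computed after Theorem \ref{t1}, i.e. by imposing form factor equation (iii) in a configuration where both sides are explicitly known---taking a bound-state residue to reduce to one- and two-particle matrix elements---with the normalization $\langle 0|\phi(x)|0\rangle=1$ removing any residual freedom. The main obstacle is this last bookkeeping: tracking the nested Bethe state $\tilde{\Phi}$ through the split and verifying that it factorizes into the hat and check $\phi$-Bethe states with unit coefficient at every level of the nesting, and that the phases $e^{i\pi n_{1}/N}$ together with the statistics factor $\sigma^{\phi}=e^{-i\eta}$ conspire to give precisely $c^{\phi}_{\phi\phi}(W)\rightarrow 1$.
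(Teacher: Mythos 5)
Your proposal follows essentially the same route as the paper's proof: the same split of the nested integrations, the same quadratic exponent minimized at $\tilde{k}_{j}=0$ giving the proportional split $k_{j}=k(1-j/N)$, $l_{j}=l(1-j/N)$, the same factorization of $p^{\phi}$ and of the integrand into hat and check $K^{\phi}$-functions, and the same (correct) explanation of why the leading term survives here while Lemma \ref{l1} kills it for the current. The only divergence is in fixing the constant: the paper uses form factor equation (iii) (via Appendix \ref{ac}) only to show $c^{\phi}_{\phi\phi}$ is independent of $k$ and $l$, and then gets $c^{\phi}_{\phi\phi}=1$ immediately from equation (v) with $s^{\phi}=0$ applied to the degenerate case $l=0$ together with $F^{\phi}_{\emptyset}=\langle 0|\phi(x)|0\rangle=1$, which is quicker than the bound-state residue computation you propose (though that would also work).
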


\begin{remark}
Note that this is the typical behavior of an exponential of a bosonic field
(see \cite{BK5}).
\end{remark}

\begin{proof}
We investigate%
\[
F_{\underline{\alpha}}^{\phi}(\underline{\theta}_{W})=N_{n}^{\phi}%
F(\underline{\theta}_{W})\int d\underline{\underline{z}}\tilde{h}%
\,(\underline{\theta}_{W},\underline{\underline{z}}_{W})p^{\phi}%
(\underline{\theta}_{W},\underline{z}_{W})\tilde{\Phi}_{\underline{\alpha}%
}(\underline{\theta}_{W},\underline{\underline{z}}_{W})
\]
From the asymptotic behavior of $F(\underline{\theta}_{W})\tilde
{h}\,(\underline{\theta}_{W},\underline{\underline{z}}_{W})$ and $p^{\phi
}(\underline{\theta}_{W},\underline{z}_{W})$ in (\ref{Fha}), (\ref{pphia}) and
(\ref{expsc}) we derive for $W\rightarrow\infty$ the exponential behavior%
\begin{equation}
F(\underline{\theta}_{W})\tilde{h}\,(\underline{\theta}_{W},\underline
{\underline{z}}_{W})p^{\phi}(\underline{\theta}_{W},\underline{\underline{z}%
}_{W})\varpropto\left(  e^{-\frac{1}{2}W}\right)  ^{\tilde{k}_{1}^{2}%
+\tilde{k}_{N-1}^{2}+\sum_{j=1}^{N-2}\left(  \tilde{k}_{j}-\tilde{k}%
_{j+1}\right)  ^{2}-\tilde{k}_{1}} \label{exphi}%
\end{equation}
where $\tilde{k}_{j}=k_{j}-k\left(  1-j/N\right)  $. For
$k=0\operatorname{mod}N$ the leading behavior $\left(  e^{-\frac{1}{2}%
W}\right)  ^{0}$ is obtained for $\tilde{k}_{j}=0\Rightarrow$%
\[
k_{j}=k\left(  1-j/N\right)  ,~l_{j}=l\left(  1-j/N\right)  ,~j=1,\dots,N-1
\]
For these values of $k_{j}$ and $l_{j}$ we obtain, more precisely, with
(\ref{Fha}), (\ref{pphia}) and (\ref{PHI0}) in leading order the asymptotic
behavior (up to a constant factor)%
\begin{multline*}
F(\underline{\theta}_{W})p^{\phi}(\underline{\theta}_{W},\underline
{\underline{z}}_{W})\tilde{\Phi}_{\underline{\alpha}}(\underline{\theta}%
_{W},\underline{\underline{z}}_{W})\\
\rightarrow\left(  F(\underline{\hat{\theta}})\tilde{h}(\underline{\hat
{\theta}},\underline{\hat{z}})p^{\phi}(\underline{\hat{\theta}},\underline
{\hat{z}})\tilde{\Phi}_{\underline{\hat{\alpha}}}(\underline{\hat{\theta}%
},\underline{\underline{\hat{z}}})\right)  \left(  F(\underline{\check{\theta
}})\tilde{h}(\underline{\check{\theta}},\underline{\check{z}})p^{\phi
}(\underline{\check{\theta}},\underline{\check{z}})\tilde{\Phi}_{\underline
{\check{\alpha}}}(\underline{\check{\theta}},\underline{\underline{\check{z}}%
})\right)
\end{multline*}
such that%
\[
F_{\underline{\alpha}}^{\phi}(\underline{\theta}_{W})\rightarrow
F_{\underline{\hat{\alpha}}}^{\phi}(\underline{\hat{\theta}})F_{\underline
{\check{\alpha}}}^{\phi}(\underline{\check{\theta}})
\]
The constant factor is again calculated using the form factor equation (iii).
\end{proof}

\subparagraph{Calculation of the function $c_{\phi\phi}^{\phi}(k,l,W):$}

defined by%
\[
F_{\underline{\alpha}}^{\phi}(\underline{\theta}_{W})\rightarrow c_{\phi\phi
}^{\phi}(k,l,W)F_{\underline{\hat{\alpha}}}^{\phi}(\underline{\hat{\theta}%
})F_{\underline{\check{\alpha}}}^{\phi}(\underline{\check{\theta}})\,.
\]
We apply the general procedure of Appendix \ref{ac}: Using $a(W)\rightarrow
e^{-i\pi\left(  1-\frac{1}{N}\right)  }$ of (\ref{Sa}) and $\sigma^{\phi
}=e^{i\eta},~Q^{\phi}=0$ we check (\ref{tocheck}) and (\ref{tocheck1}) for
this case%
\begin{align*}
\dot{\sigma}_{1}^{\phi}(n)S_{1\underline{\check{\alpha}}}^{\underline
{\check{\alpha}}^{\prime}1}(\theta+W,\underline{\check{\theta}})  &
=e^{i\eta}(-1)^{(N-1)+(1-1/N)(k+l)}(a(W))^{l}1_{\underline{\check{\alpha}}%
}^{\underline{\check{\alpha}}^{\prime}}\rightarrow\dot{\sigma}_{1}^{\phi
}(k)1_{\underline{\check{\alpha}}}^{\underline{\check{\alpha}}^{\prime}}\,\\
\dot{\sigma}_{1}^{\phi}(n)S_{\underline{\hat{\alpha}}\bar{1}}^{\bar{1}%
\hat{\alpha}^{\prime}}(\underline{\hat{\theta}}+W,\omega)  &  =e^{i\eta
}(-1)^{(N-1)+(1-1/N)(k+l)}(a(W))^{(N-1)k}1_{\underline{\hat{\alpha}}%
}^{\underline{\hat{\alpha}}^{\prime}}\rightarrow\dot{\sigma}_{1}^{\phi
}(l)1_{\underline{\hat{\alpha}}}^{\underline{\hat{\alpha}}^{\prime}}%
\end{align*}
Therefore, as proofed in Appendix \ref{ac}, $c_{\phi\phi}^{\phi}(k,l,W)$ is
independent of $k$ and $l$, because $(-1)^{\left(  N-1\right)  k}=1$ for
$k=0\operatorname{mod}N$. The special case $c_{\phi\phi}^{\phi}(k,0,W)$ is
obtained by the form factor equation (v) with $s^{\phi}=0$ and (\ref{Fphi})
for $\underline{\check{\alpha}}=\emptyset$%
\begin{align*}
F_{\underline{\hat{\alpha}}\emptyset}^{\phi}(\underline{\theta}_{W})  &
\rightarrow e^{Ws^{\phi}}F_{\underline{\hat{\alpha}}\emptyset}^{\phi
}(\underline{\theta})=F_{\underline{\hat{\alpha}}}^{\phi}(\underline{\theta
})\\
F_{\underline{\hat{\alpha}}\emptyset}^{\phi}(\underline{\theta}_{W})  &
\rightarrow c_{\phi\phi}^{\phi}(k,0,W)F_{\underline{\hat{\alpha}}}^{\phi
}(\underline{\theta})F_{\emptyset}^{\phi}%
\end{align*}
which implies
\[
c_{\phi\phi}^{\phi}(k,l,W)=1\,
\]
if we normalize the field $\phi(x)$ by $F_{\emptyset}^{\phi}=\left\langle
0|\phi(x)|0\right\rangle =1$.

\subsection{Theorem \ref{t1b}}

\begin{theorem}
\label{t1b}The form factor of the energy momentum potential for particle
number $n=0\operatorname{mod}N$ and $k=0\operatorname{mod}N$ satisfies%
\begin{equation}
F_{\underline{\alpha}}^{T}(\underline{\theta}_{W})=O(W^{-2})~\text{for
}W\rightarrow\infty\,. \label{FTN0}%
\end{equation}

\end{theorem}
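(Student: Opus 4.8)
The plan is to follow the template of Theorems~\ref{t1} and~\ref{t1a}. I write $F_{\underline{\alpha}}^{T}(\underline{\theta}_{W})$ as the nested Bethe-ansatz integral (\ref{FK}), (\ref{Kc}) with the energy-momentum $p$-function split as $p^{T}=p^{T_{+}}+p^{T_{-}}$, and analyse the two pieces separately. The essential point is that each $p^{T_{\pm}}$ is invariant under a common shift of all rapidities (the $e^{\pm W}$ produced by numerator and denominator cancel), so it contributes no linear term to the exponent; the exponential $W$-behaviour of $F(\underline{\theta}_{W})\tilde{h}(\underline{\theta}_{W},\underline{\underline{z}}_{W})p^{T_{\pm}}$ is therefore controlled by the purely quadratic exponent $\tilde{k}_{1}^{2}+\tilde{k}_{N-1}^{2}+\sum_{j=1}^{N-2}(\tilde{k}_{j}-\tilde{k}_{j+1})^{2}$ of (\ref{expJ}), with $\tilde{k}_{j}=k_{j}-k(1-j/N)$. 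For $k=0\operatorname{mod}N$ its minimum is the balanced configuration $\tilde{k}_{j}=0$, i.e.\ $k_{j}=k(1-j/N),\ l_{j}=l(1-j/N)$, consistent with (\ref{nT}). At this configuration the denominators localise, $p^{T_{+}}\to(\sum_{i}e^{\hat{z}_{i}^{(1)}})/(\sum_{i}e^{\hat{\theta}_{i}})$ on the shifted block and $p^{T_{-}}\to-(\sum_{i}e^{-\check{z}_{i}^{(1)}})/(\sum_{i}e^{-\check{\theta}_{i}})$ on the non-shifted block, and the leading integrand factorises into a shifted factor times a non-shifted factor.

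The key observation is a weight constraint. By the weight formula (\ref{w}) the balanced numbers $k_{j}=k(1-j/N)$ and $l_{j}=l(1-j/N)$ put \emph{each} factor in the vacuum sector, with weight proportional to $(1,\dots,1)$. Exactly as in the proof of Theorem~\ref{t1}, a vacuum-sector factor carrying the trivial $p$-function vanishes at leading order by Lemma~\ref{l1}, whereas the first subleading corrections -- the $\hat{z}$- (resp.\ $\check{z}$-) dependent part of $\tilde{h}_{1}$ in (\ref{h1}) together with the $D$-insertion in the Bethe state (\ref{PHI1a}) -- promote a vacuum-sector factor to an \emph{adjoint} current form factor $F^{J}$ at the price of exactly one power of $1/W$.

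The power is then fixed by $SU(N)$-covariance. The left-hand side $F_{\underline{\alpha}}^{T}(\underline{\theta}_{W})$ is an $SU(N)$ singlet (weight $(0,\dots,0)$), so a factorised term survives only if the shifted and non-shifted weights combine to the singlet. At order $W^{0}$ both factors are vacuum-sector and the block carrying the trivial $p$-function vanishes by Lemma~\ref{l1}. At order $W^{-1}$ exactly one block has been promoted to an adjoint current while the other is still vacuum-sector: either its partner is the vanishing trivial block, so the product is zero, or the product carries net adjoint weight and is projected out by covariance; in either case the $W^{-1}$ term vanishes, and this disposes symmetrically of both the $p^{T_{+}}$ and the $p^{T_{-}}$ contribution. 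Only at order $W^{-2}$ are \emph{both} blocks promoted to adjoint currents, and since the tensor product of two adjoints contains the singlet this is the first admissible order. Hence $F_{\underline{\alpha}}^{T}(\underline{\theta}_{W})=O(W^{-2})$.

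The step I expect to be the main obstacle is the \emph{simultaneous} subleading expansion of $\tilde{h}$ and $\tilde{\Phi}$ on both blocks, for both $p^{T_{+}}$ and $p^{T_{-}}$: one must check that each vacuum-to-current promotion costs precisely one factor $1/W$ (no accidental enhancement from the extra integration carried by the $T$-count relative to the $J$-count), and that the putative $W^{-1}$ pieces genuinely sit in a non-singlet representation so that the covariance argument bites. Once the $O(W^{-2})$ order is secured, the explicit coefficient of (\ref{FT}) -- in particular that the two surviving factors are the \emph{same} current $F_{\underline{\hat{\alpha}}}^{J_{a}}$, $F_{\underline{\check{\alpha}}}^{J_{a}}$ contracted over the adjoint index $a$, with prefactor $2\eta$ -- is fixed, as in the preceding theorems, by the form factor equation (iii), and need not enter the proof of the bound itself.
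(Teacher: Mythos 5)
Your proposal is correct and follows essentially the same route as the paper: the purely quadratic exponent forces the balanced counting $k_j=k(1-j/N)$, $l_j=l(1-j/N)$, the leading order vanishes because one block always carries the trivial $p$-function (Lemma \ref{l1}), and the order-$W^{-1}$ terms vanish because every possible promotion either leaves a trivial-$p$ block intact or produces a net adjoint that cannot match the iso-scalar $T$ --- which is precisely the paper's statement that $\int\tilde h\,p^{T_{\pm}}(\tilde\Psi M_1^2)=0$ because $T(x)$ is an iso-scalar operator. The only cosmetic difference is that the paper additionally notes that for the $T$-counting the $z$-dependent part of $\tilde h_1$ cancels identically, a case you absorb into the same covariance argument.
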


More precisely

\begin{conjecture}
\label{cj1}The cluster behavior of form factor of $T$ for
$k=0\operatorname{mod}N$ reads as%
\begin{equation}
F_{\underline{\alpha}}^{T}(\underline{\theta}_{W})\overset{W\rightarrow\infty
}{\rightarrow}2\eta W^{-2}F_{\underline{\hat{\alpha}}}^{J_{a}}(\underline
{\hat{\theta}})F_{\underline{\check{\alpha}}}^{J_{a}}(\underline{\check
{\theta}})=\eta W^{-2}\mathbf{C}_{\alpha\bar{\delta}}\mathbf{C}_{\beta
\bar{\gamma}}F_{\underline{\hat{\alpha}}}^{J^{\alpha\bar{\gamma}}}%
(\underline{\hat{\theta}})F_{\underline{\check{\alpha}}}^{J^{\beta\bar{\delta
}}}(\underline{\check{\theta}}) \label{FTN}%
\end{equation}

\end{conjecture}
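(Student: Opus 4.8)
The plan is to follow Theorems~\ref{t1} and~\ref{t1a} line by line, starting from the off-shell Bethe ansatz representation~(\ref{Kc}) of $F_{\underline{\alpha}}^{T}(\underline{\theta}_{W})$ and reading off the large-$W$ behaviour factor by factor. Since the potential carries the sum $p^{T}=p^{T_{+}}+p^{T_{-}}$, I would split $F^{T}=F^{T_{+}}+F^{T_{-}}$ and treat the two ratios separately. Because $w^{T}=0$ and the integration numbers~(\ref{nT}) are $n_{j}=n(1-j/N)$, identical to the $\phi$-case, the $F\tilde{h}$ power counting of Theorem~\ref{t1a} applies unchanged; in addition each ratio in $p^{T_{\pm}}$ is $O(1)$ at the balanced configuration, since numerator and denominator then scale with the same power of $e^{W}$. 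Hence $F(\underline{\theta}_{W})\tilde{h}\,p^{T_{\pm}}$ scales as $(e^{-\frac12 W})^{Q(\tilde{k})}$ with $Q(\tilde{k})=\tilde{k}_{1}^{2}+\tilde{k}_{N-1}^{2}+\sum_{j=1}^{N-2}(\tilde{k}_{j}-\tilde{k}_{j+1})^{2}$ and $\tilde{k}_{j}=k_{j}-k(1-j/N)$, so the leading configuration is again the unique minimum $\tilde{k}_{j}=0$, i.e. $k_{j}=k(1-j/N)$ and $l_{j}=l(1-j/N)$.

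At leading order I would note that at $\tilde{k}_{j}=0$ each ratio collapses onto one side: the numerator of $p^{T_{+}}$ is dominated by the shifted $z^{(1)}$ and its denominator by the shifted $\theta$, so $p^{T_{+}}\to p^{T_{+}}(\underline{\hat{\theta}},\underline{\hat{z}})$, while oppositely $p^{T_{-}}\to p^{T_{-}}(\underline{\check{\theta}},\underline{\check{z}})$. The integrand then factorizes as in~(\ref{F0a}); in $F^{T_{+}}$ the checked factor is the bare integral $\int d\underline{\underline{\check{z}}}\,F(\underline{\check{\theta}})\tilde{h}\,\tilde{\Phi}$, which vanishes by Lemma~\ref{l1}, and symmetrically the hatted factor of $F^{T_{-}}$ is bare and vanishes. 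Thus $F_{\underline{\alpha}}^{T}(\underline{\theta}_{W})\to 0$ at leading order and the result is at most $O(1/W)$.

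The crux is to show that the $O(1/W)$ term also vanishes. Expanding the h-function and the Bethe state to subleading order as in Theorem~\ref{t1} --- through~(\ref{h1}),~(\ref{PHI1}) and~(\ref{PHI1a}) --- the $1/W$ correction of each bare factor turns it into a current form factor (the insertion $-\sum\hat{z}_{j}$, respectively the $D_{j}$-term, converting $\int F\tilde{h}\,\tilde{\Phi}$ into $F^{J}$ via~(\ref{Kz})). The decisive structural input is that $p^{T}$ is odd under the simultaneous reflection $(\underline{\theta},\underline{\underline{z}})\to(-\underline{\theta},-\underline{\underline{z}})$, which interchanges the $T_{+}$- and $T_{-}$-contributions with a relative minus sign, whereas the current clustering of Theorem~\ref{t1} is antisymmetric under the exchange $\underline{\hat{\theta}}\leftrightarrow\underline{\check{\theta}}$ through the factor $f_{abc}$. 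The two $O(1/W)$ pieces are therefore linked by an antisymmetry under which the scalar, weight-zero object $F^{T}$ must be even, forcing their sum to cancel and leaving the symmetric $J_{a}\otimes J_{a}$ term at $O(1/W^{2})$ anticipated in Conjecture~\ref{cj1}.

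I expect this $O(1/W)$ cancellation to be the main obstacle: turning the reflection argument into a proof requires the constant prefactors of both the $p^{T_{\pm}}$-induced current factors and the bare-integral corrections, the same bookkeeping fixed in Theorem~\ref{t1} by form factor equation~(iii). I would therefore confirm the cancellation as in the $c_{JJ}^{J}$ computation: take the residue at $\hat{\theta}=i\pi+\hat{\omega}$ of a low-particle matrix element, using~(\ref{FT2}), and verify that the $O(1/W)$ residue vanishes while the $O(1/W^{2})$ residue survives, thereby establishing~(\ref{FTN0}) independently of the full coefficient of Conjecture~\ref{cj1}.
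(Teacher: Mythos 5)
First, be aware that the paper itself does not prove Conjecture \ref{cj1}: it proves only the weaker Theorem \ref{t1b}, namely $F_{\underline{\alpha}}^{T}(\underline{\theta}_{W})=O(W^{-2})$, and then fixes the coefficient $\eta W^{-2}$ by a consistency check with form factor equation (iii); the authors state explicitly that the large-$W$ expansion of the integrand may not be interchanged with the contour integration beyond the $1/W$ term, which is precisely why the $W^{-2}$ coefficient remains conjectural. Your leading-order analysis (power counting with $\tilde{k}_{j}=0$, splitting $p^{T}=p^{T_{+}}+p^{T_{-}}$, and killing each piece with Lemma \ref{l1}) coincides with the paper's, and your closing suggestion --- taking the residue at $\hat{\theta}=i\pi+\hat{\omega}$ of a low-particle matrix element using (\ref{FT2}) and (\ref{FJ2}) --- is exactly the paper's computation of $c_{JJ}^{T}(k,l,W)=\eta W^{-2}$.

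The genuine gap is your treatment of the order-$1/W$ term. You assert that the $1/W$ corrections from the h-function and the Bethe state convert the bare factors into current form factors, and that the resulting $T_{+}$- and $T_{-}$-pieces then cancel because $p^{T}$ is odd under $(\underline{\theta},\underline{\underline{z}})\rightarrow(-\underline{\theta},-\underline{\underline{z}})$. That reflection is a symmetry of $p^{T}$ alone, not of the integrand: $F(\underline{\theta})$, $\tilde{\phi}(\theta-z)$, $\tau(z)$ and the contour $\mathcal{C}_{\underline{\theta}}$ are not invariant under it, so the claimed pairing of the two contributions with a relative minus sign does not follow. The paper's mechanism is different and sharper: for the iso-scalar integration numbers $n_{j}=n(1-j/N)$ of (\ref{nT}) (as opposed to $n_{j}=n(1-j/N)-1$ for $J$) the $\hat{z}$-dependent insertion in $\tilde{h}_{1}$ of (\ref{h1}) is absent, so the bare hatted (resp.\ checked) integral still vanishes by Lemma \ref{l1}; and the Bethe-state correction (\ref{PHI1a}) produces only terms of the form $\int\tilde{h}\,(\tilde{\Psi}M_{1}^{2})$ and $\int\tilde{h}\,p^{T_{\pm}}(\tilde{\Psi}M_{1}^{2})$, which vanish by Lemma \ref{l1} and by the iso-scalar (weight-zero) nature of $T$, respectively. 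Each $1/W$ contribution is therefore zero separately; no cancellation between $T_{+}$ and $T_{-}$ is needed, and none is established by your symmetry argument. Finally, your plan to ``read off'' the $W^{-2}$ term by continuing the expansion runs into the interchange-of-limits obstruction the authors identify, so the residue check is not merely a confirmation but the only available handle on the coefficient in (\ref{FTN}).
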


We have no general proof of this conjecture. The problem is that the expansion
for large $W$ of the integrand in the contour integral representation in
(\ref{K}) must not be interchanged with the integration, this is only allowed
up to the $1/W$-term.

However, we have checked consistency with the form factor equation (iii),
which also yields the function $c_{JJ}^{T}(k,l,W)=\eta W^{-2}$.

\begin{proof}
To prove (\ref{FTN0}) we investigate for $n=k+l,~m=k_{1}+l_{1},~m=n/2$%
\[
F_{\underline{\alpha}}^{T}(\underline{\theta}_{W})=N_{n}^{T}\int
d\underline{z}F(\underline{\theta}_{W})\tilde{h}(\underline{\theta}%
_{W},\underline{z}_{W})p^{T}(\underline{\theta}_{W},\underline{z}_{W}%
)\Psi_{\underline{\alpha}}(\underline{\theta}_{W},\underline{z}_{W})\,.
\]
From the asymptotic behavior of $F(\underline{\theta}_{W})\tilde
{h}\,(\underline{\theta}_{W},\underline{\underline{z}}_{W})$ and
$p^{T}(\underline{\theta}_{W},\underline{z}_{W})$ in (\ref{Fha}), (\ref{pTa})
and (\ref{expsc}) we derive for $W\rightarrow\infty$ the exponential behavior%
\begin{equation}
F(\underline{\theta}_{W})\tilde{h}\,(\underline{\theta}_{W},\underline
{\underline{z}}_{W})p^{T}(\underline{\theta}_{W},\underline{\underline{z}}%
_{W})\varpropto\left(  e^{-\frac{1}{2}W}\right)  ^{\tilde{k}_{1}^{2}+\tilde
{k}_{N-1}^{2}+\sum_{j=1}^{N-2}\left(  \tilde{k}_{j}-\tilde{k}_{j+1}\right)
^{2}} \label{expT}%
\end{equation}
where $\tilde{k}_{j}=k_{j}-k\left(  1-j/N\right)  $. For
$k=0\operatorname{mod}N$ the leading behavior $\left(  e^{-\frac{1}{2}%
W}\right)  ^{0}$ is obtained for $\tilde{k}_{j}=0$. Therefore by (\ref{nT})
\[
k_{j}=k\left(  1-j/N\right)  ,~l_{j}=l\left(  1-j/N\right)  ,~j=1,\dots,N-1.
\]
For these values of $k_{j}$ and $l_{j}$ we obtain, more precisely, with
(\ref{Fha}), (\ref{pTa}) and (\ref{PHI0}) in leading order the asymptotic
behavior (up to a constant factor)%
\begin{multline*}
F(\underline{\theta}_{W})\tilde{h}\,(\underline{\theta}_{W},\underline
{\underline{z}}_{W})p^{T}(\underline{\theta}_{W},\underline{\underline{z}}%
_{W})\tilde{\Phi}_{\underline{\alpha}}(\underline{\theta}_{W},\underline
{\underline{z}}_{W})\\
\overset{W\rightarrow\infty}{\rightarrow}\left(  F(\underline{\hat{\theta}%
})\tilde{h}\,(\underline{\hat{\theta}},\underline{\underline{\hat{z}}}%
)\tilde{\Phi}_{\underline{\alpha}}(\underline{\hat{\theta}},\underline
{\underline{\hat{z}}})\right)  \left(  F(\underline{\check{\theta}})\tilde
{h}\,(\underline{\check{\theta}},\underline{\underline{\check{z}}})\tilde
{\Psi}_{\underline{\alpha}}(\underline{\check{\theta}},\underline
{\underline{\check{z}}})\right)  \left(  p^{T_{+}}(\underline{\hat{\theta}%
},\underline{\hat{z}})+p^{T-}(\underline{\check{\theta}},\underline{\check{z}%
})\right)
\end{multline*}
However, this means that in leading order
\[
F_{\underline{\alpha}}^{T}(\underline{\theta}_{W})\rightarrow0
\]
because of Lemma \ref{l1}.\medskip

\noindent\textbf{Order }$\frac{1}{W}$: similarly, as in the proof of Theorem
\ref{t1} we discuss the contribution from $\tilde{h}_{1}(\underline{\theta
},\underline{z})$ of (\ref{h1}), however, for $k_{1}=k\left(  1-1/N\right)
,~l_{1}=l\left(  1-1/N\right)  $ there are no the $\hat{z}_{j}$-dependent
terms and and therefore this contribution vanishes by Lemma \ref{l1}.

From the Bethe state $\tilde{\Psi}_{\underline{\alpha}1}(\underline
{\hat{\theta}},\underline{\check{\theta}},\underline{\hat{z}},\underline
{\check{z}})$ of (\ref{PHI1a}) and (\ref{pTa}) we obtain contributions of the
type%
\[
\int d\underline{z}\,\tilde{h}(\underline{\theta},\underline{z})\left(
\tilde{\Psi}(\underline{\theta},\underline{z})M_{1}^{2}\right)  _{\underline
{\alpha}}~\text{and }\int d\underline{z}\,\tilde{h}(\underline{\theta
},\underline{z})p^{T_{\pm}}(\underline{\theta},\underline{z})\left(
\tilde{\Psi}(\underline{\theta},\underline{z})M_{1}^{2}\right)  _{\underline
{\alpha}}%
\]
where both are $=0$, the first one because of Lemma \ref{l1} and the second
one because $T(x)$ is an iso-scalar operator. Therefore there are no
contributions of order $W^{-1}$ and
\[
F_{\underline{\alpha}}^{T}(\underline{\theta}_{W})\rightarrow O(W^{-2}%
)~\text{for }W\rightarrow\infty\,.
\]

\end{proof}

\subparagraph{Calculation of the functions $c_{JJ}^{T}(k,l,W):$}

defined by%
\[
F_{\underline{\alpha}}^{T}(\underline{\theta}_{W})\overset{W\rightarrow\infty
}{\rightarrow}c_{JJ}^{T}(k,l,W)\mathbf{C}_{\alpha\bar{\delta}}\mathbf{C}%
_{\beta\bar{\gamma}}F_{\underline{\hat{\alpha}}}^{J^{\alpha\bar{\gamma}}%
}F_{\underline{\check{\alpha}}}^{J^{\beta\bar{\delta}}}(\underline
{\check{\theta}})\,.
\]
We apply the general procedure of Appendix \ref{ac}: Using $a(W)\rightarrow
e^{-i\pi\left(  1-\frac{1}{N}\right)  }$ of (\ref{Sa}) and $\sigma_{1}%
^{T}=1,~Q^{T}=0$ we check (\ref{tocheck}) and (\ref{tocheck1}) for this case%
\begin{gather*}
\dot{\sigma}_{1}^{T}(n)S_{1\underline{\check{\alpha}}}^{\underline
{\check{\alpha}}^{\prime}1}(\theta,\underline{\check{\theta}})\overset
{W\rightarrow\infty}{\rightarrow}(-1)^{(N-1)+(1-1/N)(k+l)}(a(W))^{l}%
1_{\underline{\check{\alpha}}}^{\underline{\check{\alpha}}^{\prime}%
}\rightarrow\dot{\sigma}_{1}^{T}(k)1_{\underline{\check{\alpha}}}%
^{\underline{\check{\alpha}}^{\prime}}\,\\
\dot{\sigma}_{1}^{T}(n)S_{\underline{\hat{\alpha}}\bar{1}}^{\bar{1}\hat
{\alpha}^{\prime}}(\underline{\hat{\theta}}+W,\omega)\overset{W\rightarrow
\infty}{\rightarrow}(-1)^{(N-1)+(1-1/N)(k+l)}(a(W))^{k(N-1)}1_{\underline
{\hat{\alpha}}}^{\underline{\hat{\alpha}}^{\prime}}\rightarrow(-1)^{\left(
N-1\right)  k}\dot{\sigma}_{1}^{J}(l)1_{\underline{\hat{\alpha}}}%
^{\underline{\hat{\alpha}}^{\prime}}%
\end{gather*}
Therefore, as proofed in Appendix \ref{ac}, $c_{JJ}^{T}(k,l,W)$ is independent
of $k$ and $l$, because $(-1)^{\left(  N-1\right)  k}=1$ for
$k=0\operatorname{mod}N$. It is convenient to consider the special case
$c_{JJ}^{T}(N,N,W)$:

1) We take the bound states $\bar{1}=(\hat{\alpha}_{2}\dots\hat{\alpha}_{N})$
and $\bar{1}=(\check{\alpha}_{1}\dots\check{\alpha}_{N-1})$ and calculate for
$\underline{\theta}_{W}=(\hat{\theta}+W,\hat{\omega}+W,\check{\omega}%
,\check{\theta})$%
\begin{multline*}
\operatorname*{Res}_{\hat{\theta}=i\pi+\hat{\omega}}F_{1\bar{1}\bar{1}1}%
^{T}(\underline{\theta}_{W})=2i\,\mathbf{C}_{1\bar{1}}\,F_{\bar{1}1}%
^{T}(\check{\omega},\check{\theta})\left(  1-\dot{\sigma}_{\bar{1}}%
^{T}(2N)S_{\bar{1},\bar{1}1}^{\bar{1}1,\bar{1}}(\hat{\omega}+W;\check{\omega
},\check{\theta})\right)  \\
\overset{W\rightarrow\infty}{\rightarrow}-2i\,W^{-2}i\eta\left(  1-1/N\right)
\left(  \check{\theta}-\hat{\omega}+i\pi\right)  F_{\bar{1}1}^{T}%
(\check{\omega},\check{\theta})
\end{multline*}
It was used that $\mathbf{C}_{1\bar{1}}=1$ and that (\ref{S}), (\ref{Sb}) and
(\ref{Sa}) imply%
\begin{multline*}
\dot{\sigma}_{\bar{1}}^{T}(2N)S_{\bar{1},1\bar{1}}^{\bar{1}1,\bar{1}}%
(\hat{\omega}+W;\check{\omega},\check{\theta})\\
=(-1)^{(N-1)+(1-1/N)2N}\left(  a(\hat{\omega}+W-\check{\omega})\right)
\left(  (-1)^{N-1}a(i\pi-\left(  \hat{\omega}+W-\check{\theta}\right)
)\right)  \\
\overset{W\rightarrow\infty}{\rightarrow}\exp\left(  -i\eta\left(
1-1/N\right)  \left(  \left(  \hat{\omega}+W-\check{\omega}\right)
^{-1}+\left(  i\pi-\left(  \hat{\omega}+W-\check{\theta}\right)  \right)
^{-1}\right)  \right)  \\
\overset{W\rightarrow\infty}{\rightarrow}1+W^{-2}i\eta\left(  1-1/N\right)
\left(  \check{\theta}-\hat{\omega}+i\pi\right)  .
\end{multline*}

2) Taking first $W\rightarrow\infty$ and then the $\operatorname*{Res}$ means%
\begin{multline*}
\operatorname*{Res}_{\hat{\theta}=i\pi+\hat{\omega}}\left(  F_{1\bar{1}\bar
{1}1}^{T}(\underline{\theta}_{W})\overset{W\rightarrow\infty}{\rightarrow
}c_{JJ}^{T}(N,N,W)\mathbf{C}_{\alpha\bar{\delta}}\mathbf{C}_{\beta\bar{\gamma
}}F_{1\bar{1}}^{J^{\alpha\bar{\gamma}}}(\hat{\theta},\hat{\omega})F_{\bar{1}%
1}^{J^{\beta\bar{\delta}}}(\check{\omega},\check{\theta})\right) \\
=-2ic_{JJ}^{T}(N,N,W)F_{\bar{1}1}^{J^{1\bar{1}}}(\check{\omega},\check{\theta
})
\end{multline*}
where (\ref{FJ2}) has been used. The particle anti-particle form factors
(\ref{FJ2}) and (\ref{FT2}) satisfy applying the form factor equation (ii)%
\begin{equation}
F_{\bar{1}1}^{J^{1\bar{1}}}(\theta,\omega)=\left(  1-1/N\right)  i\left(
\theta-\omega+i\pi\right)  F_{\bar{1}1}^{T}(\theta,\omega) \label{FJT}%
\end{equation}
therefore%
\[
c_{JJ}^{T}(k,l,W)=\eta W^{-2}%
\]
which supports (\ref{FT}).

\begin{remark}
Repeating the last discussion for the more general case $k=N,~l=LN,~\underline
{\theta}_{W}=(\hat{\omega}+W,\hat{\theta}+W,\underline{\check{\theta}%
},\underline{\check{\omega}})$ and $\check{\alpha}=\left(  \alpha\dots
\alpha\bar{\alpha}\dots\bar{\alpha}\right)  $ for a fixed $\alpha=1,\dots,N$
we obtain as an generalization of (\ref{FJT}) the interesting relation of
energy momentum and current form factors%
\begin{align}
\left(  1-\frac{1}{N}\right)  \sum_{j=1}^{L}i\left(  \check{\theta}_{j}%
-\check{\omega}_{j}-i\pi\right)  \,F_{\alpha\dots\alpha\bar{\alpha}\dots
\bar{\alpha}}^{T}(\underline{\check{\theta}},\underline{\check{\omega}})  &
=\mathbf{C}_{\alpha\bar{\alpha}}\left(  F_{\alpha\dots\alpha\bar{\alpha}%
\dots\bar{\alpha}}^{J^{\alpha\bar{\alpha}}}(\underline{\check{\theta}%
},\underline{\check{\omega}})\right) \label{FJTg}\\
&  =2\left(  T_{a}\right)  _{\alpha}^{\alpha}F_{\alpha\dots\alpha\bar{\alpha
}\dots\bar{\alpha}}^{J_{a}}(\underline{\check{\theta}},\underline
{\check{\omega}}).\nonumber
\end{align}

\end{remark}

\subparagraph{Equivalence:}

Using the general relations (\ref{TT}), (\ref{JJ}) and $\mathbf{C}_{\alpha
\bar{\gamma}}F_{\underline{\hat{\alpha}}}^{J^{\alpha\bar{\gamma}}}=0$ we
obtain%
\[
2F_{\underline{\hat{\alpha}}}^{J_{a}}(\underline{\hat{\theta}})F_{\underline
{\check{\alpha}}}^{J_{a}}(\underline{\check{\theta}})=2\left(  T_{a}\right)
_{\alpha\bar{\gamma}}F_{\underline{\hat{\alpha}}}^{J^{\alpha\bar{\gamma}}%
}(\underline{\hat{\theta}})\left(  T_{a}\right)  _{\beta\bar{\delta}%
}F_{\underline{\check{\alpha}}}^{J^{\beta\bar{\delta}}}(\underline
{\check{\theta}})=\mathbf{C}_{\beta\bar{\gamma}}\mathbf{C}_{\alpha\bar{\delta
}}\,F_{\underline{\hat{\alpha}}}^{J^{\alpha\bar{\gamma}}}(\underline
{\hat{\theta}})F_{\underline{\check{\alpha}}}^{J^{\beta\bar{\delta}}%
}(\underline{\check{\theta}}).
\]
with $\left(  T_{a}\right)  _{\alpha\bar{\beta}}=\mathbf{C}_{\delta\bar{\beta
}}\left(  T_{a}\right)  _{\alpha}^{\delta}$.

\subsection{Theorem \ref{t2}}

\begin{theorem}
\label{t2}The cluster behavior of the form factor of the pseudo-potential of
the current for particle number $n=0\operatorname{mod}N$ and
$k=1\operatorname{mod}N$ reads as%
\begin{align*}
&  F_{\underline{\alpha}}^{J}(\underline{\theta}_{W})\overset{W\rightarrow
\infty}{\rightarrow}c_{\psi\chi}^{J}(k,l,W)F_{\underline{\hat{\alpha}}}^{\psi
}(\underline{\hat{\theta}})F_{\underline{\check{\alpha}}}^{\chi}%
(\underline{\check{\theta}})\\
c_{\psi\chi}^{J}(k,l,W)  &  =e^{i\pi l_{1}}2\left(  2\pi\right)  ^{-\frac
{1+N}{N^{2}}}e^{-i\pi\left(  N+\frac{1}{2N}\right)  }/\bar{F}(i\pi
)\,W^{\frac{1}{N^{2}}}e^{-\frac{1}{2}\left(  1-\frac{1}{N}\right)  W}%
\end{align*}
with $l_{1}=\left(  l+1\right)  \left(  1-1/N\right)  -1$.
\end{theorem}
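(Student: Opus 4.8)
The plan is to proceed exactly as in the proof of Theorem \ref{t1}, starting from
\[
F_{\underline{\alpha}}^{J}(\underline{\theta}_{W})=N_{n}^{J}F(\underline{\theta}_{W})\int d\underline{\underline{z}}\,\tilde{h}(\underline{\theta}_{W},\underline{\underline{z}}_{W})\,p^{J}(\underline{\theta}_{W},\underline{\underline{z}}_{W})\,\tilde{\Phi}_{\underline{\alpha}}(\underline{\theta}_{W},\underline{\underline{z}}_{W})
\]
and reading off the exponential behaviour of $F\tilde{h}\,p^{J}$ from the appendix formulae. As in (\ref{expJ}), this behaviour is controlled by the quadratic form $\tilde{k}_{1}^{2}+\tilde{k}_{N-1}^{2}+\sum_{j=1}^{N-2}(\tilde{k}_{j}-\tilde{k}_{j+1})^{2}$ in the variables $\tilde{k}_{j}=k_{j}-k(1-j/N)$. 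The decisive difference from Theorem \ref{t1} is that for $k=1\operatorname{mod}N$ the continuous minimiser $\tilde{k}_{j}=0$ is no longer realised by integers $k_{j}$; the relevant integer choice is $k_{j}=(k-1)(1-j/N)$, i.e. $\tilde{k}_{j}=-(1-j/N)$. I would first verify that this is an integer minimiser and that it gives $\tilde{k}_{1}^{2}+\tilde{k}_{N-1}^{2}+\sum(\tilde{k}_{j}-\tilde{k}_{j+1})^{2}=1-1/N$, producing precisely the leading exponential $e^{-\frac{1}{2}(1-1/N)W}$ announced in $c_{\psi\chi}^{J}$, and check with the help of Lemma \ref{l1} that the competing integer minima do not contribute in the leading order.

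The second step is to identify the factorised pieces. The point is that the winning split $k_{j}=(k-1)(1-j/N)$ coincides with the number of integrations (\ref{npsi}) of a $\psi$ form factor for the $k$ shifted particles, while its complement $l_{j}=(l+1)(1-j/N)-1$ is exactly the number of integrations of a $\chi$ form factor for the $l$ unshifted ones (see the p-function (\ref{pchi})); this is consistent with the weights since $w^{\psi}+w^{\chi}=(1,0,\dots,0)+(1,\dots,1,0)=w^{J}$, and with the charges since $k=1\operatorname{mod}N$ and $l=N-1\operatorname{mod}N$. Using the asymptotics of $F(\underline{\theta}_{W})\tilde{h}(\underline{\theta}_{W},\underline{\underline{z}}_{W})$, the fact that the denominator $\sum_{i}e^{-\theta_{i}}$ of $p^{J}$ collapses to the checked denominator $\sum e^{-\check{\theta}_{i}}$ of $p^{\chi}$, and Lemma \ref{l1} to discard the cross terms in the Bethe state $\tilde{\Phi}$, I would show that the integrand factorises into the hatted $\psi$-integrand times the checked $\chi$-integrand, so that the multiple contour integral splits as $F_{\underline{\hat{\alpha}}}^{\psi}(\underline{\hat{\theta}})F_{\underline{\check{\alpha}}}^{\chi}(\underline{\check{\theta}})$. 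The accompanying power $W^{1/N^{2}}$ is then extracted from the subleading power-law prefactors of the same asymptotic formulae, i.e. from the Barnes $G$/Gamma factors of $F$, $\tilde{\phi}$ and $\tau$ evaluated at the locked fractional shift $\tilde{k}_{j}=-(1-j/N)$.

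Finally, the overall constant $d$ and the phase $e^{i\pi l_{1}}$ are fixed exactly as the other $c$-functions in this section, by the form factor equation (iii). Applying the procedure of Appendix \ref{ac} with the statistics factor (\ref{st}), I would show that the $W$-dependent part $W^{1/N^{2}}e^{-\frac{1}{2}(1-1/N)W}$ and the constant $d$ are independent of $k,l$, the entire $(k,l)$-dependence residing in $e^{i\pi l_{1}}$ with $l_{1}=(l+1)(1-1/N)-1$. The value of $d$ is then read off from the smallest case $n=N$, $k=1$, $l=N-1$: the hatted factor reduces to the one-particle $\psi$ form factor (\ref{F1psi}), and fusing the $N-1$ unshifted particles into an anti-particle bound state reduces the checked factor to the one-anti-particle $\chi$ form factor (\ref{F1chi}), so that comparing a suitable residue of the left-hand side (via (\ref{FJ2})) with $c_{\psi\chi}^{J}F^{\psi}F^{\chi}$ yields $d=2(2\pi)^{-\frac{1+N}{N^{2}}}e^{-i\pi(N+\frac{1}{2N})}/\bar{F}(i\pi)$.

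The main obstacle I expect is the interchange of the $W\rightarrow\infty$ limit with the contour integration, the same difficulty flagged after Conjecture \ref{cj1}. Here, however, the quadratic form has a strictly positive minimum $1-1/N$ attained at an isolated (fractional) shift, so unlike the energy-momentum case the leading exponential is non-degenerate and the interchange should be justifiable to the order needed; the genuinely delicate part is the clean extraction of the exact power $W^{1/N^{2}}$ and of the constant $d$ from the competing power-law prefactors, which I would cross-check against the form factor equation (iii) exactly as above.
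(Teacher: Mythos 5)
Your proposal follows essentially the same route as the paper's proof: the same exponential analysis in the variables $\tilde k_j$, the same identification of the minimiser $k_j=(k-1)(1-j/N)$, $l_j=(l+1)(1-j/N)-1$ with the factorisation of the integrand into a $\psi$-integrand times a $\chi$-integrand, and the same determination of the prefactor via Appendix \ref{ac} and the one-particle/one-anti-particle example based on (\ref{FJ2}), (\ref{Fba}), (\ref{F1psi}) and (\ref{F1chi}). Your added cautions (possible degenerate integer minima of the quadratic form and the interchange of the $W\rightarrow\infty$ limit with the contour integration) are well placed, but they concern points the paper itself passes over rather than a divergence in method.
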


\begin{proof}
We investigate%
\[
F_{\underline{\alpha}}^{J}(\underline{\theta}_{W})=N_{n}^{J}F(\underline
{\theta}_{W})\int d\underline{\underline{z}}\tilde{h}\,(\underline{\theta}%
_{W},\underline{\underline{z}}_{W})p^{J}(\underline{\theta}_{W},\underline
{\underline{z}}_{W})\tilde{\Phi}_{\underline{\alpha}}(\underline{\theta}%
_{W},\underline{\underline{z}}_{W})
\]
The exponential behavior of the integrand is again given by (\ref{expJ}). For
$k=1\operatorname{mod}N$ the leading asymptotic behavior $\left(  e^{-\frac
{1}{2}W}\right)  ^{1-\frac{1}{N}}$ is obtained for $\tilde{k}_{j}=j/N-1$ which
implies%
\[
k_{j}=\left(  k-1\right)  \left(  1-j/N\right)  ,~~l_{j}=\left(  l+1\right)
\left(  1-j/N\right)  -1,~j=1,\dots,N-1
\]
For these values of $k_{j}$ and $l_{j}$ we obtain, more precisely%
\begin{align*}
&  F(\underline{\theta}_{W})\tilde{h}\,(\underline{\theta}_{W},\underline
{\underline{z}}_{W})p^{J}(\underline{\theta}_{W},\underline{\underline{z}}%
_{W})\tilde{\Phi}_{\underline{\alpha}}(\underline{\theta}_{W},\underline
{\underline{z}}_{W})\\
&  \rightarrow W^{\frac{1}{N^{2}}}e^{-\frac{1}{2}\left(  1-\frac{1}{N}\right)
W}\left(  F(\underline{\hat{\theta}})\tilde{h}(\underline{\hat{\theta}%
},\underline{\hat{z}})p^{\psi}(\underline{\hat{\theta}},\underline{\hat{z}%
})\tilde{\Psi}_{\underline{\hat{\alpha}}}(\underline{\hat{\theta}}%
,\underline{\hat{z}})\right)  \left(  F(\underline{\check{\theta}})\tilde
{h}(\underline{\check{\theta}},\underline{\check{z}})p^{\chi}(\underline
{\check{\theta}},\underline{\check{z}})\tilde{\Phi}_{\underline{\check{\alpha
}}}(\underline{\check{\theta}},\underline{\check{z}})\right)  ,
\end{align*}
which implies (up to const.)%
\[
F_{\underline{\alpha}}^{J}(\underline{\theta}_{W})\rightarrow W^{\frac
{1}{N^{2}}}e^{-\frac{1}{2}\left(  1-\frac{1}{N}\right)  W}F_{\underline
{\hat{\alpha}}}^{\psi}(\underline{\hat{\theta}})F_{\underline{\check{\alpha}}%
}^{\chi}(\underline{\check{\theta}}).
\]

\end{proof}

\subparagraph{Calculation of the function $c_{\psi\chi}^{J}(k,l,W)$}

defined by%
\[
F_{\underline{\alpha}}^{J}(\underline{\theta}_{W})\overset{W\rightarrow\infty
}{\rightarrow}c_{\psi\chi}^{J}(k,l,W)F_{\underline{\hat{\alpha}}}^{\psi
}(\underline{\hat{\theta}})F_{\underline{\check{\alpha}}}^{\chi}%
(\underline{\check{\theta}}).
\]
We apply the procedure of Appendix \ref{ac}: Using $a(W)\rightarrow
e^{-i\pi\left(  1-\frac{1}{N}\right)  }$ of (\ref{Sa}) we check (\ref{tocheck}%
) and (\ref{tocheck1}) with $\sigma^{J}=1,~Q^{J}=0,~\sigma^{\psi}%
=e^{i\pi\left(  1-\frac{1}{N}\right)  },~Q^{\psi}=1$ and $\sigma_{1}^{\chi
}=e^{i\pi\left(  N-\frac{1}{N}\right)  },~Q^{\chi}=N-1$%
\begin{gather*}
\dot{\sigma}_{1}^{J}(n)S_{1\underline{\check{\alpha}}}^{\underline
{\check{\alpha}}^{\prime}1}(\theta+W,\underline{\check{\theta}})\overset
{W\rightarrow\infty}{\rightarrow}(-1)^{(N-1)+(1-1/N)(k+l)}(a(W))^{l}%
1_{\underline{\check{\alpha}}}^{\underline{\check{\alpha}}^{\prime}%
}\rightarrow\dot{\sigma}_{1}^{\psi}(k)1_{\underline{\check{\alpha}}%
}^{\underline{\check{\alpha}}^{\prime}}\\
\dot{\sigma}_{1}^{J}(n)S_{\hat{\alpha}\bar{1}}^{\bar{1}\hat{\alpha}^{\prime}%
}(\underline{\hat{\theta}}+W,\omega)\overset{W\rightarrow\infty}{\rightarrow
}(-1)^{(N-1)+(1-1/N)(k+l)}(a(W))^{(N-1)k}1_{\underline{\hat{\alpha}}%
}^{\underline{\hat{\alpha}}^{\prime}}\rightarrow(-1)^{\left(  N-1\right)
k}\dot{\sigma}_{1}^{\chi}(l)1_{\hat{\alpha}}^{\hat{\alpha}^{\prime}}%
\end{gather*}
Therefore $c_{\psi\chi}^{J}(k,l,W)$ is independent of $k$ and for
$k=1\operatorname{mod}N$ (see \ref{ac})%
\[
c_{\psi\chi}^{J}(k,l,W)=c_{\psi\chi}^{J}(k_{0},l_{0},W)(-1)^{\left(
N-1\right)  \left(  l-l_{0}\right)  /N}%
\]
The special case $c_{\psi\chi}^{J}(1,N-1,W)$ is calculated by the following
example, which implies%
\begin{equation}
c_{\psi\chi}^{J}(k,l,W)=e^{i\pi l_{1}}2\left(  2\pi\right)  ^{-\frac
{1+N}{N^{2}}}e^{-i\pi\left(  N+\frac{1}{2N}\right)  }/\bar{F}(i\pi
)\,W^{\frac{1}{N^{2}}}e^{-\frac{1}{2}\left(  1-\frac{1}{N}\right)  W}
\label{cJ}%
\end{equation}
because $l_{1}=\left(  l+1\right)  \left(  1-1/N\right)  -1$.

\paragraph{Example:}

The particle anti-particle of (\ref{FJ2}) and asymptotic behavior of the
particle anti-particle minimal form factor function (\ref{Fba}) imply%
\[
F_{1\bar{N}}^{J}(\underline{\theta}_{W})\overset{W\rightarrow\infty
}{\rightarrow}2e^{-\frac{1}{2}(\theta-\omega+W)}\left(  \left(  2\pi\right)
^{-1-\frac{1}{N}}W^{\frac{1}{N}}e^{\frac{1}{2}W}e^{\frac{1}{2}\left(
\theta-\omega-i\pi\right)  }\right)  ^{\frac{1}{N}}/\bar{F}(i\pi).
\]
The asymptotic relation (\ref{FJpsichi}), (\ref{F1psi}) and (\ref{F1chi})
give
\[
F_{1\bar{N}}^{J}(\underline{\theta}_{W})\overset{W\rightarrow\infty
}{\rightarrow}c_{\psi\chi}^{J}(1,N-1,W)F_{1}^{\psi}(\theta)F_{\bar{N}}^{\chi
}(0)=c_{\psi\chi}^{J}(1,N-1,W)e^{-\frac{1}{2}\left(  1-\frac{1}{N}\right)
\theta}e^{\frac{1}{2}\left(  1-\frac{1}{N}\right)  \omega}%
\]
which means%
\[
c_{\psi\chi}^{J}(1,N-1,W)=2e^{-\frac{1}{2}W}\left(  \left(  2\pi\right)
^{-1-\frac{1}{N}}W^{\frac{1}{N}}e^{\frac{1}{2}W}e^{-\frac{1}{2}i\pi}\right)
^{\frac{1}{N}}/\bar{F}(i\pi)
\]
and (\ref{cJ}).

\subsection{Conjecture \ref{t2a}}

\begin{conjecture}
\label{t2a}The form factor of the energy momentum potential for particle
number $n=0\operatorname{mod}N$ and $k=1\operatorname{mod}N$ shows the cluster
behavior (\ref{FTpsichi})%
\[
F_{\underline{\alpha}}^{T}(\underline{\theta}_{W})\overset{W\rightarrow\infty
}{\rightarrow}c_{\psi\chi}^{T}(k,l,W)\mathbf{C}_{\alpha\bar{\beta}%
}F_{\underline{\hat{\alpha}}}^{\psi^{\alpha}}(\underline{\hat{\theta}%
})F_{\underline{\check{\alpha}}}^{\chi^{\bar{\beta}}}(\underline{\check
{\theta}}).
\]

\end{conjecture}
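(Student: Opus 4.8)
The plan is to run the argument of Theorem \ref{t2} with the current p-function $p^{J}$ replaced by the energy-momentum p-function $p^{T}=p^{T_{+}}+p^{T_{-}}$ and with the integration numbers (\ref{nT}). Starting from
\[
F_{\underline{\alpha}}^{T}(\underline{\theta}_{W})=N_{n}^{T}F(\underline{\theta}_{W})\int d\underline{\underline{z}}\,\tilde{h}(\underline{\theta}_{W},\underline{\underline{z}}_{W})\,p^{T}(\underline{\theta}_{W},\underline{\underline{z}}_{W})\,\tilde{\Phi}_{\underline{\alpha}}(\underline{\theta}_{W},\underline{\underline{z}}_{W}),
\]
I would read off the exponential behaviour of $F(\underline{\theta}_{W})\tilde{h}(\underline{\theta}_{W},\underline{\underline{z}}_{W})p^{T}(\underline{\theta}_{W},\underline{\underline{z}}_{W})$ from the appendix estimates; as in (\ref{expT}) it is governed by the quadratic form $\tilde{k}_{1}^{2}+\tilde{k}_{N-1}^{2}+\sum_{j=1}^{N-2}(\tilde{k}_{j}-\tilde{k}_{j+1})^{2}$ in $\tilde{k}_{j}=k_{j}-k(1-j/N)$. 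For $k=1\operatorname{mod}N$ the minimal value $(e^{-W/2})^{1-1/N}$ is reached at $\tilde{k}_{j}=j/N-1$, i.e. at $k_{j}=(k-1)(1-j/N)$ and $l_{j}=(l+1)(1-j/N)$. Comparing with the integration numbers of $\psi$ in (\ref{npsi}) and of $\chi$, the hatted block carries exactly the data of a $\psi^{\alpha}$-form factor, whereas the checked block carries one Bethe integration per level more than a bare $\chi^{\bar{\beta}}$-form factor.

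These $N-1$ surplus integrations are the heart of the matter and mark the only structural difference from the current case of Theorem \ref{t2}: there $J=J^{1\bar{N}}$ carries fixed indices and both blocks match $\psi^{1}$ and $\chi^{\bar{N}}$ exactly, whereas here the iso-scalar $T$ has no free index, so the two halves must be tied into a singlet. I expect the surplus integrations, in the limit $W\to\infty$, to build up precisely the charge-conjugation contraction $\mathbf{C}_{\alpha\bar{\beta}}$ binding the free index $\alpha$ of $\psi^{\alpha}$ on the hatted side to $\chi^{\bar{\beta}}$ on the checked side, consistent with the identification of the anti-particle as a bound state of $N-1$ particles. The additive split $p^{T}=p^{T_{+}}+p^{T_{-}}$ then fixes how the two factors are fed: under the shift $p^{T_{+}}=\sum e^{z_{j}^{(1)}}/\sum e^{\theta_{j}}$ localises on the shifted block and $p^{T_{-}}=-\sum e^{-z_{j}^{(1)}}/\sum e^{-\theta_{j}}$ on the unshifted one, and the subleading behaviour of their denominators is what lowers the power of $W$ by one relative to the current result.

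To fix $c_{\psi\chi}^{T}(k,l,W)$ I would proceed as for the current. First, the procedure of Appendix \ref{ac} with $\sigma^{T}=1$, $Q^{T}=0$ is formally identical to the one used in Theorem \ref{t2}, so it shows that $c_{\psi\chi}^{T}$ is independent of $k$ and carries the same statistics-induced $l$-dependence, producing the prefactor $e^{i\pi l_{1}}$ with $l_{1}=(l+1)(1-1/N)-1$, exactly as for $c_{\psi\chi}^{J}$. The absolute normalisation then comes from the one-particle/one-anti-particle example: inserting (\ref{FT2}) and the asymptotics (\ref{Fba}) into $F_{1\bar{N}}^{T}(\underline{\theta}_{W})$ and comparing with $c_{\psi\chi}^{T}(1,N-1,W)\,\mathbf{C}_{\alpha\bar{\beta}}F_{1}^{\psi^{\alpha}}(\theta)F_{\bar{N}}^{\chi^{\bar{\beta}}}(0)$ through (\ref{F1psi}) and (\ref{F1chi}). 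Since (\ref{FT2}) differs from the current form factor (\ref{FJ2}) exactly by the extra factor $1/(\theta-\omega-i\pi)\sim 1/W$ together with the replacement $(T_{a})_{\alpha\bar{\beta}}\to\mathbf{C}_{\alpha\bar{\beta}}$, this comparison delivers at once $c_{\psi\chi}^{T}=-i\,W^{-1}c_{\psi\chi}^{J}$, which is the claimed constant; the same $-iW^{-1}$ factor is produced independently by the form factor relation (\ref{FJT}) between $T$ and $J$, and consistency with the form factor equation (iii) provides a further check, as for Conjecture \ref{cj1}.

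The hard part, and the reason the statement is only a conjecture, is exactly the one already flagged for Conjecture \ref{cj1}: the $W^{-1}$-suppression means the result sits in a subleading coefficient of the integrand, and extracting it rigorously requires interchanging the large-$W$ expansion of the integrand with the contour integration in (\ref{Kc}). This interchange is justified only up to the $1/W$ term, so while the constant and the factorized structure can be pinned down on explicit low-particle examples and checked against the form factor equations, a uniform control of the contour integral for the iso-scalar $T$ at the relevant order is not available.
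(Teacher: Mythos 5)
Your proposal is correct and takes essentially the same route as the paper, which likewise offers no proof of this conjecture but only the same supporting evidence: the exponential counting inherited from the analysis of Theorem \ref{t2}, the Appendix \ref{ac} argument for the $k$- and $l$-dependence of $c_{\psi\chi}^{T}$, the one-particle/one-anti-particle example via (\ref{FT2}), (\ref{Fba}), (\ref{F1psi}), (\ref{F1chi}) to fix the normalisation (yielding the same $-iW^{-1}$ relative to $c_{\psi\chi}^{J}$), and consistency with form factor equation (iii) and the $T$--$J$ relation (the paper invokes the generalised (\ref{FJTg}) where you invoke (\ref{FJT})). You also correctly identify the same obstruction the authors name — that the large-$W$ expansion of the integrand cannot be interchanged with the contour integration beyond order $1/W$ — as the reason the statement remains a conjecture.
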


We have no general proof of this conjecture. The problem is the same as in
Conjecture \ref{cj1}, that the expansion for large $W$ of the integrand in the
multiple contour integral representation in (\ref{K}) must not be interchanged
with the integration. However, the relation (\ref{FJTg}) implies the cluster
relation (\ref{FTpsichi}) and we have again checked consistency with the form
factor equation (iii), which also yields the function $c_{\psi\chi}%
^{T}(k,l,W)$ of (\ref{FTpsichi}).

\subparagraph{Calculation of the function $c_{\psi\chi}^{T}(k,l,W):$}

In the same way as above for $c_{\psi\chi}^{J}$ we prove that $c_{\psi\chi
}^{T}(k,l,W)$ is independent of $k$ and for $k=1\operatorname{mod}N$%
\[
c_{\psi\chi}^{T}(k,l,W)=c_{\psi\chi}^{T}(k_{0},l_{0},W)(-1)^{\left(
N-1\right)  \left(  l-l_{0}\right)  /N}%
\]
The special case $c_{\psi\chi}^{T}(1,N-1,W)$ is calculated by the following
example, which implies%
\begin{equation}
c_{\psi\chi}^{T}(k,l,W)=ie^{i\pi l_{1}}\,W^{\frac{1}{N^{2}}-1}e^{-\frac{1}%
{2}\left(  1-\frac{1}{N}\right)  W}2\left(  2\pi\right)  ^{-\frac{1+N}{N^{2}}%
}e^{-i\pi\left(  N+\frac{1}{2N}\right)  }/\bar{F}(i\pi)\, \label{cT}%
\end{equation}

\subparagraph{Example:}

The particle anti-particle of (\ref{FT2}) and asymptotic behavior of the
particle anti-particle minimal form factor function (\ref{Fba}) imply%
\[
F_{1\bar{N}}^{T}(\underline{\theta}_{W})\overset{W\rightarrow\infty
}{\rightarrow}2e^{-\frac{1}{2}(\theta-\omega+W)}\frac{1}{iW}\left(  \left(
2\pi\right)  ^{-1-\frac{1}{N}}W^{\frac{1}{N}}e^{\frac{1}{2}W}e^{\frac{1}%
{2}\left(  \theta-\omega-i\pi\right)  }\right)  ^{\frac{1}{N}}/\bar{F}(i\pi).
\]
The asymptotic relation (\ref{FTpsichi}), (\ref{F1psi}) and (\ref{F1chi})
give
\[
F_{1\bar{N}}^{T}(\underline{\theta}_{W})\overset{W\rightarrow\infty
}{\rightarrow}c_{\psi\chi}^{T}(1,N-1,W)F_{1}^{\psi}(\theta)F_{\bar{N}}^{\chi
}(\omega)=c_{\psi\chi}^{T}(1,N-1,W)e^{-\frac{1}{2}\left(  1-\frac{1}%
{N}\right)  \theta}e^{\frac{1}{2}\left(  1-\frac{1}{N}\right)  \omega}%
\]
which means%
\[
c_{\psi\chi}^{T}(1,N-1,W)=\frac{1}{iW}2e^{-\frac{1}{2}W}\left(  \left(
2\pi\right)  ^{-1-\frac{1}{N}}W^{\frac{1}{N}}e^{\frac{1}{2}W}e^{-\frac{1}%
{2}i\pi}\right)  ^{\frac{1}{N}}/\bar{F}(i\pi)
\]
and (\ref{cT}).

\subsection{Theorem \ref{t3}}

\begin{theorem}
\label{t3}The cluster behavior of form factor of the fundamental field for the
number particles $n=1\operatorname{mod}N$ and $k=0\operatorname{mod}N$ reads
as%
\begin{equation}
F_{\underline{\alpha}}^{\psi^{\beta}}(\underline{\theta}_{W})\overset
{W\rightarrow\infty}{\rightarrow}\frac{1}{W}i\eta\mathbf{C}_{\gamma\bar
{\delta}}F_{\underline{\hat{\alpha}}}^{J^{\beta\bar{\delta}}}(\underline
{\hat{\theta}})F_{\underline{\check{\alpha}}}^{\psi^{\gamma}}(\underline
{\check{\theta}})=\frac{1}{W}2i\eta F_{\underline{\hat{\alpha}}}^{J_{a}%
}(\underline{\hat{\theta}})\left(  T_{a}\right)  _{\delta}^{\beta
}F_{\underline{\check{\alpha}}}^{\psi^{\delta}}(\underline{\check{\theta}}).
\label{Fpsi}%
\end{equation}

\end{theorem}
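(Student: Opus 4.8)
The plan is to follow the template established in the proof of Theorem \ref{t1}, since the nested-integration combinatorics is almost identical and only the role of the unshifted block changes. I start from the off-shell Bethe ansatz representation (\ref{FK})--(\ref{K}),
\[
F_{\underline{\alpha}}^{\psi^{\beta}}(\underline{\theta}_{W})=N_{n}^{\psi}F(\underline{\theta}_{W})\int d\underline{\underline{z}}\,\tilde{h}(\underline{\theta}_{W},\underline{\underline{z}}_{W})\,p^{\psi}(\underline{\theta}_{W},\underline{\underline{z}}_{W})\,\tilde{\Phi}_{\underline{\alpha}}(\underline{\theta}_{W},\underline{\underline{z}}_{W})
\]
with the p-function (\ref{ppsi}), and split each nesting level $\underline{z}^{(j)}$ into $k_{j}$ shifted and $l_{j}=n_{j}-k_{j}$ unshifted variables. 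Combining the asymptotics of $F(\underline{\theta}_{W})\tilde{h}(\underline{\theta}_{W},\underline{\underline{z}}_{W})$ from (\ref{Fha}) with the large-$W$ behavior of $p^{\psi}$, I expect the integrand to be governed by an exponential $(e^{-\frac{1}{2}W})^{Q(\tilde{k})}$ in the shift parameters $\tilde{k}_{j}=k_{j}-k(1-j/N)$, with the quadratic form $Q$ minimized, for $k=0\operatorname{mod}N$, at $\tilde{k}_{j}=0$, in parallel with (\ref{expJ}). This singles out $k_{j}=k(1-j/N)$ and, by (\ref{npsi}), $l_{j}=(l-1)(1-j/N)$.

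The decisive structural observation is the integration count at this optimum. On the unshifted (check) block, $l_{j}=(l-1)(1-j/N)$ is exactly the number of integrations a $\psi$ form factor of $l$ particles requires by (\ref{npsi}), so this block reproduces $F_{\underline{\check{\alpha}}}^{\psi^{\gamma}}(\underline{\check{\theta}})$ directly. On the shifted (hat) block, however, $k_{j}=k(1-j/N)$ is \emph{one more} integration per level than a genuine current form factor of $k$ particles needs, since by (\ref{nJ}) the latter has $k(1-j/N)-1$; hence at strictly leading order the hat factor is annihilated by Lemma \ref{l1} and $F_{\underline{\alpha}}^{\psi^{\beta}}(\underline{\theta}_{W})\rightarrow0$. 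The result therefore lives at order $1/W$, exactly as in Theorem \ref{t1}, and the extra hat integration is what will be converted into a current.

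Next I would carry out the $1/W$ expansion by collecting the two contributions of Theorem \ref{t1}: (A) the subleading h-function term $\tilde{h}_{1}$ of (\ref{h1}), where by Lemma \ref{l1} only the $\hat{z}_{j}$-dependent piece $-\sum\hat{z}_{j}$ survives and, through (\ref{Kz}), turns the extra hat integration into the current form factor $(F^{J}(\underline{\hat{\theta}})M_{1}^{2})_{\underline{\hat{\alpha}}}$; and (B) the subleading Bethe-state term of (\ref{PHI1a}), where again only the first term contributes and the $\sum_{j}\tilde{\Phi}^{D_{j}}$ structure produces the current on the hat side. The tensor structure of the answer is in fact fixed by $SU(N)$ covariance: the clustering map must send the fundamental-rep operator $\psi^{\beta}$ to a covariant bilinear in an adjoint current ($J$) form factor and a fundamental ($\psi$) form factor, and in $\mathrm{adj}\otimes\mathrm{fund}$ the fundamental occurs exactly once, via $(T_{a})_{\delta}^{\beta}$. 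Thus, in contrast with Theorem \ref{t1}, where the two factors carried the \emph{same} operator and antisymmetrized into a commutator, here A and B must assemble into the single term
\[
F_{\underline{\alpha}}^{\psi^{\beta}}(\underline{\theta}_{W})\overset{W\rightarrow\infty}{\rightarrow}\frac{1}{W}i\eta\,\mathbf{C}_{\gamma\bar{\delta}}F_{\underline{\hat{\alpha}}}^{J^{\beta\bar{\delta}}}(\underline{\hat{\theta}})F_{\underline{\check{\alpha}}}^{\psi^{\gamma}}(\underline{\check{\theta}}).
\]

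Finally I would fix the scalar constant $i\eta/W$ by the procedure of Appendix \ref{ac} together with the form factor equation (iii): using $a(W)\rightarrow e^{-i\pi(1-1/N)}$ from (\ref{Sa}) and the data $\sigma^{\psi}=e^{(1-1/N)i\pi},\,Q^{\psi}=1$ one checks independence of $k,l$, and then evaluates a convenient special case by a bound-state residue argument analogous to (\ref{3J})--(\ref{4J}), feeding in the one-particle data (\ref{F1psi}) and (\ref{FJ2}). The equivalence of the two stated forms --- the $\mathbf{C}_{\gamma\bar{\delta}}F^{J^{\beta\bar{\delta}}}F^{\psi^{\gamma}}$ form and the $2F^{J_{a}}(T_{a})_{\delta}^{\beta}F^{\psi^{\delta}}$ form --- then follows from (\ref{JJ}) and the $SU(N)$ identities (\ref{TT}), exactly as in the Equivalence paragraph of Theorem \ref{t1}. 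The main obstacle I anticipate is the index and weight bookkeeping at order $1/W$: one must verify that the single surplus hat integration produces precisely the adjoint-valued $J^{\beta\bar{\delta}}$ carrying the correct $\mathbf{C}_{\gamma\bar{\delta}}$ contraction to the surviving $\psi^{\gamma}$, and confirm that A and B add to this single fundamental-rep term rather than antisymmetrizing. As flagged in Conjecture \ref{cj1}, the large-$W$ expansion of the integrand may be legitimately interchanged with the contour integration only up to the $1/W$ term --- which is exactly the order needed here, so no further justification is required.
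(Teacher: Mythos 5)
Your proposal follows essentially the same route as the paper: the exponential counting in $\tilde{k}_{j}$ forcing $k_{j}=k(1-j/N)$, $l_{j}=(l-1)(1-j/N)$, the vanishing of the leading order via Lemma \ref{l1} because the hat block carries one surplus integration per level relative to (\ref{nJ}), the two order-$1/W$ contributions from $\tilde{h}_{1}$ and from the subleading Bethe state, the normalization via Appendix \ref{ac} and form factor equation (iii), and the equivalence via (\ref{TT}) and (\ref{JJ}). The only substantive shortcut is that you invoke $SU(N)$ covariance to assemble the two $1/W$ contributions into the single term, where the paper instead computes them explicitly and combines them with the identity $\mathbf{C}_{\gamma\bar{\delta}}^{(1)}+\mathbf{C}_{1\bar{1}}\delta_{\gamma}^{1}\delta_{\bar{\delta}}^{\bar{1}}=\mathbf{C}_{\gamma\bar{\delta}}$ — a detail worth verifying, but not a different method.
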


\begin{proof}
We investigate%
\[
F_{\underline{\alpha}}^{\psi}(\underline{\theta}_{W})=N_{n}^{\psi}%
F(\underline{\theta}_{W})\int d\underline{\underline{z}}\tilde{h}%
\,(\underline{\theta}_{W},\underline{\underline{z}}_{W})p^{\psi}%
(\underline{\theta}_{W},\underline{z}_{W})\tilde{\Phi}_{\underline{\alpha}%
}(\underline{\theta}_{W},\underline{\underline{z}}_{W})
\]
From the asymptotic behavior of $F(\underline{\theta}_{W},\underline{\omega
}_{W})\tilde{h}\,(\underline{\theta}_{W},\underline{\omega}_{W},\underline
{\underline{z}}_{W})$ and $p^{\psi}(\underline{\theta}_{W},\underline{z}_{W})$
in (\ref{Fha}), (\ref{ppsia}) and (\ref{expvec}) we derive for $W\rightarrow
\infty$ the exponential behavior\textbf{ }
\begin{equation}
F(\underline{\theta}_{W})\tilde{h}\,(\underline{\theta}_{W},\underline
{\underline{z}}_{W})p^{\psi}(\underline{\theta}_{W},\underline{\underline{z}%
}_{W})\varpropto\left(  e^{-\frac{1}{2}W}\right)  ^{\tilde{k}_{1}^{2}%
+\tilde{k}_{N-1}^{2}+\sum_{j=1}^{N-2}\left(  \tilde{k}_{j}-\tilde{k}%
_{j+1}\right)  ^{2}}~,~~\tilde{k}_{j}=k_{j}-k\left(  1-j/N\right)
\label{exppsi}%
\end{equation}
For $k=0\operatorname{mod}N$ and $l=1\operatorname{mod}N$ the leading
asymptotic behavior $\varpropto\left(  e^{-\frac{1}{2}W}\right)  ^{0}$ is
obtained for $\tilde{k}_{j}=0$ i.e. $k_{j}=k\left(  1-j/N\right)  $ and
$l_{j}=\left(  l-1\right)  \left(  1-j/N\right)  $ by (\ref{npsi}), which
implies (up to const.)%
\begin{multline}
F(\underline{\theta}_{W})\tilde{h}(\underline{\theta}_{W},\underline
{\underline{z}}_{W})p^{\psi}(\underline{\theta}_{W},\underline{z}_{W}%
)\tilde{\Phi}_{\underline{\alpha}}(\underline{\theta}_{W},\underline
{\underline{z}}_{W})\label{Fpsi0}\\
\rightarrow\left(  F(\underline{\hat{\theta}})\tilde{h}(\underline{\hat
{\theta}},\underline{\underline{\hat{z}}})\tilde{\Phi}_{\underline{\hat
{\alpha}}}(\underline{\hat{\theta}},\underline{\underline{\hat{z}}})\right)
\left(  F(\underline{\check{\theta}})\tilde{h}(\underline{\check{\theta}%
},\underline{\underline{\check{z}}})p^{\psi}(\underline{\check{\theta}%
},\underline{\check{z}})\tilde{\Phi}_{\underline{\check{\alpha}}}%
(\underline{\check{\theta}},\underline{\underline{\check{z}}})\right)
\end{multline}
However, this means that in leading order
\[
F_{\underline{\alpha}}^{\psi}(\underline{\theta}_{W})\rightarrow0
\]
because of Lemma \ref{l1}. The proof of the $\frac{1}{W}$ contribution is
similar to that one of Theorem \ref{t1}.\medskip

\noindent\textbf{Order }$\frac{1}{W}$: we have to apply the asymptotic
behavior of the h-function (\ref{ha}) and the Bethe state (\ref{PHIa}).

The result for the contribution of $h_{1}$ is (up to a constant)%
\begin{equation}
F_{\underline{\alpha},h_{1}}^{\psi^{1}}(\underline{\theta}_{W})\rightarrow
i\eta W^{-1}\left(  \mathbf{C}_{1\bar{1}}F_{\underline{\hat{\alpha}}%
}^{J^{1\bar{1}}}(\underline{\hat{\theta}})\right)  F_{\underline{\check
{\alpha}}}^{\psi^{1}}(\underline{\check{\theta}}) \label{Fh1}%
\end{equation}
and the result for the contribution of from\textbf{ }$\Phi_{1}$ is%
\[
F_{\underline{\alpha},\Phi_{1}}^{\psi^{1}}(\underline{\theta}_{W})\rightarrow
i\eta W^{-1}\left(  \mathbf{C}_{\gamma\bar{\delta}}^{(1)}F_{\underline
{\hat{\alpha}}}^{J^{1\bar{\delta}}}(\underline{\hat{\theta}})F_{\underline
{\check{\alpha}}}^{\psi^{\gamma}}(\underline{\check{\theta}})\right)
\]
Because $\mathbf{C}_{\gamma\bar{\delta}}^{(1)}+\mathbf{C}_{1\bar{1}}%
\delta_{\gamma}^{1}\delta_{\bar{\delta}}^{\bar{1}}=\mathbf{C}_{\gamma
\bar{\delta}}$ the claim (\ref{psiJpsi}) is proved.
\end{proof}

\subparagraph{Calculation of the function $c_{J\psi}^{\psi}(k,l,W):$}

defined by%
\[
F_{\underline{\alpha}}^{\psi^{\beta}}(\underline{\theta}_{W})\overset
{W\rightarrow\infty}{\rightarrow}c_{J\psi}^{\psi}(k,l,W)\mathbf{C}_{\gamma
\bar{\delta}}F_{\underline{\hat{\alpha}}}^{J^{\beta\bar{\delta}}}%
(\underline{\hat{\theta}})F_{\underline{\check{\alpha}}}^{\psi^{\gamma}%
}(\underline{\check{\theta}})
\]
We apply the procedure of Appendix \ref{ac}: Using $a(W)\rightarrow
e^{-i\pi\left(  1-\frac{1}{N}\right)  }$ of (\ref{Sa}) we check (\ref{tocheck}%
) and (\ref{tocheck1}) with $\sigma_{1}^{\psi}=e^{\left(  1-\frac{1}%
{N}\right)  i\pi}$ and $\sigma_{1}^{J}=1$ for $k=0\operatorname{mod}N$%
\begin{multline*}
\dot{\sigma}_{1}^{\psi}(n)S_{1\underline{\check{\alpha}}}^{\underline
{\check{\alpha}}^{\prime}1}(\theta+W,\underline{\check{\theta}})\overset
{W\rightarrow\infty}{\rightarrow}e^{\left(  1-\frac{1}{N}\right)  i\pi
}(-1)^{(N-1)+(1-1/N)(k+l-1)}(a(W))^{l}1_{\underline{\check{\alpha}}%
}^{\underline{\check{\alpha}}^{\prime}}\\
\rightarrow(-1)^{(N-1)+(1-1/N)k}1_{\underline{\check{\alpha}}}^{\underline
{\check{\alpha}}^{\prime}}=\dot{\sigma}_{1}^{J}(k)1_{\underline{\check{\alpha
}}}^{\underline{\check{\alpha}}^{\prime}}\,.
\end{multline*}%
\begin{multline*}
\dot{\sigma}_{1}^{\psi}(n)S_{\underline{\hat{\alpha}}\bar{1}}^{\bar{1}%
\hat{\alpha}^{\prime}}(\underline{\hat{\theta}}+W,\omega)\overset
{W\rightarrow\infty}{\rightarrow}e^{\left(  1-\frac{1}{N}\right)  i\pi
}(-1)^{(N-1)+(1-1/N)(k+l)}(a(W))^{(N-1)k}1_{\underline{\hat{\alpha}}%
}^{\underline{\hat{\alpha}}^{\prime}}\\
\overset{W\rightarrow\infty}{\rightarrow}(-1)^{\left(  N-1\right)  k}%
\dot{\sigma}_{1}^{\psi}(l)1_{\underline{\hat{\alpha}}}^{\underline{\hat
{\alpha}}^{\prime}}\,.
\end{multline*}
Therefore $c_{J\psi}^{\psi}(k,l,W)$ is independent of $k$ and $l$. It is
convenient to consider the special case $c_{J\psi}^{\psi}(N,1,W):$

1) We take the bound states $\bar{1}=(\hat{\alpha}_{2}\dots\hat{\alpha}_{N})$
and calculate for $\underline{\theta}_{W}=(\hat{\theta}+W,\hat{\omega
}+W,\check{\theta})$%
\begin{multline}
\operatorname*{Res}_{\hat{\theta}=i\pi+\hat{\omega}}F_{1\bar{1}1}^{\psi
}(\underline{\theta}_{W})=2i\,\mathbf{C}_{1\bar{1}}\,F_{1}^{\psi}%
(\check{\theta})\left(  1-\dot{\sigma}_{\bar{1}}^{\psi}(N+1)S_{\bar{1}%
,1}^{1,\bar{1}}(\hat{\omega}+W,\check{\theta})\right) \\
\overset{W\rightarrow\infty}{\rightarrow}-2i\,\mathbf{C}_{1\bar{1}}%
\,i\eta\left(  1-1/N\right)  \frac{1}{W}F_{1}^{\psi}(\check{\theta}).
\end{multline}
It was used that $\sigma_{1}^{\psi}=e^{i\pi\left(  1-\frac{1}{N}\right)
},~\sigma_{1}^{\psi}\sigma_{\bar{1}}^{\psi}=(-1)^{(N-1)}$ (see \cite{BFK1}),
(\ref{Sb}), (\ref{Sa}) and $a(\theta)a(-\theta)=1$ imply%
\begin{multline*}
\dot{\sigma}_{\bar{1}}^{\psi}(N+1)S_{\bar{1},1}^{1,\bar{1}}(\hat{\omega
}+W,\check{\theta})\overset{W\rightarrow\infty}{\rightarrow}e^{-i\pi\left(
1-\frac{1}{N}\right)  }(-1)^{(1-1/N)N}(-1)^{(N-1)}a(-W)\\
\rightarrow e^{-i\pi\left(  1-\frac{1}{N}\right)  }(-1)^{(1-1/N)N}%
(-1)^{(N-1)}e^{i\pi\left(  1-\frac{1}{N}\right)  }e^{i\eta\left(  1-\frac
{1}{N}\right)  \frac{1}{W}}\rightarrow1+i\eta\left(  1-1/N\right)  \frac{1}{W}%
\end{multline*}

2) Taking first $W\rightarrow\infty$ and then the $\operatorname*{Res}$ means%
\begin{multline*}
\operatorname*{Res}_{\hat{\theta}=i\pi+\hat{\omega}}\left(  F_{1\bar{1}%
1}^{\psi}(\underline{\theta}_{W})\overset{W\rightarrow\infty}{\rightarrow
}c_{J\psi}^{\psi}(N,1,W)\left(  \mathbf{C}_{\gamma\bar{\delta}}F_{1\bar{1}%
}^{J^{1\bar{\delta}}}(\omega,\theta)F_{1}^{\psi^{\gamma}}(\check{\theta
})\right)  \right) \\
=c_{J\psi}^{\psi}(N,1,W)\left(  1-1/N\right)  \left(  -2i\right)
\mathbf{C}_{1\bar{1}}F_{1}^{\psi^{\gamma}}(\check{\theta}).
\end{multline*}
As result we obtain $c_{J\psi}^{\psi}(k,l,W)=i\eta\frac{1}{W}$ which proves
(\ref{psiJpsi}).

\subparagraph{Equivalence:}

Using the general relations (\ref{TT}) and (\ref{JJ}) we obtain%
\begin{align*}
&  2F_{\underline{\hat{\alpha}}}^{J_{a}}(\underline{\hat{\theta}})\left(
T_{a}\right)  _{\delta}^{\beta}F_{\underline{\check{\alpha}}}^{\psi^{\delta}%
}(\underline{\check{\theta}})\\
&  =2\mathbf{C}_{\gamma\bar{\epsilon}}\left(  T_{a}\right)  _{\alpha}^{\gamma
}F_{\underline{\hat{\alpha}}}^{J^{\alpha\bar{\epsilon}}}(\underline
{\hat{\theta}})\left(  T_{a}\right)  _{\delta}^{\beta}F_{\underline
{\check{\alpha}}}^{\psi^{\delta}}(\underline{\check{\theta}})\\
&  =2\mathbf{C}_{\gamma\bar{\epsilon}}\left(  \frac{1}{2}\left(
\delta_{\alpha}^{\beta}\delta_{\delta}^{\gamma}-1/N\,\delta_{\alpha}^{\gamma
}\delta_{\delta}^{\beta}\right)  \right)  F_{\underline{\hat{\alpha}}%
}^{J^{\alpha\bar{\epsilon}}}(\underline{\hat{\theta}})F_{\underline
{\check{\alpha}}}^{\psi^{\delta}}(\underline{\check{\theta}})\\
&  =\mathbf{C}_{\gamma\bar{\epsilon}}F_{\underline{\hat{\alpha}}}%
^{J^{\beta\bar{\epsilon}}}(\underline{\hat{\theta}})F_{\underline
{\check{\alpha}}}^{\psi^{\gamma}}(\underline{\check{\theta}})
\end{align*}
because of $\mathbf{C}_{\alpha\bar{\epsilon}}F_{\underline{\hat{\alpha}}%
}^{J^{\alpha\bar{\epsilon}}}(\underline{\hat{\theta}})=0$.

\subsection{Theorem \ref{t4}}

\begin{theorem}
\label{t4}The cluster behavior of form factor of the fundamental field for
particle number $n=1\operatorname{mod}N$ and $k=1\operatorname{mod}N$ reads as%
\[
F_{\underline{\alpha}}^{\psi^{\alpha}}(\underline{\theta}_{W})\overset
{W\rightarrow\infty}{\rightarrow}(-1)^{l_{1}}e^{-\frac{1}{2}\left(  1-\frac
{1}{N}\right)  W}F_{\underline{\hat{\alpha}}}^{\psi^{\alpha}}(\underline
{\hat{\theta}})F_{\underline{\check{\alpha}}}^{\phi}(\underline{\check{\theta
}})\,.
\]

\end{theorem}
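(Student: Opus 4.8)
The plan is to follow the scheme of the proofs of Theorems \ref{t2} and \ref{t3}. As in Theorem \ref{t2}, and unlike Theorems \ref{t1}, \ref{t1b} and \ref{t3}, the shift $k=1\operatorname{mod}N$ will select a clustering channel whose leading order is \emph{not} annihilated by Lemma \ref{l1}, so no $1/W$ analysis is needed. I would start from
\[
F_{\underline{\alpha}}^{\psi}(\underline{\theta}_{W})=N_{n}^{\psi}F(\underline{\theta}_{W})\int d\underline{\underline{z}}\,\tilde{h}\,(\underline{\theta}_{W},\underline{\underline{z}}_{W})\,p^{\psi}(\underline{\theta}_{W},\underline{z}_{W})\,\tilde{\Phi}_{\underline{\alpha}}(\underline{\theta}_{W},\underline{\underline{z}}_{W})
\]
and insert the asymptotics (\ref{Fha}), (\ref{ppsia}), which give the same exponential behavior (\ref{exppsi}) of $F(\underline{\theta}_{W})\tilde{h}\,p^{\psi}$ as in Theorem \ref{t3}, controlled by the quadratic form in $\tilde{k}_{j}=k_{j}-k(1-j/N)$.

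For $k=1\operatorname{mod}N$ that quadratic form attains its minimal value $1-\tfrac1N$ on the admissible integer lattice at $\tilde{k}_{j}=j/N-1$, so that by (\ref{npsi}) one has $k_{j}=(k-1)(1-j/N)$ and $l_{j}=l(1-j/N)$. These are precisely the integration numbers of a $k$-particle $\psi$ form factor on the shifted (hat) side and of an $l$-particle $\phi$ form factor on the unshifted (check) side (consistent with $l=0\operatorname{mod}N$, which follows from $n,k=1\operatorname{mod}N$, so a $\phi$ form factor is admissible there). Evaluating (\ref{exppsi}) at this configuration produces the overall factor $e^{-\frac12(1-1/N)W}$, and the second step is to show that the remaining hat- and check-dependence of $F\,\tilde{h}\,p^{\psi}$ separates: the residual hat factors reassemble $p^{\psi}(\underline{\hat\theta},\underline{\hat z})$ of (\ref{ppsi}) while the check factors reassemble $p^{\phi}(\underline{\check\theta},\underline{\check z})$ of (\ref{pphi}), and, as in Theorem \ref{t2} and via (\ref{PHIa}), the Bethe state decouples because $\tilde{b}(W)\to1$, $\tilde{c}(W)\to0$ trivializes the scattering between the two groups and factorizes the $C$-operators in (\ref{Phi0}). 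This yields
\[
F_{\underline{\alpha}}^{\psi^{\alpha}}(\underline{\theta}_{W})\overset{W\rightarrow\infty}{\rightarrow}c_{\psi\phi}^{\psi}(k,l,W)\,F_{\underline{\hat{\alpha}}}^{\psi^{\alpha}}(\underline{\hat{\theta}})\,F_{\underline{\check{\alpha}}}^{\phi}(\underline{\check{\theta}})\,,\qquad c_{\psi\phi}^{\psi}(k,l,W)\varpropto e^{-\frac12(1-1/N)W},
\]
the power of $W$ for this channel turning out to be $W^{0}$.

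The constant is then fixed by the general procedure of Appendix \ref{ac}: with $a(W)\to e^{-i\pi(1-1/N)}$ of (\ref{Sa}) and the quantum numbers $\sigma^{\psi}=e^{(1-1/N)i\pi}$, $Q^{\psi}=1$ and $\sigma^{\phi}=e^{-i\eta}$, $Q^{\phi}=0$ from (\ref{Qpsi}), checking (\ref{tocheck}), (\ref{tocheck1}) shows that $c_{\psi\phi}^{\psi}$ is independent of $k$ and depends on $l$ only through the crossing sign $(-1)^{l_1}$ with $l_1=l(1-1/N)$. The base value is pinned down by the $\underline{\check{\alpha}}=\emptyset$ (i.e. $l=0$) case exactly as for $\phi$ in Theorem \ref{t1a}: form factor equation (v) with $s^{\psi}=-\tfrac12(1-1/N)$ from (\ref{Qpsi}) gives $F^{\psi}(\underline{\theta}_{W})\to e^{Ws^{\psi}}F^{\psi}(\underline{\theta})=e^{-\frac12(1-1/N)W}F^{\psi}(\underline{\theta})$, matching the prefactor with $(-1)^{l_1}=1$, while the one-particle input (\ref{F1psi}) and the normalization $F_{\emptyset}^{\phi}=1$ close the computation.

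The main obstacle is the caveat already flagged for Conjectures \ref{cj1} and \ref{t2a}, namely the interchange of the $W\to\infty$ limit with the multiple contour integral in (\ref{K}); but because the leading order here survives (no $1/W$ cancellation, in contrast to Theorem \ref{t3}), this interchange is legitimate at the order needed and the genuine work is twofold: confirming that, among the integer configurations realizing the minimal exponential order $1-\tfrac1N$, it is the $\psi\cdot\phi$ channel $\tilde{k}_{j}=j/N-1$ that carries the dominant power $W^{0}$, and tracking the crossing phases $a(W)\to e^{-i\pi(1-1/N)}$ through the nesting so as to obtain exactly the sign $(-1)^{l_1}$ rather than a spurious phase.
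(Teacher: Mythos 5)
Your proposal is correct and follows essentially the same route as the paper: the exponential estimate (\ref{exppsi}) selects $\tilde{k}_{j}=j/N-1$, i.e. $k_{j}=(k-1)(1-j/N)$, $l_{j}=l(1-j/N)$, the integrand then factorizes directly into a $\psi$ form factor on the shifted side and a $\phi$ form factor on the unshifted side with no need for a $1/W$ analysis, and the constant $(-1)^{l_{1}}e^{-\frac{1}{2}(1-\frac{1}{N})W}$ is fixed exactly as you describe via Appendix \ref{ac} and form factor equation (v) at $\underline{\check{\alpha}}=\emptyset$ with $F_{\emptyset}^{\phi}=1$.
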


\begin{proof}
We investigate%
\[
F_{\underline{\alpha}}^{\psi}(\underline{\theta}_{W})=N_{n}^{\psi}%
F(\underline{\theta}_{W})\int d\underline{\underline{z}}\tilde{h}%
\,(\underline{\theta}_{W},\underline{\underline{z}}_{W})p^{\psi}%
(\underline{\theta}_{W},\underline{\underline{z}}_{W})\tilde{\Psi}%
_{\underline{\alpha}}(\underline{\theta}_{W},\underline{\underline{z}}_{W})
\]
and obtain as above the exponential behavior\textbf{ }(\ref{exppsi}). The
leading behavior $\varpropto\left(  e^{-\frac{1}{2}W}\right)  ^{1-\frac{1}{N}%
}$ is obtained for $\tilde{k}_{j}=j/N-1$ which means%
\[
k_{j}=\left(  k-1\right)  \left(  1-j/N\right)  ,~l_{j}=l\left(  1-j/N\right)
~,~~\text{for }~j=1,\dots,N-1
\]
and (up to const.)%
\begin{align*}
&  F(\underline{\theta}_{W})p^{\psi}(\underline{\theta}_{W},\underline
{\underline{z}}_{W})\tilde{\Psi}_{\underline{\alpha}}(\underline{\theta}%
_{W},\underline{\underline{z}}_{W})\\
&  \rightarrow\left(  e^{-\frac{1}{2}W}\right)  ^{1-\frac{1}{N}}\left(
F(\underline{\hat{\theta}})\tilde{h}(\underline{\hat{\theta}},\underline
{\hat{z}})p^{\psi}(\underline{\hat{\theta}},\underline{\hat{z}})\tilde{\Psi
}_{\underline{\hat{\alpha}}}(\underline{\hat{\theta}},\underline{\hat{z}%
})\right)  \left(  F(\underline{\check{\theta}})\tilde{h}(\underline
{\check{\theta}},\underline{\check{z}})p^{\phi}(\underline{\check{\theta}%
},\underline{\check{z}})\tilde{\Psi}_{\underline{\check{\alpha}}}%
(\underline{\check{\theta}},\underline{\check{z}})\right)  .
\end{align*}
proving (\ref{psipsiphi}).
\end{proof}

\subparagraph{Calculation of the function $c_{\psi\phi}^{\psi}(k,l,W):$}

defined by%
\begin{equation}
F_{\underline{\alpha}}^{\psi}(\underline{\theta}_{W})\overset{W\rightarrow
\infty}{\rightarrow}c_{\psi\phi}^{\psi}(k,l,W)F_{\underline{\hat{\alpha}}%
}^{\psi}(\underline{\hat{\theta}})F_{\underline{\check{\alpha}}}^{\phi
}(\underline{\check{\theta}}). \label{Fpsipsiphix}%
\end{equation}
We apply the procedure of Appendix \ref{ac}: Using $a(W)\overset
{W\rightarrow\infty}{\rightarrow}e^{-i\pi\left(  1-\frac{1}{N}\right)  }$ of
(\ref{Sa}) we check (\ref{tocheck}) and (\ref{tocheck1}) with $\sigma
_{1}^{\psi}=e^{i\pi\left(  1-\frac{1}{N}\right)  },~Q^{\psi}=1$ and
$\sigma_{1}^{\phi}=e^{-i\eta},~Q^{\phi}=0$%
\begin{align*}
\dot{\sigma}_{1}^{\psi}(n)S_{1\underline{\check{\alpha}}}^{\underline
{\check{\alpha}}^{\prime}1}(\theta+W,\underline{\check{\theta}})  &
\rightarrow e^{i\pi\left(  1-\frac{1}{N}\right)  }(-1)^{(N-1)+(1-1/N)(k+l-1)}%
(a(W))^{l}1_{\underline{\check{\alpha}}}^{\underline{\check{\alpha}}^{\prime}%
}\rightarrow\dot{\sigma}_{1}^{\psi}(k)1_{\underline{\check{\alpha}}%
}^{\underline{\check{\alpha}}^{\prime}}\\
\dot{\sigma}_{1}^{\psi}(n)S_{\hat{\alpha}\bar{1}}^{\bar{1}\hat{\alpha}%
^{\prime}}(\underline{\hat{\theta}}+W,\omega)  &  \rightarrow e^{i\pi\left(
1-\frac{1}{N}\right)  }(-1)^{(N-1)+(1-1/N)(k+l-1)}(a(W))^{(N-1)k}%
1_{\underline{\hat{\alpha}}}^{\underline{\hat{\alpha}}^{\prime}}\\
&  \rightarrow(-1)^{\left(  N-1\right)  k}\dot{\sigma}_{1}^{\phi}%
(l)1_{\hat{\alpha}}^{\hat{\alpha}^{\prime}}%
\end{align*}
Therefore (\ref{c1}) and (\ref{c2}) imply that $c_{\psi\phi}^{\psi}(k,l,W)$ is
independent of $k$ and (\ref{c3}) and (\ref{c4}) that for
$k=1\operatorname{mod}N$%
\[
c_{\psi\phi}^{\psi}(k,l,W)=c_{\psi\phi}^{\psi}(k_{0},l_{0},W)(-1)^{\left(
N-1\right)  \left(  l-l_{0}\right)  /N}%
\]
The special case $c_{\psi\chi}^{\psi}(1,0,W)$ is obtained by
(\ref{Fpsipsiphix}) for $\underline{\check{\alpha}}=\emptyset$ and the form
factor equation (v) with spin $s^{\psi}=-\frac{1}{2}\left(  1-\frac{1}%
{N}\right)  $%
\begin{align*}
F_{\underline{\alpha}}^{\psi}(\underline{\theta}_{W})  &  \rightarrow
c_{\psi\chi}^{J}(1,0,W)F_{\underline{\hat{\alpha}}}^{\psi}(\underline
{\hat{\theta}})F_{\emptyset}^{\phi}\\
F_{\underline{\alpha}}^{\psi}(\underline{\theta}_{W})  &  \rightarrow
e^{s^{\psi}W}=e^{-\frac{1}{2}\left(  1-\frac{1}{N}\right)  W}F_{\underline
{\hat{\alpha}}}^{\psi}(\underline{\hat{\theta}})
\end{align*}
if we normalize the field $\phi(x)$ by $F_{\emptyset}^{\phi}=\left\langle
0|\phi(x)|0\right\rangle =1$, this gives the result%
\[
c_{\psi\phi}^{\psi}(k,l,W)=(-1)^{l_{1}}e^{-\frac{1}{2}\left(  1-\frac{1}%
{N}\right)  W}%
\]
because $l_{1}=l(1-1/N)$.

\section{Summary\label{s6}}

In this article we investigate the rapidity clustering of exact multi-particle
form factors of the $SU(N)$ chiral Gross-Neveu model. For some examples of
local fields, in particular, the Noether current, the energy momentum tensor,
the fundamental spinor field etc, we explicitly demonstrate the clustering or
factorization phenomena. In a forthcoming paper we will consider the form
factor of the Noether current in a special form, in order to connect the
asymptotic clustering with Bjorken scattering.

\paragraph{Acknowledgment:}

It is our pleasure to thank G. Savvidy for helpful discussion on Bjorken
scattering. One of authors (M.K.) thanks Felix von Oppen for hospitality at
the Institut f\"{u}r Theoretische Physik, Freie Universit\"{a}t Berlin. H.B. was
partially supported by the Armenian State Committee of Science in the
framework of the research projects 18T-1C-340,191T-008 and 20RF-142. A.F.
acknowledges support from DAAD (Deutscher Akademischer Austauschdienst) and
CNPq (Conselho Nacional de Desenvolvimento Cient\'{\i}fico e Tecnol\'{o}gico).

\appendix

\section*{Appendix}

\addcontentsline{toc}{part}{Appendix}

\renewcommand{\theequation}{\mbox{\Alph{section}.\arabic{equation}}} \setcounter{equation}{0}

\section{Some lemmata}

\begin{lemma}
\label{l1}For $n=0\operatorname{mod}N,~n_{j}=n\left(  1-\frac{j}{N}\right)  $
and $p(\underline{\theta},\underline{\underline{z}})=$ independent of
$z_{i}^{(j)}$ the K-function vanishes%
\begin{equation}
K_{\underline{\alpha}}(\underline{\theta})=\int d\underline{\underline{z}%
}\tilde{h}\,(\underline{\theta},\underline{\underline{z}})\tilde{\Phi
}_{\underline{\alpha}}(\underline{\theta},\underline{\underline{z}})=0
\label{F0}%
\end{equation}

\end{lemma}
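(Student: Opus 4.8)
The plan is to read off from the weight formula (\ref{w}) what the hypothesis $n_j=n(1-j/N)$ means representation-theoretically, and then exploit the nested structure of (\ref{PHI})--(\ref{hh}) to reduce everything to a single $SU(2)$ computation. First I would compute the weight: with $n_j=n(1-j/N)$ one gets $n_{j-1}-n_j=n/N$ for every $j$ and $n_{N-1}=n/N$, so by (\ref{w}) the weight is $w=(n/N)(1,\dots,1)$, i.e. the state lies in the vacuum (singlet) sector with $w^{\mathcal O}=0$. A $z$-independent p-function then carries none of the operator data, so $K_{\underline\alpha}$ is the off-shell Bethe integral of a pure "identity-type" insertion in the singlet sector; the physical expectation is that such a multi-particle matrix element of (a multiple of) the identity vanishes, and the task is to prove this directly from (\ref{F0}).

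The main reduction I would make is to isolate the innermost integration. In (\ref{PHI}) the innermost factor of the Bethe state is $(\tilde\Phi^{(N-2)})_{\underline\alpha_{N-2}}^{\underline\alpha_{N-1}}$ with $\underline\alpha_{N-1}=(N,\dots,N)$ --- an $SU(2)$-type co-vector --- while the innermost h-factor in (\ref{hh}) is $\tilde h(\underline z^{(N-2)},\underline z^{(N-1)})$, of the form $\prod\tilde\phi\prod\tau$ in (\ref{h}). Since the numerator is $z$-independent, the $\underline z^{(N-1)}$-integral is itself an instance of (\ref{F0}) for $SU(2)$: the outer variables $\underline z^{(N-2)}$ (of which there are $n_{N-2}=2n/N$, an even number) play the role of rapidities, $n_{N-1}=n/N=\tfrac12 n_{N-2}$ is exactly the half-filling singlet count, and the weight function is constant. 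Hence it suffices to prove the $SU(2)$ case: if the innermost integral vanishes for every value of the outer variables, then integrating the (identically zero) result over the remaining contours makes the whole multiple integral in (\ref{F0}) vanish.

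For the $SU(2)$ base case I would treat the $\underline z$-integrations by contour deformation across the strip of Fig.~\ref{f5.1}. The factors $\tau(z_i-z_j)$ of (\ref{tau}) are even, and $\tilde\phi$ of (\ref{phi}) is quasi-periodic under $z\to z\pm 2\pi i$ up to an explicit rational factor; combined with the pole pattern in the figure, one can move each contour and rewrite the integral as a sum of residues at $z_j=\theta_i$ and their descendants. I would then argue that, because the co-vector (\ref{Phi0}) is the singlet highest-weight combination and the numerator is constant, these residues organise into a pairwise-cancelling (telescoping) sum, so that the total $SU(2)$ integral is zero. The precise counting $n_j=n(1-j/N)$ enters here as exactly the condition under which the large-$|z|$ behaviour of $\prod\tilde\phi\prod\tau$ lets the boundary arcs be discarded.

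The hard part will be precisely this last step: controlling the asymptotics so the contour manipulation is legitimate, and establishing genuine residue cancellation rather than a mere recursion in particle number. Unlike the Theorems above, where an overall constant can be fixed afterward by form factor equation (iii), here it is the vanishing itself that must be demonstrated, so there is no free constant available to absorb bookkeeping errors; every residue must be tracked. A secondary technical point is verifying that the innermost reduction really produces an exact instance of (\ref{F0}) at $SU(2)$ --- that is, that the constancy of the numerator in $\underline z^{(N-1)}$ and the half-filling count survive the peeling --- which follows from the global $z$-independence of $p^{\mathcal O}$ but should be stated carefully.
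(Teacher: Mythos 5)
Your weight computation is correct ($w=(n/N)(1,\dots,1)$, i.e.\ the vacuum sector), and the idea of peeling the nesting down to a single $SU(2)$-type statement is natural, but the reduction fails at its central step: the innermost $\underline{z}^{(N-1)}$-integration is \emph{not} an instance of the lemma for $SU(2)$, and it does \emph{not} vanish. Every level of the nesting in (\ref{hh}) is built from the same function $\tilde{\phi}$ of (\ref{phi}), which carries the $SU(N)$ parameter $1-\tfrac{1}{N}$ rather than $\tfrac12$; the half-filling count $n_{N-1}=\tfrac12 n_{N-2}$ alone does not turn the innermost block into an $SU(2)$ K-function. Concretely, already for the minimal case $n=N$ (two sites and one magnon at the innermost level) the $\Gamma$-integral identity used in the proof of (\ref{r1}) evaluates the innermost integral to something proportional to $1/\Gamma\left(1-\tfrac{2}{N}\right)$, which vanishes only for $N=2$. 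One can also see the failure without any computation: for $N\geq3$ the p-function $p^{\phi}$ of (\ref{pphi}) is independent of $\underline{z}^{(N-1)}$ and has exactly the occupation numbers $n_{j}=n(1-j/N)$ of the lemma, so if the innermost integral vanished identically in the outer variables then every form factor of $\phi$ with $n=0\operatorname{mod}N$ would be zero, contradicting Theorem \ref{t1a} and the explicit nonzero particle--anti-particle form factor of $\phi$.

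The actual mechanism, visible in the paper's Proposition \ref{p1} and the iteration (\ref{r1})--(\ref{r1a}), is the opposite of what you assume: the inner levels evaluate to \emph{nonzero} closed forms, $L_{ca}^{(j)}\propto c_{N-2}\cdots c_{j}\,\tilde{\chi}_{j}$ with $c_{j}\propto 1/\Gamma(j/N)\neq0$ for $0<j<N-1$, and the zero appears only when the outermost level is reached, because $c_{0}\propto 1/\Gamma(0)=0$. So the vanishing is a property of the fully assembled nesting (equivalently of the outermost, $\underline{\theta}$-level integration in the vacuum sector), not of the innermost block, and a proof must propagate the closed-form evaluation outward rather than inward. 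Your $SU(2)$ base-case sketch (contour deformation and residue cancellation) is in the spirit of the sine-Gordon argument of \cite{BK} that the paper cites for $N=2$, and is not where the difficulty lies; but as it stands your induction has no valid base, because the statement you would induct from is false for every $N\geq3$.
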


For $SU(2)$ the proof of this lemma is quite analog to that for the
Sine-Gordon model in \cite{BK}. For general $N$ we present an example (see
Proposition \ref{p1}).

\begin{lemma}
\label{l3}For $SU(2)$ and $m=n/2$%
\begin{equation}
K_{\underline{\alpha}}(\underline{\theta})=\frac{1}{m!}\int_{\mathcal{C}%
_{\underline{\theta}}}dz_{1}\cdots\int_{\mathcal{C}_{\underline{\theta}}%
}dz_{m}\,\tilde{h}(\underline{\theta},{\underline{z}})\left(  -\sum_{i=1}%
^{m}z_{i}\right)  \,\tilde{\Psi}_{\underline{\alpha}}(\underline{\theta
},{\underline{z}})=\left(  -1\right)  ^{m}8\pi^{5}i\left(  K^{J}%
(\underline{\theta})M_{1}^{2}\right)  _{\underline{\alpha}} \label{Kz}%
\end{equation}
which is a non-highest weight K-function and%
\begin{equation}
K_{\underline{\alpha}}(\underline{\theta})=i\pi\frac{1}{m!}\int_{\mathcal{C}%
_{\underline{\theta}}}dz_{1}\cdots\int_{\mathcal{C}_{\underline{\theta}}%
}dz_{m}\,\tilde{h}(\underline{\theta},{\underline{z}})\sum_{j=1}^{m}%
\,\tilde{\Phi}_{\underline{\alpha}}^{D_{j}}(\underline{\theta},{\underline{z}%
})=-\left(  -1\right)  ^{m}8\pi^{5}iK_{\underline{\alpha}}^{J}(\underline
{\theta}) \label{Kbb}%
\end{equation}
where $\tilde{\Phi}_{\underline{\hat{\alpha}}}^{D_{j}}(\underline{\hat{\theta
}},\underline{\hat{z}})=\left(  \Omega{C}({\underline{\hat{\theta}}},\hat
{z}_{k})\dots{D}({\underline{\hat{\theta}}},\hat{z}_{j})\dots{C}%
({\underline{\hat{\theta}}},\hat{z}_{1})\right)  _{\underline{\hat{\alpha}}}$.
\end{lemma}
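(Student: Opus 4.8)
The plan is to treat both identities as reductions of an $m$-fold off-shell Bethe ansatz integral to the $(m-1)$-fold current K-function $K^J_{\underline{\alpha}}$, exploiting that for $SU(2)$ the nesting is trivial ($\tilde{\Psi}=\tilde{\Phi}$) and the monodromy matrix (\ref{T}) is a $2\times2$ matrix whose entries $\tilde{A},\tilde{B},\tilde{C},\tilde{D}$ obey the Yang--Baxter (RTT) commutation relations. The central mechanism is that as a spectral parameter is sent to infinity the normalized S-matrix $\tilde{S}(\theta)$ tends to $\mathbf{1}$ (since $\tilde{b}\to1$, $\tilde{c}\to0$), so that the leading off-diagonal $1/z$ term of $\tilde{C}(\underline{\theta},z)$ reproduces, up to normalization, the global $SU(2)$ generator $M_1^2$ already used in the proof of Theorem \ref{t1} through $(F^{J}M_1^2)_{\underline{\alpha}}=F^{J^{12}}_{\underline{\alpha}}+F^{J^{21}}_{\underline{\alpha}}$. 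Both claims then amount to showing that the particular integrands on the left-hand sides, after integration over $\mathcal{C}_{\underline{\theta}}$, equal $K^J$ (respectively $K^J M_1^2$) with the current p-function (\ref{pJ}) regenerated by the asymptotics.

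For (\ref{Kz}) I would single out one of the $m$ integration variables and deform its contour toward infinity. Since $\tilde{h}(\underline{\theta},\underline{z})$ is symmetric in the $z_i$, the insertion $\big(-\sum_i z_i\big)$ is the totally symmetric degree-one weight, and along the deformed contour it is dominated by the escaping variable. Using the asymptotics of $\tilde{\phi}$ in (\ref{phi}) and $\tau$ in (\ref{tau}), the $\tilde{h}$-factors attached to that variable collapse to the characteristic factor $\prod_i e^{z_i}/\sum_i e^{-\theta_i}$ of the current p-function, the escaping $\tilde{C}$-operator acting on the reference covector $\Omega$ becomes the global generator $M_1^2$, and the residual $(m-1)$-fold integral is precisely $K^J$. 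The overall constant $(-1)^m 8\pi^5 i$ is then fixed by the explicit residues of $\tilde{\phi}$ and $\tau$ picked up in the deformation.

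For (\ref{Kbb}) the natural route is the algebraic Bethe ansatz itself. In $\tilde{\Phi}^{D_j}_{\underline{\alpha}}=\big(\Omega\, C(\underline{\theta},z_m)\cdots D(\underline{\theta},z_j)\cdots C(\underline{\theta},z_1)\big)_{\underline{\alpha}}$ I would commute $\tilde{D}(\underline{\theta},z_j)$ to the right through the $\tilde{C}$'s via the RTT relations. The ``wanted'' term, in which $\tilde{D}(z_j)$ reaches $\Omega$ and acts by its eigenvalue $\prod_i\tilde{b}(\theta_i-z_j)$, and the ``unwanted'' root-shifting terms can be organized using the symmetry of $\tilde{h}$ under $z_i\leftrightarrow z_j$. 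Summing over $j$ and closing contours, the unwanted terms localize on the poles of $\prod_i\tilde{\phi}(\theta_i-z_j)$ enclosed by $\mathcal{C}_{\underline{\theta}}$ (the bullets of Fig.~\ref{f5.1}); this removes one integration, rebuilds the current p-function, and leaves $-(-1)^m 8\pi^5 i\,K^J_{\underline{\alpha}}$, the prefactor $i\pi$ being the residue normalization.

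The main obstacle I anticipate is the contour and residue bookkeeping in (\ref{Kbb}): controlling exactly which poles of $\prod_i\tilde{\phi}(\theta_i-z_j)\prod_{k\neq j}\tau(z_j-z_k)$ lie inside $\mathcal{C}_{\underline{\theta}}$ after the $\tilde{D}$--$\tilde{C}$ commutations, and verifying that the shifted terms recombine, after symmetrization in the remaining variables, into the \emph{highest-weight} current Bethe state rather than a non-highest-weight admixture. This is precisely the $SU(2)$ step that parallels the sine-Gordon computation of \cite{BK} invoked for Lemma \ref{l1}, and the same machinery applied with the current p-function (\ref{pJ}) should yield both identities together with their constants; I would finally cross-check the constants against the one-particle--one-antiparticle data (\ref{FJ2}), as is done for the special-case residue computations elsewhere in the paper.
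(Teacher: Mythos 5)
The paper does not actually write out a proof of Lemma \ref{l3}: it only remarks that for $SU(2)$ the argument is ``similar to the one of Lemma \ref{l1}'' (itself deferred to the sine-Gordon computation of \cite{BK}) and points to Propositions \ref{p2} and \ref{p3} for general $N$. The computational backbone behind those propositions lives in Appendix B.5: the $u$-inserted integrals $L_{uxy}^{(j)}$ of (\ref{d2}) are evaluated by the \emph{exact} algebraic decomposition $u\,\tilde{c}(z-u)=\tfrac{2i\pi}{N}+\tilde{c}(z-u)\bigl(z-\tfrac{2i\pi}{N}\bigr)$ followed by the $\Gamma$-function integral stated at the start of that proof, giving (\ref{r3})--(\ref{r3c}) --- which is precisely the statement that a linear insertion of the integration variable reduces to a current-type $K$-function. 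Your route for (\ref{Kbb}) (commuting $\tilde{D}(z_j)$ through the $\tilde{C}$'s by the Yang--Baxter algebra, letting the wanted term act on $\Omega$, and performing the freed integration explicitly) is consistent with this machinery and is essentially the intended mechanism, even if the residue bookkeeping is left open.

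The genuine gap is in your argument for (\ref{Kz}). That equation is an exact identity at fixed, finite $\underline{\theta}$; there is no large parameter in it. Deforming one $z$-contour ``toward infinity'' does not change the value of the integral, and ``dominance by the escaping variable'' is an asymptotic notion with no content here: the integral is still carried entirely by the finite poles of $\prod_i\tilde{\phi}(\theta_i-z_j)\prod_{k\neq j}\tau(z_j-z_k)$ shown in Fig.~\ref{f5.1}. You are importing the expansion (\ref{4.3}) (where $z\,\tilde{C}(\underline{\theta},z)\to i\eta\,M_1^2$ as $z\to\infty$) into a situation where no variable is sent to infinity; to make it work one must promote that limit to an exact identity, i.e.\ split $z\,\tilde{C}(\underline{\theta},z)$ into a constant $M_1^2$-piece plus terms proportional to $\tilde{C}$ with $z$-independent coefficients (the matrix analogue of the $u\,\tilde{c}$ identity above), after which the leftover pieces are killed by Lemma \ref{l1} and the freed integration is done explicitly --- that explicit integration, not ``residues picked up in the deformation,'' is where $(-1)^m 8\pi^5 i$ comes from. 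Relatedly, you never explain how an $m$-fold integral with the polynomial insertion $-\sum z_i$ can equal the $(m-1)$-fold integral $K^J$ built on the exponential p-function (\ref{pJ}); asserting that the $\tilde{h}$-factors ``collapse to'' $\prod e^{z_i}/\sum e^{-\theta_i}$ by asymptotics of $\tilde{\phi}$ and $\tau$ is exactly the step that has no justification at finite rapidities. Finally, fixing the constant from the clustered two-particle data is legitimate only as a consistency check: Lemma \ref{l3} is what the paper uses to produce the $1/W$ terms of Theorem \ref{t1}, so the clustering formulas cannot be used as an independent input to its proof.
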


For $SU(2)$ the proofs are similar to the one of Lemma \ref{l1}. For general
$N$ see the proofs of Propositions \ref{p2} and \ref{p3}.

\setcounter{equation}{0}

\section{Examples of particle anti-particle form factors}

\label{sex}

\subsection{Bound states - Anti-particles}

The following is taken from \cite{BFK0,BFK1,BFK3}\footnote{See also
\cite{BKZ2} for $U(N)$ Bethe ansatz.}.

\subsubsection{Bound state S-matrix}

The S-matrix of a particle and an anti-particle (which is a bound state of
$N-1$ particles (\ref{bs}) \cite{KKS}) is
\begin{equation}
S_{\bar{\alpha}\beta}^{\delta\bar{\gamma}}(\theta)=(-1)^{N-1}\left(
\delta_{\bar{\alpha}}^{\bar{\gamma}}\delta_{\beta}^{\delta}\,b(\pi
i-\theta)+\mathbf{C}^{\delta\bar{\gamma}}\mathbf{C}_{\bar{\alpha}\beta}\,c(\pi
i-\theta)\right)  \label{Sb}%
\end{equation}
where the charge conjugation matrices are given by (\ref{C}).

\subsubsection{Bound state form factors}

The general form factor formula for $n$ particles $\underline{\alpha}$ and
$\bar{n}$ anti-particles $\underline{\bar{\delta}}$ is%
\[
F_{\underline{\alpha}\underline{\bar{\delta}}}(\underline{\theta}%
,\underline{\omega})=N_{n\bar{n}}F(\underline{\theta},\underline{\omega
})K_{\underline{\alpha}\underline{\bar{\delta}}}(\underline{\theta}%
,\underline{\omega})
\]
where
\begin{align*}
F(\underline{\theta},\underline{\omega})  &  =\left(  \prod_{1\leq i<j\leq
n}F(\theta_{ij})\right)  \left(  \prod_{1\leq i\leq n}\prod_{1\leq j\leq
\bar{n}}\bar{F}(\theta_{i}-\omega_{j})\right)  \left(  \prod_{1\leq
i<j\leq\bar{n}}F(\omega_{ij})\right) \\
K_{\underline{\alpha}\underline{\bar{\delta}}}(\underline{\theta}%
,\underline{\omega})  &  =\int_{\mathcal{C}_{\underline{\theta}}%
,\underline{\omega}}\underline{dz}\tilde{h}(\underline{\theta},\underline
{z})p(\underline{\theta},\underline{\omega},\underline{z})\tilde{\Psi
}_{\underline{\alpha}\underline{\bar{\delta}}}(\underline{\theta}%
,\underline{\omega},\underline{z})\\
\tilde{h}(\underline{\theta},\underline{z})  &  =\prod_{i=1}^{n}\prod
_{j=1}^{m}\tilde{\phi}(\theta_{i}-z_{j})\prod_{1\leq i<j\leq m}\tau(z_{ij})\\
\tilde{\phi}(\theta)  &  =\Gamma\left(  -\frac{\theta}{2\pi i}\right)
\Gamma\left(  1-\frac{1}{N}+\frac{\theta}{2\pi i}\right)
\end{align*}
with $\int_{\mathcal{C}_{\underline{\theta}},\underline{\omega}}\underline
{dz}=\frac{1}{m!}\int_{\mathcal{C}_{\underline{\theta}},\underline{\omega}%
}dz_{1}\dots\int_{\mathcal{C}_{\underline{\theta}},\underline{\omega}}dz_{m}$.
The minimal F-function for a particle and an anti-particle $\bar{F}\left(
\theta\right)  $ is defined in (\ref{Fbar}) and satisfies (\ref{Fbb}) and the
asymptotic behavior (\ref{Fba}). The 0-level Bethe ansatz state writes in
terms of the basic states as%
\[
\tilde{\Psi}_{\underline{\alpha}\underline{\bar{\delta}}}(\underline{\theta
},\underline{\omega},\underline{z})=L_{\underline{\beta}\underline
{\bar{\epsilon}}}(\underline{z},\underline{\omega})\tilde{\Phi}_{\underline
{\alpha}\underline{\bar{\delta}}}^{\underline{\beta}\underline{\bar{\epsilon}%
}}(\underline{\theta},\underline{\omega},\underline{z})\,
\]
The function $L_{\underline{\beta}(\sigma)}(\underline{z},\underline{\omega
}),~(\sigma)=(1,\sigma_{2},\dots,\sigma_{N-1})$ is given by the 1-level
off-shell Bethe ansatz, etc. The final formula is%
\begin{align}
K_{\underline{\alpha}\underline{\bar{\delta}}}^{\mathcal{O}}(\underline
{\theta},\underline{\omega})  &  =\int d\underline{z}_{1}\dots\int
d\underline{z}_{N-1}\tilde{h}\,(\underline{\theta},\underline{\omega
},\underline{\underline{z}})p^{\mathcal{O}}(\underline{\theta},\underline
{\omega},\underline{\underline{z}})\tilde{\Phi}_{\underline{\alpha}%
\underline{\bar{\delta}}}(\underline{\theta},\underline{\omega},\underline
{\underline{z}})\,\label{Knn}\\
\tilde{h}\,(\underline{\theta},\underline{\omega},\underline{\underline{z}})
&  =\prod_{j=0}^{N-2}\tilde{h}(\underline{z}_{j},\underline{z}_{j+1}%
)\prod_{i=1}^{\bar{n}}\prod_{j=1}^{n_{N-1}}\tilde{\chi}(\omega_{i}%
-z_{j}^{(N-1)})\label{hnn}\\
\tilde{\chi}(\omega)  &  =\Gamma\left(  \frac{1}{2}+\frac{\omega}{2\pi
i}\right)  \Gamma\left(  \frac{1}{2}-\frac{1}{N}-\frac{\omega}{2\pi i}\right)
. \label{chi}%
\end{align}
The complete Bethe ansatz state is%
\begin{equation}
\tilde{\Phi}_{\underline{\alpha}\underline{\bar{\delta}}}(\underline{\theta
},\underline{\omega},\underline{\underline{z}})=\tilde{\Phi}^{(N-2)}%
\,_{\underline{\alpha}_{N-2}\underline{\bar{\delta}}_{N-2}}^{\underline
{\alpha}_{N-1}\underline{\bar{\delta}}_{N-1}}(\underline{z}_{N-2}%
,\underline{\omega},\underline{z}_{N-1})\dots\tilde{\Phi}^{(1)}\,_{\underline
{\alpha}_{1}\underline{\bar{\delta}}_{1}}^{\underline{\alpha}_{2}%
\underline{\bar{\delta}}_{2}}(\underline{z}_{1},\underline{\omega}%
,\underline{z}_{2})\tilde{\Phi}_{\underline{\alpha}\underline{\bar{\delta}}%
}^{\underline{\alpha}_{1}\underline{\bar{\delta}}_{1}}(\underline{\theta
},\underline{\omega},\underline{z}_{1}) \label{Bnn}%
\end{equation}
where $\underline{\alpha}_{N-1}=(N,\dots,N)$ and $\underline{\bar{\delta}%
}_{N-1}=(\bar{N},\dots,\bar{N})$ consists of highest weight bound states
$\bar{N}=(1,2,\dots,N-1).$ The state of level $j$ is given by monodromy
matrices as
\[
\tilde{\Phi}^{(j)}\,_{\underline{\alpha}_{j}\underline{\bar{\delta}}_{j}%
}^{\underline{\alpha}_{j+1}\underline{\bar{\delta}}_{j+1}}(\underline{z}%
_{j},\underline{\omega},\underline{z}_{j+1})=\tilde{T}^{(j)}\,_{\underline
{\alpha}_{j}\underline{\bar{\delta}}_{j},\underline{j+1}}^{\underline{\alpha
}_{j+1},\underline{j+1}\underline{\bar{\delta}}_{j+1}}(\underline{z}%
_{j},\underline{\omega},\underline{z}_{j+1})=%
\begin{array}
[c]{c}%
\unitlength3mm\begin{picture}(9,9) \thicklines \put(9,6){\oval(18.4,8.4)[lb]} \put(9,6){\oval(18,8)[lb]} \put(4,1){\line(0,1){3}} \put(3.8,1){\line(0,1){3}} \put(3.8,0){$\underline\alpha_j$} \put(8,1){\line(0,1){5}} \put(7.8,1){\line(0,1){5}} \put(7.8,-.2){$\underline{\bar\delta}_j$} \put(-.5,7){$\underline\alpha_{j+1}$} \put(2.8,4.6){$\underline{j+1}$} \put(7.,7){$\underline{\bar\delta}_{j+1}$} \put(9.2,1.8){$\underline{j+ 1}$} \put(4.4,3){$\underline z_j$} \put(6.6,3){$\underline\omega$} \put(.4,3.5){$\underline z_{j+1}$} \end{picture}
\end{array}
\]
If there are $n$ particles and $\bar{n}$ anti-particles the $SU(N)$ weights
are \cite{BKZ2,BFK3}%
\begin{align}
w  &  =\left(  n-n_{1},n_{1}-n_{2},\dots,n_{N-2}-n_{N-1},n_{N-1}-\bar
{n}\right)  +\bar{n}(1,\dots,1)\label{wb}\\
&  =w^{\emph{O}}+L(1,\dots,1)\nonumber
\end{align}
where $n_{1}=m,n_{2},\dots$ are the numbers of $C$ operators in the various
levels of the nesting, $w^{\emph{O}}$ is the weight vector of the operator
$\mathcal{O}$ and $L=0,1,2,\dots$.

\subsection{Lemma \ref{l1} for general $N$ and $n=\bar{n}=1$}

\begin{proposition}
\label{p1}The K-function given by (\ref{K}) with p-function $=1$
\begin{equation}
K_{\alpha\bar{\delta}}(\theta,\omega)=\int d\underline{\underline{z}}\tilde
{h}\,(\underline{\theta},\underline{\omega},\underline{\underline{z}}%
)\tilde{\Phi}_{\alpha\bar{\delta}}(\theta,\omega,\underline{\underline{z}})=0
\end{equation}
for $n=\bar{n}=1$.
\end{proposition}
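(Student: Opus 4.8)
The plan is to exploit the representation-theoretic content hidden in the hypothesis $p=1$: with a trivial p-function the operator is effectively the identity, so $K_{\alpha\bar\delta}(\theta,\omega)$ is the amputated matrix element of $\mathbf 1$ between the vacuum and a particle--anti-particle state, which ought to vanish by orthogonality. To make this precise I would first read off the number of integrations from the weight formula (\ref{wb}). For $n=\bar n=1$ and $w^{\mathcal O}=(0,\dots,0)$ the only consistent solution is $L=1$ with $n_1=\dots=n_{N-1}=1$, i.e. exactly one Bethe rapidity at each of the $N-1$ levels and total weight $w=(1,\dots,1)$, the vacuum (singlet) sector. With a single magnon per level the $\tau$-factors in (\ref{hnn}) are absent and the $h$-function collapses to a single chain,
\[
\tilde h(\theta,\omega,\underline{\underline z})=\tilde\phi(\theta-z^{(1)})\Big(\prod_{j=1}^{N-2}\tilde\phi(z^{(j)}-z^{(j+1)})\Big)\tilde\chi(\omega-z^{(N-1)}),
\]
where $z^{(j)}$ denotes the single level-$j$ rapidity, while the complete Bethe state (\ref{Bnn}) reduces to a product of one $C$-operator per level.

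Next I would strip off the tensor structure. Since the state lies in the singlet sector, $SU(N)$ covariance forces $K_{\alpha\bar\delta}(\theta,\omega)=\mathbf C_{\alpha\bar\delta}\,k(\theta-\omega)$ with $\mathbf C$ the charge-conjugation matrix (\ref{C}), so it suffices to show that the scalar $k$ vanishes. I would isolate $k$ by contracting with $\mathbf C^{\alpha\bar\delta}$: using $\mathbf C_{\beta\bar\alpha}\mathbf C^{\bar\alpha\gamma}=\delta_\beta^\gamma$ and the intertwining of $\mathbf C$ with $\tilde S$ encoded in the bound-state S-matrix (\ref{Sb}), the open chain of $C$-operators in (\ref{Bnn}) closes up, the nested index sums telescope, and one is left with a purely scalar $(N-1)$-fold contour integral over $z^{(1)},\dots,z^{(N-1)}$ along $\mathcal C_{\underline\theta}$.

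I would then evaluate this scalar integral one level at a time, working inward from $z^{(N-1)}$, closing each contour on the poles of $\tilde\phi$ and $\tilde\chi$ enclosed by $\mathcal C_{\underline\theta}$ (Fig.~\ref{f5.1}). Each single-variable residue transports the excitation to the neighbouring level, and the functional relation $\tau(z)=1/(\tilde\phi(z)\tilde\phi(-z))$ together with $a(\theta)a(-\theta)=1$ should organize the successive residues into cancelling pairs. The point is that, precisely because $p=1$ provides no $e^{z}$ growth (this is exactly the distinction drawn in Lemma~\ref{l1} between $p$ independent of and dependent on the $z^{(j)}$), the accumulated contributions cancel and the integral is zero.

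The main obstacle is this final cancellation for general $N$: controlling which poles each of the $N-1$ nested contours actually encloses, and proving that the resulting multiple residue sum vanishes identically rather than merely simplifying. The natural way to organize it is by induction on the number of levels, since the innermost integration reproduces a K-function of the same type with one fewer colour available; the base case $N=2$ is then the $SU(2)$ statement of Lemma~\ref{l1}, established as in the Sine-Gordon model \cite{BK}. If the telescoping produced at each inductive step can be matched to the vanishing of the reduced integral, the Proposition follows.
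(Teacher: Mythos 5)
Your setup is sound and matches the paper's: the weight formula (\ref{wb}) forces $n_j=1$ at every level, the $\tau$-factors drop out, and the tensor structure collapses to $K_{\alpha\bar\delta}\propto\mathbf{C}_{\alpha\bar\delta}$ (the paper obtains this from the level-by-level recursion (\ref{r4}) rather than from covariance, but that difference is harmless). The genuine gap is the mechanism you propose for the vanishing of the remaining scalar integral. You suggest that the nested residues organize into cancelling pairs via $\tau(z)=1/(\tilde\phi(z)\tilde\phi(-z))$ and $a(\theta)a(-\theta)=1$; but with one rapidity per level there are no $\tau$-factors left for that relation to act on, and in fact no pairwise cancellation occurs anywhere. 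What actually happens is that each level-$j$ integration is a first-Barnes-lemma integral that evaluates in closed form to a \emph{nonzero} multiple of a single Gamma-function product, $L_{ca}^{(j)}(z,\omega)=(-1)^{N-1}c_{N-2}\cdots c_j\,\tilde\chi_j(\omega-z)$ with $c_j=4\pi^2\,\Gamma\!\left(1-\tfrac1N\right)\Gamma\!\left(\tfrac{j+1}{N}\right)/\Gamma\!\left(\tfrac jN\right)$, see (\ref{r1}). The zero appears only once, at the very last ($j=0$) integration, because $c_0\propto 1/\Gamma(0)=0$, which is (\ref{r1a}).

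This also undermines your proposed induction: since the intermediate reduced integrals $L_{ca}^{(j)}$ for $j\geq1$ do \emph{not} vanish, you cannot match "the telescoping produced at each inductive step to the vanishing of the reduced integral" — there is nothing to telescope and nothing vanishes until the final step. Likewise the induction runs over the nesting level $j$ at fixed $N$, not over $N$ itself, so the $SU(2)$ case of Lemma \ref{l1} is not the relevant base case. To close the gap you need the explicit Barnes evaluation of the single-variable contour integrals (the formula $\frac{1}{2\pi i}\int dz\,\Gamma(a-z)\Gamma(b-z)\Gamma(c+z)\Gamma(d+z)=-\Gamma(c+a)\Gamma(d+a)\Gamma(c+b)\Gamma(d+b)/\Gamma(c+d+a+b)$ on the contour of Fig.~\ref{f5.1}), applied level by level, and the observation that the accumulated constant contains the factor $1/\Gamma(0)$.
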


\begin{proof}
The weight formula (\ref{wb}) implies that $n_{j}=1$ for $j=1,\dots,N-1$ and
the L-function of level $j$ is%
\[
L_{\beta\bar{\gamma}}^{(j)}(z,\omega)=\int_{\mathcal{C}}du\,\tilde{\phi
}\left(  z-u\right)  L_{\beta^{\prime}\bar{\gamma}^{\prime}}^{(j+1)}%
(u,\omega)\left(  T_{\beta\bar{\gamma},j+1}^{\beta^{\prime},j+1\bar{\gamma
}^{\prime}}(z,\omega,u)\right)  =\mathbf{C}_{\beta\bar{\gamma}}^{(j)}%
L_{ca}^{(j)}(z,\omega)
\]
where (\ref{r1}) - (\ref{r4}) have been used. For $j=0$%
\[
K_{\alpha\bar{\delta}}(\underline{\theta},\underline{\omega})=\mathbf{C}%
_{\alpha\bar{\delta}}L_{ca}^{(0)}(\theta,\omega)=0
\]
by (\ref{r1a}).
\end{proof}

\subsection{Theorem \ref{t1} for general $N$ and $n=\bar{n}=2,~k=\bar{k}=1$}

We consider form factors of the pseudo potential $J(x)$ for particles and
anti-particles. Formula (\ref{wb}) means, generalizing (\ref{nJ})%
\begin{equation}
n_{j}=n\left(  1-j/N\right)  +\bar{n}j/N-1\,. \label{nJb}%
\end{equation}
and the p-function is \cite{BFK3}%
\begin{equation}
p^{J}(\underline{\theta},\underline{\omega},\underline{z},\underline
{z}^{(N-1)})=\frac{\left(  \prod e^{\frac{1}{2}z_{i}^{(1)}}\right)  \left(
\prod e^{\frac{1}{2}z_{i}^{(N-1)}}\right)  \left(  \prod e^{-\frac{1}{2}%
\theta_{i}}\right)  \left(  \prod e^{-\frac{1}{2}\omega_{i}}\right)  }{\sum
e^{-\theta_{i}}+\sum e^{-\omega_{i}}} \label{pJb}%
\end{equation}
with the asymptotic behavior%
\begin{equation}
p^{J}(\underline{\theta}_{W},\underline{\omega}_{W},\underline{\underline{z}%
}_{W})\rightarrow e^{-\frac{1}{2}W\left(  k-k_{1}-k_{N-1}\right)  }\left(
{\textstyle\prod} e^{-\frac{1}{2}\hat{\theta}}\right)  \left(
{\textstyle\prod} e^{\frac{1}{2}\hat{z}^{(1)}}\right)  \left(
{\textstyle\prod} e^{\frac{1}{2}\hat{z}^{(N-1)}}\right)  \left(
{\textstyle\prod} e^{-\frac{1}{2}\hat{\omega}}\right)  p^{J}(\underline
{\check{\theta}},\underline{\underline{\check{z}}}). \label{pJab}%
\end{equation}
In particular for $n=\bar{n}=2$ and $k=\bar{k}=1$ we prove the proposition:

\begin{proposition}
\label{p2}The form factor of the current for $n=\bar{n}=2$ and $k=\bar{k}=1$
satisfies the clustering formula (\ref{FJ}) in the form%
\begin{multline}
F_{\underline{\alpha}\underline{\bar{\delta}}}^{J^{\beta\underline
{\bar{\epsilon}}}}(\underline{\theta}_{W},\underline{\omega}_{W}%
)\overset{W\rightarrow\infty}{\rightarrow}i\eta W^{-1}\left(  \mathbf{C}%
_{\gamma\bar{\kappa}}F_{\hat{\alpha}\hat{\bar{\delta}}}^{J^{\beta\bar{\kappa}%
}}(\hat{\theta},\hat{\omega})F_{\check{\alpha}\check{\bar{\delta}}}%
^{J^{\gamma\underline{\bar{\epsilon}}}}(\check{\theta},\check{\omega
})-\mathbf{C}_{\gamma\bar{\kappa}}F_{\hat{\alpha}\hat{\bar{\delta}}%
}^{J^{\gamma\underline{\bar{\epsilon}}}}(\hat{\theta},\hat{\omega}%
))F_{\check{\alpha}\check{\bar{\delta}}}^{J^{\beta\bar{\kappa}}}(\check
{\theta},\check{\omega})\right) \label{Fexa}\\
=-2\eta W^{-1}f_{abc}F_{\hat{\alpha}\hat{\bar{\delta}}}^{J_{b}}(\hat{\theta
},\hat{\omega})F_{\check{\alpha}\check{\bar{\delta}}}^{J_{c}}(\check{\theta
},\check{\omega}).
\end{multline}

\end{proposition}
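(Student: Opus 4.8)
The plan is to follow the two-step template of the proof of Theorem~\ref{t1}, now executed with the explicit particle--antiparticle machinery of (\ref{Knn})--(\ref{Bnn}). First I would write the highest-weight form factor $F^{J^{1\bar N}}_{\underline{\alpha}\underline{\bar\delta}}(\underline\theta_W,\underline\omega_W)$ as the nested integral (\ref{Knn}) with the current p-function (\ref{pJb}), shifting one particle $\theta_1\to\hat\theta+W$, one anti-particle $\omega_1\to\hat\omega+W$, and $k_j$ of the level-$j$ integration variables. The weight formula (\ref{nJb}) gives $n_j=1$ for every $j=1,\dots,N-1$, so each $k_j\in\{0,1\}$; the admissible distribution is fixed by the large-$W$ exponential of $F\,\tilde h\,p^J$, which I would read off from the asymptotics of $F$, of $\tilde h$ (including the extra $\tilde\chi$-factors of (\ref{hnn}) that tie the $\omega_i$ to $z^{(N-1)}$), and of $p^J$ in (\ref{pJab}). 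For the least-suppressed configuration the integrand factorizes into a hatted block and a checked block exactly as in (\ref{F0a}).

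Second, I would argue that the genuine leading term vanishes: for the selected $k_j$ the hatted block reduces to a one-particle--one-antiparticle K-function with $z$-independent p-function, so Proposition~\ref{p1} sends it to zero and $F^{J}_{\underline\alpha\underline{\bar\delta}}(\underline\theta_W,\underline\omega_W)=O(W^{-1})$.

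The heart of the proof is the $1/W$ coefficient, which splits into the same two contributions as in Theorem~\ref{t1}. Contribution (A) arises from the subleading term of the h-function: by Proposition~\ref{p1} only its $\hat z$-dependent piece survives, turning the hatted block into the non-highest-weight K-function of the type in Lemma~\ref{l3} (the general-$N$ version of which I would evaluate here), i.e. a factor $(F^{J}(\hat\theta,\hat\omega)M_1^2)$ multiplying the checked current form factor. Contribution (B) arises from the subleading term of the nested Bethe state (\ref{Bnn}), in which one $C$-operator along the chain of levels is replaced by a $D$-operator; again only the first such term contributes by Proposition~\ref{p1}, giving the complementary product $F^{J}(\hat\theta,\hat\omega)\,(F^{J}(\check\theta,\check\omega)M_1^2)$. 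Adding (A) and (B), the relative minus sign and the charge-conjugation contractions $\mathbf{C}_{\gamma\bar\kappa}$ assemble precisely the antisymmetric bilinear $\mathbf{C}_{\gamma\bar\kappa}\big(F^{J^{\beta\bar\kappa}}_{\hat\alpha\hat{\bar\delta}}F^{J^{\gamma\underline{\bar\epsilon}}}_{\check\alpha\check{\bar\delta}}-F^{J^{\gamma\underline{\bar\epsilon}}}_{\hat\alpha\hat{\bar\delta}}F^{J^{\beta\bar\kappa}}_{\check\alpha\check{\bar\delta}}\big)$ with prefactor $i\eta W^{-1}$; the overall constant $i\eta$ is pinned down either by matching against the explicit 1p--1ap form factors (\ref{FJ2}) or, as elsewhere, by consistency with form factor equation (iii). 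The second line of (\ref{Fexa}) then follows from (\ref{JJ}) and the $SU(N)$ identities (\ref{TT}), exactly as in the Equivalence step (\ref{Equi}).

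I expect the main obstacle to be the careful bookkeeping of the $1/W$ term through the complete nesting. Because all $n_j=1$, the hatted/checked split and the $C\to D$ replacement of (B) must be propagated consistently through every level $j=1,\dots,N-1$, while the h-function now carries the $\tilde\chi$-factors coupling anti-particle rapidities to the top level. The delicate point is to show that the charge-conjugation contractions from (A) and (B) combine into exactly the stated antisymmetric current bilinear, with no leftover terms, so that both the structure and the constant $i\eta$ emerge cleanly.
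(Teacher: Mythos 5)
Your proposal follows essentially the same route as the paper's proof: fix $k_{j}=1$, $l_{j}=0$ from the exponential counting, kill the leading order via Proposition~\ref{p1}, and then assemble the $1/W$ term from the two subleading contributions of the h-function ($\tilde h_{1}$) and of the nested Bethe state ($\tilde\Phi_{1}$), combined through the identity $\mathbf{C}_{\delta\bar{\kappa}}^{(1)}+\mathbf{C}_{1\bar{1}}\delta_{\delta}^{1}\delta_{\bar{\kappa}}^{\bar{1}}=\mathbf{C}_{\delta\bar{\kappa}}$. The only cosmetic difference is that the paper extracts the prefactor $i\eta W^{-1}$ directly from the explicit $L$-function recursions of Appendix~\ref{sex} rather than by matching to the two-particle form factor, but this does not change the argument.
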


\begin{proof}
The exponential behavior (\ref{expJ}) implies for $n=\bar{n}=2$ and $k=\bar
{k}=1$ that $k_{j}=1$ and $l_{j}=0$ for $j=1,\dots,N-1.$ We investigate for
$J=J^{1\bar{N}}$ ($\bar{N}=$ bound state $(1\dots N-1)$)%
\begin{equation}
K_{\underline{\alpha}\underline{\bar{\delta}}}^{J}(\underline{\theta
},\underline{\omega})=\int d\underline{\underline{z}}\tilde{h}\,(\underline
{\theta},\underline{\omega},\underline{\underline{z}})p^{J}(\underline{\theta
},\underline{\omega},\underline{\underline{z}})\tilde{\Phi}_{\underline
{\alpha}\underline{\bar{\delta}}}(\underline{\theta},\underline{\omega
},\underline{\underline{z}})\, \label{KJ}%
\end{equation}
We have proved in theorem \ref{t1} that in leading order
\[
F_{\underline{\alpha}}^{J}(\underline{\theta}_{W})\rightarrow0
\]

\noindent\textbf{Order }$\frac{1}{W}$: we have to apply the asymptotic
behavior of the h-function (\ref{ha}) and the Bethe state (\ref{PHIa}).

The \textbf{result for the contribution of }$h_{1}$ is%
\[
F_{\underline{\alpha}\underline{\bar{\delta}},h_{1}}^{J^{1\bar{N}}}%
(\underline{\theta}_{W},\underline{\omega}_{W})\rightarrow i\eta W^{-1}\left(
\mathbf{C}_{1\bar{1}}F_{\hat{\alpha}\hat{\bar{\delta}}}^{J^{1\bar{1}}}%
(\hat{\theta},\hat{\omega})-\mathbf{C}_{N\bar{N}}F_{\hat{\alpha}\hat
{\bar{\delta}}}^{J^{N\bar{N}}}(\hat{\theta},\hat{\omega})\right)
F_{\check{\alpha}\check{\bar{\delta}}}^{J^{1\bar{N}}}(\check{\theta}%
,\check{\omega})
\]
and the \textbf{result for the contribution of from }$\Phi_{1}$ is%
\begin{align*}
F_{\underline{\alpha}\underline{\bar{\delta}},\Phi_{1}}^{J^{1\bar{N}}%
}(\underline{\theta}_{W},\underline{\omega}_{W})  &  \rightarrow i\eta
W^{-1}\left(  \mathbf{C}_{\delta\bar{\kappa}}^{(1)}F_{\hat{\alpha}\hat
{\bar{\delta}}}^{J^{1\bar{\kappa}}}(\hat{\theta},\hat{\omega})F_{\check
{\alpha}\check{\bar{\delta}}}^{J^{\delta\bar{N}}}(\check{\theta},\check
{\omega})-\mathbf{C}_{\delta\bar{\kappa}}F_{\hat{\alpha}\hat{\bar{\delta}}%
}^{J^{\delta\bar{N}}}(\hat{\theta},\hat{\omega})F_{\check{\alpha}\check
{\bar{\delta}}}^{J^{1\bar{\kappa}}}(\check{\theta},\check{\omega})\right. \\
&  +\left.  \mathbf{C}_{N\bar{N}}F_{\hat{\alpha}\hat{\bar{\delta}}}%
^{J^{N\bar{N}}}(\hat{\theta},\hat{\omega})F_{\check{\alpha}\check{\bar{\delta
}}}^{J^{1\bar{N}}}(\check{\theta},\check{\omega})\right)  .
\end{align*}
Because $\mathbf{C}_{\delta\bar{\kappa}}^{(1)}+\mathbf{C}_{1\bar{1}}%
\delta_{\delta}^{1}\delta_{\bar{\kappa}}^{\bar{1}}=\mathbf{C}_{\delta
\bar{\kappa}}$ (see (\ref{r4a})) the relation (\ref{Fexa}) is proved.
\end{proof}

\subsection{Theorem \ref{t3} for general $N$ and $n=2,~\bar{n}=1,~k=\bar{k}%
=1$}

We consider form factors of the fundamental field $\psi(x)$ for particles and
anti-particles. Formula (\ref{wb}) means, generalizing (\ref{npsi})%
\begin{equation}
n_{j}=\left(  n-1\right)  \left(  1-j/N\right)  +\bar{n}j/N,~j=1,\dots,N-1.
\label{npsib}%
\end{equation}
and the p-function is
\begin{equation}
p^{\psi}(\underline{\theta},\underline{\omega},\underline{z})=e^{\frac{1}%
{2}n_{1}i\eta}e^{i\pi\bar{n}\left(  1-\frac{2}{N}\right)  }\left(
{\textstyle\prod\nolimits_{i=1}^{n}} e^{-\frac{1}{2}\left(  1-\frac{1}%
{N}\right)  \theta_{i}}\right)  {\textstyle\prod\nolimits_{i=1}^{\bar{n}}}
\left(  e^{-\frac{1}{2}\frac{\omega}{N}}\right)  \left(  {\textstyle\prod
\nolimits_{i=1}^{n_{1}}} e^{\frac{1}{2}z_{i}}\right)  \label{ppsib}%
\end{equation}
with the asymptotic behavior%
\begin{equation}
p^{\psi}(\underline{\theta}_{W},\underline{\omega}_{W},\underline{z}%
_{W})\rightarrow e^{-\frac{1}{2}W\left(  \left(  1-\frac{1}{N}\right)
k+\frac{1}{N}\bar{k}-k_{1}\right)  }p^{\psi}(\underline{\hat{\theta}%
},\underline{\hat{\omega}},\underline{\hat{z}})p^{\psi}(\underline
{\check{\theta}},\underline{\check{\omega}},\underline{\check{z}}).
\label{ppsiab}%
\end{equation}
In particular for $n=\bar{n}=2$ and $k=\bar{k}=1$ we prove the proposition:

\begin{proposition}
\label{p3}The form factor of the current for $n=2,~\bar{n}=1$ and $k=\bar
{k}=1$ satisfies the clustering formula (\ref{psiJpsi}) in the form%
\begin{equation}
F_{\underline{\alpha}\bar{\delta}}^{\psi^{\beta}}(\underline{\theta}%
_{W},\omega_{W})\rightarrow i\eta W^{-1}\left(  \mathbf{C}_{\gamma\bar{\kappa
}}F_{\hat{\alpha}\hat{\bar{\delta}}}^{J^{\beta\bar{\kappa}}}(\hat{\theta}%
,\hat{\omega})F_{\check{\alpha}}^{\psi^{\gamma}}(\check{\theta})\right)  .
\label{FpsiJpsi}%
\end{equation}

\end{proposition}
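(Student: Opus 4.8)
The plan is to mirror the proof of Proposition \ref{p2}, specializing the general mechanism of Theorem \ref{t3} to the bound-state form factor with $n=2$ particles, $\bar n=1$ anti-particle and shifts $k=\bar k=1$. First I would use the weight formula (\ref{npsib}), which for $n=2,\ \bar n=1$ gives $n_j=1$ at every level $j=1,\dots,N-1$, so that the K-function (\ref{Knn}) for $\psi=\psi^1$ carries exactly one integration variable $z^{(j)}$ per nesting level. Since the check group is to become the one-particle field $F^{\psi}_{\check\alpha}(\check\theta)$, which by (\ref{npsi}) needs $\check n_j=0$ integrations, while the hat group is to become the one-particle--one-antiparticle current $F^{J}_{\hat\alpha\hat{\bar\delta}}(\hat\theta,\hat\omega)$, which by (\ref{nJb}) also needs $\hat n_j=0$ integrations, the exponential behavior (\ref{exppsi}) together with the p-function asymptotics (\ref{ppsiab}) select the splitting $k_j=1,\ l_j=0$. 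Thus all integrations are shifted into the hat group, where they exceed the required number by exactly one per level; with this splitting the leading term factorizes into a hat part $\int\tilde h\,\tilde\Phi$ carrying no p-function and a check part carrying $p^{\psi}$, and the hat part vanishes by Proposition \ref{p1} (the bound-state form of Lemma \ref{l1}), so that $F^{\psi^1}_{\underline\alpha\bar\delta}(\underline\theta_W,\omega_W)\to0$ at leading order.

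The substance is the order-$1/W$ term, which I would extract exactly as in Theorem \ref{t3} from the two sources of subleading $z$-dependence in the hat group. The first comes from the h-function expansion (\ref{ha})--(\ref{h1}): the $\tilde h_1$ piece supplies the single missing $z$-dependent factor, and after performing the leftover integration it produces the diagonal contribution $i\eta\,W^{-1}\mathbf{C}_{1\bar1}F^{J^{1\bar1}}_{\hat\alpha\hat{\bar\delta}}(\hat\theta,\hat\omega)F^{\psi^1}_{\check\alpha}(\check\theta)$. The second comes from the subleading term of the Bethe-state expansion (\ref{PHIa}), in which one of the nested $C$-operators in (\ref{Bnn}) is replaced by a $D$-operator; this yields the off-diagonal contribution $i\eta\,W^{-1}\mathbf{C}^{(1)}_{\gamma\bar\kappa}F^{J^{1\bar\kappa}}_{\hat\alpha\hat{\bar\delta}}(\hat\theta,\hat\omega)F^{\psi^\gamma}_{\check\alpha}(\check\theta)$, where $\mathbf{C}^{(1)}$ is the reduced charge-conjugation tensor of the first nesting level. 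Adding the two and using the identity $\mathbf{C}^{(1)}_{\gamma\bar\kappa}+\mathbf{C}_{1\bar1}\delta^1_\gamma\delta^{\bar1}_{\bar\kappa}=\mathbf{C}_{\gamma\bar\kappa}$ from (\ref{r4a}) assembles precisely the right-hand side of (\ref{FpsiJpsi}); the remaining components $\psi^\beta$ follow by $SU(N)$ covariance.

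I expect the main obstacle to be the bookkeeping of the nested state (\ref{Bnn}) when the shifted group contains both a particle and an anti-particle: one must track how the single leftover integration at each level threads through the monodromy matrices and how the $C\to D$ replacement at the first level propagates upward, so that the surviving contribution collapses to the one-particle--one-antiparticle current form factor $F^{J^{1\bar\kappa}}$ rather than a higher correlator. A secondary point is pinning down the prefactor as exactly $i\eta$, which requires combining the $1/W$ coefficient from the $\tilde\phi$-function asymptotics in $\tilde h_1$ with the residues of the bound-state S-matrix (\ref{Sb}) and the explicit one-particle--one-antiparticle form factor (\ref{FJ2}); this value is the same constant fixed independently in Theorem \ref{t3} through the form factor equation (iii), which provides a consistency check.
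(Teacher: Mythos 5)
Your proposal follows essentially the same route as the paper's own proof: select $k_j=1,\ l_j=0$ from the exponential counting, observe that the leading term vanishes by the $n=\bar n=1$ instance of Lemma \ref{l1} (Proposition \ref{p1}), and then assemble the $1/W$ term from the two contributions of $\tilde h_1$ and $\tilde\Phi_1$, combining them via $\mathbf{C}^{(1)}_{\gamma\bar\kappa}+\mathbf{C}_{1\bar1}\delta^1_\gamma\delta^{\bar1}_{\bar\kappa}=\mathbf{C}_{\gamma\bar\kappa}$ from (\ref{r4a}). The identified difficulties (threading the leftover integration through the nested monodromy matrices, fixing the prefactor $i\eta$) are exactly the points the paper resolves through the recursion relations (\ref{r1})--(\ref{r6}).
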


\begin{proof}
The exponential behavior (\ref{exppsi}) implies for $n=\bar{n}=2$ and
$k=\bar{k}=1$ that $k_{j}=1$ and $l_{j}=0$ for $j=1,\dots,N-1.$ We investigate
for $\psi=\psi^{1}$
\begin{equation}
K_{\underline{\alpha}\bar{\delta}}^{\psi}(\underline{\theta},\omega)=\int
d\underline{\underline{z}}\tilde{h}\,(\underline{\theta},\underline{\omega
},\underline{\underline{z}})p^{\psi}(\underline{\theta},\underline{\omega
},\underline{\underline{z}})\tilde{\Phi}_{\underline{\alpha}\bar{\delta}%
}(\underline{\theta},\omega,\underline{\underline{z}})\, \label{Kpsi}%
\end{equation}
We have proved in theorem \ref{t3} that in leading order
\[
F_{\underline{\alpha}}^{\psi}(\underline{\theta}_{W})\rightarrow0
\]

\noindent\textbf{Order }$\frac{1}{W}$: we have to apply the asymptotic
behavior of the h-function (\ref{ha}) and the Bethe state (\ref{PHIa}).

The \textbf{result for the contribution of }$h_{1}$ is%
\[
F_{\underline{\alpha}\bar{\delta},h_{1}}^{\psi^{1}}(\underline{\theta}%
_{W},\omega_{W})\rightarrow i\eta W^{-1}\left(  \mathbf{C}_{1\bar{1}}%
F_{\hat{\alpha}{\hat{\bar{\delta}}}}^{J^{1\bar{1}}}(\hat{\theta},\hat{\omega
})\right)  F_{\check{\alpha}}^{\psi^{1}}(\check{\theta})
\]
and the \textbf{result for the contribution of from }$\Phi_{1}$ is%
\[
F_{\underline{\alpha}\bar{\delta},\tilde{\Phi}_{1}}^{\psi^{1}}(\underline
{\theta}_{W},\omega_{W})\rightarrow i\eta W^{-1}\left(  \mathbf{C}_{\gamma
\bar{\kappa}}^{(1)}F_{\hat{\alpha}{\hat{\bar{\delta}}}}^{J^{1\bar{\kappa}}%
}(\hat{\theta},\hat{\omega})\right)  F_{\check{\alpha}}^{\psi^{\gamma}}%
(\check{\theta})
\]
Because $\mathbf{C}_{\delta\bar{\kappa}}^{(1)}+\mathbf{C}_{1\bar{1}}%
\delta_{\delta}^{1}\delta_{\bar{\kappa}}^{\bar{1}}=\mathbf{C}_{\delta
\bar{\kappa}}$ (see (\ref{r4a})) the relation (\ref{FpsiJpsi}) is proved.
\end{proof}

\subsection{Formulas}

\begin{definition}
We define (for $0\leq j\leq N-2$) iteratively%
\begin{align}
L_{xy}^{(j)}(z,\omega)  &  =\int du\tilde{\phi}(z-u)L_{ca}^{(j+1)}%
(u,\omega)\tilde{x}(z-u)\tilde{y}(\omega-z)\label{d1}\\
L_{uxy}^{(j)}(z,\omega)  &  =\int du\tilde{\phi}(z-u)L_{ca}^{(j+1)}%
(u,\omega)u\tilde{x}(z-u)\tilde{y}(\omega-z) \label{d2}%
\end{align}
with%
\begin{align*}
\tilde{x}(z),\tilde{y}(z)  &  =\tilde{a}=1,~\tilde{b}(z)=\frac{z}{z-i\eta
},~\tilde{c}(z)=\frac{-i\eta}{z-i\eta},~\tilde{d}(z)=\frac{-i\eta}{i\pi
-z},~\eta=\frac{2\pi}{N}\\
L_{ca}^{(N-1)}(z,\omega)  &  =(-1)^{N-1}\tilde{\chi}_{N-1}(\omega-z)
\end{align*}

\end{definition}

\begin{proposition}
~

\begin{enumerate}
\item If $\tilde{\chi}_{N-1}(\omega)=\tilde{\chi}(\omega)=\Gamma\left(
\frac{1}{2}+\frac{\omega}{2\pi i}\right)  \Gamma\left(  \frac{1}{2}-\frac
{1}{N}-\frac{\omega}{2\pi i}\right)  $ then
\[
L_{ca}^{(j)}(z,\omega)=(-1)^{N-1}c_{N-2}\cdots c_{j}\tilde{\chi}_{j}%
(\omega-z)
\]
with%
\begin{align}
\tilde{\chi}_{j}(\omega)  &  =\Gamma\left(  -\tfrac{1}{2}+j/N-\omega/(2\pi
i)\right)  \Gamma\left(  \tfrac{1}{2}+\omega/(2\pi i)\right) \label{r1}\\
c_{j}  &  =4\pi^{2}\frac{\Gamma\left(  1-\frac{1}{N}\right)  \Gamma\left(
\frac{j+1}{N}\right)  }{\Gamma\left(  \frac{1}{N}j\right)  }%
,~0<j<N-1\nonumber\\
c_{N-2}\cdots c_{j}  &  =(4\pi^{2})^{N-1-j}\frac{\left(  \Gamma\left(
1-\frac{1}{N}\right)  \right)  ^{N-j}}{\Gamma\left(  \frac{1}{N}j\right)
}\nonumber
\end{align}%
\begin{equation}
c_{0}=0\Rightarrow L_{ca}^{(0)}(z,\omega)=0. \label{r1a}%
\end{equation}

\item
\begin{align}
L_{bd}^{(j)}(z,\omega)  &  =L_{ca}^{(j)}(z,\omega)/(N-j-1)\label{r2}\\
L_{aa}^{(j)}(z,\omega)  &  =L_{ca}^{(j)}(z,\omega)\left(  1+N\left(
z-\omega-i\pi\right)  /(2i\pi j)\right) \label{r2a}\\
K_{aa}(\theta,\omega)  &  =L_{aa}^{(0)}(\theta,\omega)=(-1)^{N-1}c_{N-2}\cdots
c_{1}\frac{4\pi^{4}}{\sin\frac{\pi}{N}}\frac{1}{\cosh\frac{1}{2}\left(
\theta-\omega\right)  }. \label{K0}%
\end{align}

\item
\begin{align}
L_{uca}^{(j)}(z,\omega)  &  =\frac{1}{j}\left(  \left(  1+j\right)
z-\omega-i\pi\right)  L_{ca}^{(j)}(z,\omega)\label{r3}\\
L_{ubd}^{(j)}(z,\omega)  &  =-\left(  \frac{1}{j}\left(  z-\omega-i\pi\right)
+\frac{1}{N-j-1}\left(  i\pi-\omega\right)  \right)  L_{ca}^{(j)}(z,\omega)
\label{r3a}%
\end{align}
in particular
\begin{align}
L_{uca}^{(0)}(\theta,\omega)  &  =\frac{2i\pi}{N}K_{aa}(\theta,\omega
)\label{r3b}\\
L_{ubd}^{(0)}(\theta,\omega)  &  =-\frac{2i\pi}{N}K_{aa}(\theta,\omega).
\label{r3c}%
\end{align}

\item If $L_{\beta^{\prime}(\mu^{\prime})}^{(N-1)}(z,\omega)=\mathbf{C}%
_{\beta(\mu)}^{(N-1)}L_{ca}^{(j)}(z,\omega)=\delta_{\beta}^{N}\delta_{(\mu
)}^{(1..N-1)}(-1)^{N-1}\tilde{\chi}(\omega-z)$ then%
\begin{equation}
L_{\beta(\mu)}^{(j)}(z,\omega)=\int_{\mathcal{C}}du\tilde{\phi}\left(
z-u\right)  L_{\beta^{\prime}(\mu^{\prime})}^{(j+1)}(u,\omega)\left(
T_{\beta(\mu),j+1}^{\beta^{\prime},j+1(\mu^{\prime})}(z,\omega,u)\right)
=\mathbf{C}_{\beta(\mu)}^{(j)}L_{ca}^{(j)}(z,\omega) \label{r4}%
\end{equation}
where
\begin{equation}
\mathbf{C}_{\beta(\mu)}^{(j)}=\mathbf{C}_{\beta(\mu)}~\text{for }%
\beta>j~\text{else }=0. \label{r4a}%
\end{equation}

\item
\begin{align}
L_{u\beta(\mu)}^{(j)}(z,\omega)  &  =\int_{\mathcal{C}}du\tilde{\phi}\left(
z-u\right)  L_{u\beta^{\prime}(\mu^{\prime})}^{(j+1)}(u,\omega)u\left(
T_{\beta(\mu),j+1}^{\beta^{\prime},j+1(\mu^{\prime})}(z,\omega,u)\right)
\label{r5}\\
&  =\frac{1}{j}\left(  \left(  N\delta_{\beta}^{N}\delta_{(\mu)}^{\bar{N}%
}-\mathbf{C}_{\beta(\mu)}^{(j)}\right)  \right)  zL_{ca}^{(j)}(z,\omega
)+const.L_{ca}^{(j)}(z,\omega)\nonumber
\end{align}
in particular%
\[
L_{u\beta(\mu)}^{(N-1)}(z,\omega)=\delta_{\beta}^{N}\delta_{(\mu)}^{\bar{N}%
}zL_{ca}^{(N-1)}(z,\omega).
\]

\item for $j+1<\alpha^{\prime}<N$%
\begin{equation}
\int_{\mathcal{C}}du\tilde{\phi}\left(  z-u\right)  L_{aa}^{(j+1)}%
(u,\omega)\left(  T_{\alpha(\rho)j+1}^{\alpha^{\prime},j+1\bar{N}}%
(z,\omega,u)\right)  =\delta_{\alpha}^{\alpha^{\prime}}\delta_{(\rho)}%
^{\bar{N}}L_{aa}^{(j)}(z,\omega). \label{r6}%
\end{equation}

\end{enumerate}
\end{proposition}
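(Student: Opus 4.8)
The plan is to prove the six relations together by downward induction on the nesting level $j$, starting from the base value $L_{ca}^{(N-1)}(z,\omega)=(-1)^{N-1}\tilde\chi_{N-1}(\omega-z)$ fixed in the Definition and descending to $j=0$. Every statement is an evaluation of the single $u$-integral (\ref{d1})--(\ref{d2}) into which one substitutes the inductive form of $L_{ca}^{(j+1)}$. I would first establish Part 1. The integrand of $L_{ca}^{(j)}$ is $\tilde\phi(z-u)\,\tilde c(z-u)\,L_{ca}^{(j+1)}(u,\omega)$, and the crucial simplification is that, writing $w=\theta/(2\pi i)$ and using $\Gamma(x+1)=x\Gamma(x)$, one has $\tilde\phi(\theta)\tilde c(\theta)=-\tfrac1N\Gamma(-w)\Gamma(-\tfrac1N+w)$, a pure product of two $\Gamma$-functions. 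Since $L_{ca}^{(j+1)}$ is itself (up to a constant) the product of two $\Gamma$'s in $\tilde\chi_{j+1}$, the $u$-integral is exactly a first Barnes lemma integral; evaluating it reproduces the $\Gamma$-ratio $\tilde\chi_j(\omega-z)$ and peels off the constant $c_j$ of (\ref{r1}). Telescoping gives $c_{N-2}\cdots c_j$, and the explicit Gamma-ratio makes $c_0=0$ manifest, which is the vanishing (\ref{r1a}) feeding Proposition \ref{p1}.

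Parts 2 and 3 reuse this one computation. In (\ref{d1})--(\ref{d2}) the second factor $\tilde y(\omega-z)$ never depends on the integration variable and simply pulls out of the integral, so only the choice of $\tilde x(z-u)$ matters; the identities $\tilde a=1$ and $\tilde b=1-\tilde c$ then express every integrand as a combination of the two basic Barnes integrals with integrand $\tilde\phi\,L_{ca}^{(j+1)}$ (unshifted) and $\tilde\phi\,\tilde c\,L_{ca}^{(j+1)}$ (one $\Gamma$-argument shifted by a unit). Because only the single Barnes argument that carries $z-\omega$ is moved, their ratio is affine in $z-\omega$, which is the content of (\ref{r2})--(\ref{r2a}) and, at $j=0$, of the closed form (\ref{K0}). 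The weighted functions of Part 3 carry an extra factor $u$ in the integrand; using $(a+s)\Gamma(a+s)=\Gamma(a+1+s)$ to trade $u$ for a unit shift of one $\Gamma$-argument again reduces them to the same basic integrals, producing the affine coefficients of (\ref{r3})--(\ref{r3a}) and, after specializing and inserting (\ref{K0}), the relations (\ref{r3b})--(\ref{r3c}).

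For the matrix statements (\ref{r4})--(\ref{r6}) I would expand the monodromy-matrix entry $T^{\beta',j+1(\mu')}_{\beta(\mu),j+1}$ into its reduced S-matrix components $\tilde b,\tilde c$ through (\ref{St}) and let it act on the nested reference co-vector. Each auxiliary index then contributes either a diagonal weight or a charge-conjugation contraction, so the $u$-integral breaks up into precisely the scalar integrals $L^{(j)}_{xy}$ and $L^{(j)}_{uxy}$ already evaluated in Parts 1--3. The assertion is that all of this collapses to the single scalar $L_{ca}^{(j)}$ times the tensor $\mathbf{C}^{(j)}_{\beta(\mu)}$ obeying the support rule (\ref{r4a}), namely $\mathbf{C}^{(j)}_{\beta(\mu)}=\mathbf{C}_{\beta(\mu)}$ for $\beta>j$ and $0$ otherwise. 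I would prove this by the same downward induction, checking that passing from level $j+1$ to level $j$ enlarges the support of $\mathbf{C}^{(j)}_{\beta(\mu)}$ from $\beta>j+1$ to $\beta>j$, i.e.\ adjoins exactly the boundary contribution encoded in (\ref{r4a}). Parts 5 and 6 then follow by combining this bookkeeping with the $u$-insertion of Part 3 and the $aa$-relation (\ref{r2a}); these are the tensor identities invoked in Propositions \ref{p2} and \ref{p3}.

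The step demanding the most care is the matrix reduction of Part 4: one must organize the expansion of the nested monodromy matrix so that the many terms regroup into the single tensor $\mathbf{C}^{(j)}$ with its exact support $\beta>j$, and this combinatorial collapse is what ultimately makes the construction yield highest-weight states. A subsidiary but essential point is to verify that the contour $\mathcal{C}$ of Fig.\ \ref{f5.1} separates the two families of $\Gamma$-poles in the correct way, so that the first Barnes lemma applies verbatim and the rational factors $\tilde b,\tilde c$ do not disturb convergence. Once the single Barnes integral of Part 1 and the tensor recursion of Part 4 are secured, all the remaining relations are algebraic rearrangements of them.
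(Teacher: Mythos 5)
Your overall strategy coincides with the paper's: downward induction in the level $j$ starting from $L_{ca}^{(N-1)}$, evaluation of each $u$-integral by Barnes' first lemma after rewriting $\tilde{\phi}\tilde{c}$ and $\tilde{\phi}\tilde{b}$ as pure products of two $\Gamma$-functions, reduction of the $u$-weighted integrals of Part 3 to the unweighted ones by an affine rewriting of the insertion $u$, and, for the tensor statements, expansion of $T_{\beta(\mu),j+1}^{\beta^{\prime},j+1(\mu^{\prime})}$ into its $\tilde{b},\tilde{c},\tilde{d}$ components followed by the support bookkeeping that enlarges $\mathbf{C}^{(j+1)}_{\beta(\mu)}$ to $\mathbf{C}^{(j)}_{\beta(\mu)}$. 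This is exactly how the paper proceeds.

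There is, however, one step that would fail as you state it: the claim that ``the second factor $\tilde{y}(\omega-z)$ never depends on the integration variable and simply pulls out of the integral.'' Although the definition (\ref{d1}) is indeed written with argument $\omega-z$, the relations (\ref{r2}) and (\ref{r3a}) cannot be obtained that way. If $\tilde{d}$ were an external factor one would get $L_{bd}^{(j)}=\tilde{d}(\omega-z)L_{ba}^{(j)}=-\tfrac{1}{j}\,\tfrac{z-\omega-i\pi}{z-\omega+i\pi}\,L_{ca}^{(j)}$, a nonconstant rational multiple of $L_{ca}^{(j)}$ rather than the constant $1/(N-j-1)$ asserted in (\ref{r2}). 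In the paper's computation the factor is $\tilde{d}(\omega-u)$, kept \emph{inside} the integral, where it combines with $\tilde{\chi}_{j+1}(\omega-u)$ via $x\Gamma(x)=\Gamma(x+1)$ into another pure two-$\Gamma$ product, so that Barnes' lemma still applies verbatim; this is what produces the constant ratio in (\ref{r2}) and the extra $(i\pi-\omega)/(N-j-1)$ term in (\ref{r3a}), both of which feed Parts 4--6. You should correct this before the rest of your scheme goes through. A smaller omission: (\ref{K0}), (\ref{r3b}) and (\ref{r3c}) at $j=0$ are $0\cdot\infty$ expressions, since $c_{0}=0$ makes $L_{ca}^{(0)}$ vanish while the prefactors in (\ref{r2a}) and (\ref{r3}) carry $1/j$; the paper extracts them as limits $j\rightarrow0$ of the ratio, a point your phrase ``after specializing and inserting (\ref{K0})'' glosses over.
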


\begin{proof}
We use%
\begin{multline*}
\frac{1}{2\pi i}\left(  \int_{\mathcal{C}_{a}}+\int_{\mathcal{C}_{b}}\right)
dz\Gamma(a-z)\Gamma(b-z)\Gamma(c+z)\Gamma(d+z)\\
=-\frac{\Gamma\left(  c+a\right)  \Gamma\left(  d+a\right)  \Gamma\left(
c+b\right)  \Gamma\left(  d+b\right)  }{\Gamma\left(  c+d+a+b\right)  }%
\end{multline*}
where $\mathcal{C}_{a}$ encircles the poles of $\Gamma(a-z)$ clockwise.

\noindent1. With $\tilde{\phi}(z)\tilde{c}(z)=-\frac{1}{N}\Gamma\left(
-\frac{z}{2\pi i}\right)  \Gamma\left(  -\frac{1}{N}+\frac{z}{2\pi i}\right)
$ and $\tilde{\chi}_{j+1}(\omega)$ of (\ref{r1}) follows%
\begin{align*}
&  \int_{\mathcal{C}_{\underline{\theta}}}du\tilde{\phi}(z-u)\tilde
{c}(z-u)\tilde{\chi}_{j+1}(\omega-u)\\
&  =-\frac{1}{N}\int_{\mathcal{C}_{\underline{\theta}}}du\Gamma\left(
-\tfrac{z-u}{2\pi i}\right)  \Gamma\left(  -\tfrac{1}{N}+\tfrac{z-u}{2\pi
i}\right)  \Gamma\left(  -\tfrac{1}{2}+\tfrac{j+1}{N}-\tfrac{\omega-u}{2\pi
i}\right)  \Gamma\left(  \tfrac{1}{2}+\tfrac{\omega-u}{2\pi i}\right) \\
&  =-\frac{4}{N}\pi^{2}\Gamma\left(  -\frac{1}{N}\right)  \frac{\Gamma\left(
\frac{j+1}{N}\right)  }{\Gamma\left(  \frac{1}{N}j\right)  }\Gamma\left(
\frac{1}{2}-\frac{1}{2}\frac{z-\omega}{i\pi}\right)  \Gamma\left(  -\frac
{1}{2}+\frac{1}{N}j+\frac{1}{2}\frac{z-\omega}{i\pi}\right) \\
&  =c_{j}\tilde{\chi}_{j}(\omega-u)
\end{align*}
and iterating this result $\Rightarrow$ 1.

\noindent2. With $\tilde{\phi}(z)\tilde{b}(z)=-\Gamma\left(  1-\frac{z}{2\pi
i}\right)  \Gamma\left(  -\frac{1}{N}+\frac{z}{2\pi i}\right)  $ and

$\tilde{d}(\omega)\tilde{\chi}_{j+1}(\omega)=\frac{1}{N}\Gamma\left(
-\frac{1}{2}+\frac{j+1}{N}-\frac{\omega}{2\pi i}\right)  \Gamma\left(
-\frac{1}{2}+\frac{\omega}{2\pi i}\right)  \Rightarrow$%
\begin{align*}
&  \int_{\mathcal{C}_{\underline{\theta}}}du\tilde{\phi}(z-u)\tilde
{b}(z-u)\tilde{d}(\omega-u)\tilde{\chi}_{j+1}(\omega-u)\\
&  =-\frac{1}{N}\int_{\mathcal{C}_{\underline{\theta}}}du\Gamma\left(
-\tfrac{z-u}{2\pi i}\right)  \Gamma\left(  -\tfrac{1}{N}+\tfrac{z-u}{2\pi
i}\right)  \Gamma\left(  -\tfrac{1}{2}+\tfrac{j+1}{N}-\tfrac{\omega-u}{2\pi
i}\right)  \Gamma\left(  -\tfrac{1}{2}+\tfrac{\omega-u}{2\pi i}\right) \\
&  =\frac{1}{N-j-1}c_{j}\tilde{\chi}_{j}(\omega-u)
\end{align*}
and%
\begin{align*}
&  \int_{\mathcal{C}_{\underline{\theta}}}du\tilde{\phi}(z-u)\tilde{\chi
}_{j+1}(\omega-u)\\
&  =\int_{\mathcal{C}_{\underline{\theta}}}du\Gamma\left(  -\tfrac{z-u}{2\pi
i}\right)  \Gamma\left(  1-\tfrac{1}{N}+\tfrac{z-u}{2\pi i}\right)
\Gamma\left(  -\tfrac{1}{2}+\tfrac{j+1}{N}-\tfrac{\omega-u}{2\pi i}\right)
\Gamma\left(  \tfrac{1}{2}+\tfrac{\omega-u}{2\pi i}\right) \\
&  =c_{j}\tilde{\chi}_{j}(\omega-u)\left(  1+\frac{N}{2i\pi j}\left(
z-\omega-i\pi\right)  \right)
\end{align*}
and $\tilde{\chi}_{0}(\omega)=\Gamma\left(  -\frac{1}{2}-\frac{\omega}{2\pi
i}\right)  \Gamma\left(  \frac{1}{2}+\frac{\omega}{2\pi i}\right)
=\frac{-2i\pi^{2}}{\left(  i\pi+\omega\right)  \cosh\frac{1}{2}\omega
}\Rightarrow$ 2.

\noindent3. With $\tilde{c}(z-u)u=\frac{2i\pi}{N}+\tilde{c}(z-u)\left(
z-\frac{2i\pi}{N}\right)  \Rightarrow$%
\begin{align*}
L_{uca}^{(j)}(z,\omega)  &  =2i\pi/N\,L_{aa}^{(j)}(z,\omega)+\left(
z-2i\pi/N\right)  L_{ca}^{(j)}(z,\omega)\\
&  =\left(  2i\pi\left(  1+N\left(  z-\omega-i\pi\right)  /(2i\pi j)\right)
/N+\left(  z-2i\pi/N\right)  \right)  L_{ca}^{(j)}(z,\omega)\\
&  =\left(  \left(  1+j\right)  z-\omega-i\pi\right)  /j\,L_{ca}%
^{(j)}(z,\omega).
\end{align*}
and using $\tilde{b}=1-\tilde{c},~\tilde{d}(\omega-u)u=-\frac{2i\pi}{N}%
-\tilde{d}(\omega-u)\left(  i\pi-\omega\right)  \Rightarrow$%
\begin{align*}
L_{ubd}^{(j)}(z,\omega)  &  =-2i\pi/N\,L_{ba}^{(j)}-\left(  i\pi
-\omega\right)  L_{bd}^{(j)}\\
&  =-\left(  2i\pi/N\,L_{aa}^{(j)}-L_{ca}^{(j)}\right)  -\left(  i\pi
-\omega\right)  /(N-j-1)\,L_{ca}^{(j)}\\
&  =-\left(  \left(  z-\omega-i\pi\right)  /j+\left(  i\pi-\omega\right)
/(N-j-1)\right)  L_{ca}^{(j)}(z,\omega).
\end{align*}
By (\ref{r2a})
\begin{align*}
L_{uca}^{(0)}(z,\omega)  &  =\lim_{j\rightarrow0}\frac{\frac{1}{j}\left(
\left(  1+j\right)  z-\omega-i\pi\right)  }{1+\frac{N}{2i\pi j}\left(
z-\omega-i\pi\right)  }L_{aa}^{(0)}(z,\omega)=\frac{2i\pi}{N}K_{aa}%
(z,\omega)\\
L_{ubd}^{(0)}(z,\omega)  &  =\lim_{j\rightarrow0}\frac{-\left(  \frac{1}%
{j}\left(  z-\omega-i\pi\right)  +\frac{1}{N-j-1}\left(  i\pi-\omega\right)
\right)  }{1+\frac{N}{2i\pi j}\left(  z-\omega-i\pi\right)  }L_{aa}%
^{(0)}(z,\omega)=-\frac{2i\pi}{N}K_{aa}(z,\omega).
\end{align*}
$\Rightarrow$ 3.

\noindent4.
\begin{align*}
&  \mathbf{C}_{\beta^{\prime}(\mu^{\prime})}^{(j+1)}T_{\beta(\mu),j+1}%
^{\beta^{\prime},j+1(\mu^{\prime})}\\
&  =\mathbf{C}_{\beta^{\prime}(\mu^{\prime})}^{(j+1)}\left(  \tilde{b}%
\delta_{j+1}^{\beta^{\prime}}\delta_{\beta}^{j+1}\delta_{(\mu)}^{(\mu^{\prime
})}+\tilde{c}\delta_{\beta}^{\beta^{\prime}}\delta_{(\mu)}^{(\mu^{\prime}%
)}+\tilde{b}\tilde{d}\delta_{\beta}^{j+1}\mathbf{C}^{\beta^{\prime}%
(\mu^{\prime})}\mathbf{C}_{(\mu)j+1}+\tilde{c}\tilde{d}\delta_{\beta}%
^{\beta^{\prime}}\mathbf{C}^{j+1(\mu^{\prime})}\mathbf{C}_{(\mu)j+1}\right) \\
&  =\tilde{b}\delta_{\beta}^{j+1}\mathbf{C}_{j+1(\mu)}^{(j+1)}+\tilde
{c}\mathbf{C}_{\beta(\mu)}^{(j+1)}+\tilde{b}\tilde{d}\mathbf{C}_{\beta
(\mu^{\prime})}^{(j+1)}\mathbf{C}^{j+1(\mu^{\prime})}\mathbf{C}_{(\mu
)j+1}+\tilde{c}\tilde{d}\delta_{\beta}^{j+1}\mathbf{C}_{(\mu)j+1}^{(j+1)}%
\end{align*}
$\mathbf{C}_{(\mu)j+1}^{(j+1)}=0$ and $\mathbf{C}_{\beta(\mu^{\prime}%
)}^{(j+1)}\mathbf{C}^{j+1(\mu^{\prime})}\mathbf{C}_{(\mu)j+1}=(N-j-1)\delta
_{\beta}^{j+1}\mathbf{C}_{j+1(\mu)}\Rightarrow$%
\begin{align*}
L_{\beta(\mu)}^{(j)}(z,\omega)  &  =\int_{\mathcal{C}}du\tilde{\phi}\left(
z-u\right)  L_{ca}^{(j+1)}(u,\omega)\mathbf{C}_{\beta^{\prime}(\mu^{\prime}%
)}^{(j+1)}\left(  T_{\beta(\mu),j+1}^{\beta^{\prime},j+1(\mu^{\prime}%
)}(z,\omega,u\right) \\
&  =L_{ca}^{(j)}\mathbf{C}_{\beta(\mu)}^{(j+1)}+L_{bd}^{(j)}(N-j-1)\delta
_{\beta}^{j+1}\mathbf{C}_{j+1(\mu)}\\
&  =L_{ca}^{(j)}\left(  \mathbf{C}_{\beta(\mu)}^{(j+1)}+\delta_{\beta}%
^{j+1}\mathbf{C}_{j+1(\mu)}\right)  =L_{ca}^{(j)}\mathbf{C}_{\beta(\mu)}^{(j)}%
\end{align*}

\noindent5. By induction: let $L_{u\beta(\mu)}^{(N-1)}(z,\omega)=\delta
_{\beta}^{N}\delta_{(\mu)}^{\bar{N}}zL_{ca}^{(N-1)}(z,\omega)$ and

$L_{u\beta^{\prime}(\mu^{\prime})}^{(j+1)}(u,\omega)=\frac{1}{j+1}\left(
\left(  N\delta_{\beta^{\prime}}^{N}\delta_{(\mu^{\prime})}^{(1..N-1)}%
-\mathbf{C}_{\beta^{\prime}(\mu^{\prime})}^{(j+1)}\right)  \right)
L_{ca}^{(j+1)}\Rightarrow$%
\begin{multline*}
L_{u\beta(\mu)}^{(j)}(z,\omega)=\int_{\mathcal{C}}du\tilde{\phi}\left(
z-u\right)  L_{u\beta^{\prime}(\mu^{\prime})}^{(j+1)}(u,\omega)u\left(
T_{\beta(\mu),j+1}^{\beta^{\prime},j+1(\mu^{\prime})}(z,\omega,u\right) \\
=\frac{1}{j+1}\int_{\mathcal{C}}du\tilde{\phi}\left(  z-u\right)  \left(
\left(  N\delta_{\beta^{\prime}}^{N}\delta_{(\mu^{\prime})}^{(1..N-1)}%
-\mathbf{C}_{\beta^{\prime}(\mu^{\prime})}^{(j+1)}\right)  L_{ca}%
^{(j+1)}(u,\omega)\right) \\
\times u\left(  \tilde{c}\delta_{\beta}^{\beta^{\prime}}\delta_{(\mu)}%
^{(\mu^{\prime})}+\tilde{b}\tilde{d}\delta_{\beta}^{j+1}\mathbf{C}%
^{\beta^{\prime}(\mu^{\prime})}\mathbf{C}_{(\mu)j+1}\right) \\
=\frac{1}{j+1}\left(  L_{uca}^{(j)}(z,\omega)\left(  N\delta_{\beta}^{N}%
\delta_{(\mu)}^{(1..N-1)}-\mathbf{C}_{\beta(\mu)}^{(j+1)}\right)
+L_{ubd}^{(j)}(z,\omega)\delta_{\beta}^{j+1}\left(  j+1\right)  \mathbf{C}%
_{j+1(\mu)}\right) \\
=\frac{L_{ca}^{(j)}(z,\omega)}{j+1}\left(  \left(  \frac{1}{j}\left(
1+j\right)  \left(  N\delta_{\beta}^{N}\delta_{(\mu)}^{(1..N-1)}%
-\mathbf{C}_{\beta(\mu)}^{(j+1)}\right)  -\frac{1}{j}\delta_{\beta}%
^{j+1}\left(  j+1\right)  \mathbf{C}_{j+1(\mu)}\right)  z+const\right) \\
=\left(  N\delta_{\beta}^{N}\delta_{(\mu)}^{(1..N-1)}-\mathbf{C}_{\beta(\mu
)}^{(j)}\right)  /j\,zL_{ca}^{(j)}(z,\omega)+constL_{ca}^{(j)}.
\end{multline*}
$\Rightarrow$ 5. because

$( N\delta_{\beta^{\prime}}^{N}\delta_{(\mu^{\prime})}^{(1..N-1)}%
\mathbf{C}^{\beta^{\prime}(\mu^{\prime})}-\mathbf{C}_{\beta^{\prime}%
(\mu^{\prime})}^{(j+1)}\mathbf{C}^{\beta^{\prime}(\mu^{\prime})})
\mathbf{C}_{(\mu)j+1}$\newline$~\hfill=\left(  N\mathbf{C}_{j+1(\mu
)}-(N-j-1)\mathbf{C}_{j+1(\mu)}\right)  =\left(  j+1\right)  \mathbf{C}%
_{j+1(\mu)}$.

\noindent6. $T_{\alpha(\rho)j+1}^{\alpha^{\prime},j+1\bar{N}}=\tilde{c}%
\delta_{\alpha}^{\alpha^{\prime}}\delta_{(\rho)}^{\bar{N}}$ holds for
$j+1<\alpha^{\prime}<N$ and with (\ref{r2a}) for $j+1$ follows%
\begin{align*}
&  \int_{\mathcal{C}}du\tilde{\phi}\left(  z-u\right)  \tilde{c}\left(
z-u\right)  L_{aa}^{(j+1)}(u,\omega)\\
&  =\int_{\mathcal{C}}du\tilde{\phi}\left(  z-u\right)  \tilde{c}\left(
z-u\right)  L_{ca}^{(j+1)}(u,\omega)\left(  1+N\left(  u-\omega-i\pi\right)
/(2i\pi\left(  j+1\right)  )\right) \\
&  =L_{ca}^{(j)}(z,\omega)\left(  1+N\left(  -\omega-i\pi\right)
/(2i\pi\left(  j+1\right)  )\right)  +L_{uca}^{(j)}N/(2i\pi\left(  j+1\right)
)\\
&  =L_{ca}^{(j)}(z,\omega)\left(  1+\frac{N\left(  -\omega-i\pi\right)
}{2i\pi\left(  j+1\right)  }+\frac{N}{2i\pi\left(  j+1\right)  }\frac{1}%
{j}\left(  \left(  1+j\right)  z-\omega-i\pi\right)  \right)  =L_{aa}%
^{(j)}(z,\omega)
\end{align*}
by (\ref{r2a}) $\Rightarrow$ 6.
\end{proof}

\setcounter{equation}{0}

\section{The functions $c(k,l,W)$}

\label{ac}

The functions $c_{\hat{\mathcal{O}}\mathcal{\check{O}}}^{\mathcal{O}}(k,l,W)$
in (\ref{FFF})
\[
F_{\underline{\alpha}}^{\mathcal{O}}(\underline{\theta}_{W})\rightarrow
c_{\hat{\mathcal{O}}\mathcal{\check{O}}}^{\mathcal{O}}(k,l,W)F_{\underline
{\hat{\alpha}}}^{\hat{\mathcal{O}}}(\underline{\hat{\theta}})F_{\underline
{\check{\alpha}}}^{\mathcal{\check{O}}}(\underline{\check{\theta}})
\]
are calculated using the form factor equation (iii) (see (\ref{1.14})), by
taking for $F_{\underline{\alpha}}^{\mathcal{O}}(\underline{\theta}_{W})$
first the $\operatorname*{Res}$ and then the limit $W\rightarrow\infty$ or
exchanging the procedures. We use two special cases of the form factor
equation (iii) :

\begin{enumerate}
\item[I] As in (\ref{bs}) we take the bound state $\left(  \alpha_{1}%
\dots\alpha_{N-1}\right)  =(2\dots N)=\bar{1}$ with rapidity $\omega$ and
(iii) reads as%
\begin{equation}
\operatorname*{Res}_{\omega=i\pi+\theta}F_{\bar{1}1\underline{\hat{\alpha}%
}\underline{\check{\alpha}}}^{\mathcal{O}}(\omega,\theta,\underline
{\hat{\theta}},\underline{\check{\theta}})=2i\,\mathbf{C}_{\bar{1}%
1}\,F_{\underline{\hat{\alpha}}^{\prime}\underline{\check{\alpha}}^{\prime}%
}^{\mathcal{O}}(\underline{\hat{\theta}},\underline{\check{\theta}})\left(
1_{\underline{\hat{\alpha}}\underline{\check{\alpha}}}^{\underline{\hat
{\alpha}}^{\prime}\underline{\check{\alpha}}^{\prime}}-\dot{\sigma}%
_{1}^{\mathcal{O}}(n)S_{1\underline{\hat{\alpha}}}^{\underline{\hat{\alpha}%
}^{\prime}1}(\theta,\underline{\hat{\theta}})S_{1\underline{\check{\alpha}}%
}^{\underline{\check{\alpha}}^{\prime}1}(\theta,\underline{\check{\theta}%
})\right)  \label{iii1}%
\end{equation}
where we use the short notation of (\ref{sn}) and (\ref{st}) for the
statistics factor $\dot{\sigma}_{1}^{\mathcal{O}}(n)$.

\item[II] The form factor equation (i)%
\[
F_{\underline{\hat{\alpha}}\bar{\beta}\gamma\underline{\check{\alpha}}}^{\psi
}(\underline{\hat{\theta}},\omega,\theta,\underline{\check{\theta}}%
)=F_{\bar{\beta}^{\prime}\gamma^{\prime}\underline{\hat{\alpha}}^{\prime
\prime}\underline{\check{\alpha}}}^{\psi}(\omega,\theta,\underline{\hat
{\theta}},\underline{\check{\theta}})S_{\underline{\hat{\alpha}}^{\prime
}\gamma}^{\gamma^{\prime}\underline{\hat{\alpha}}^{\prime\prime}}%
(\underline{\hat{\theta}},\theta)S_{\underline{\hat{\alpha}}\bar{\beta}}%
^{\bar{\beta}^{\prime}\underline{\hat{\alpha}}^{\prime}}(\underline
{\hat{\theta}},\omega)
\]
implies that%
\begin{align}
\operatorname*{Res}_{\omega=i\pi+\theta}F_{\underline{\hat{\alpha}}\bar
{1}1\underline{\check{\alpha}}}^{\mathcal{O}}(\underline{\hat{\theta}}%
,\omega,\theta,\underline{\check{\theta}})  &  =2i\mathbf{C}_{\bar{1}%
1}F_{\underline{\hat{\alpha}}^{\prime}\underline{\check{\alpha}}^{\prime}%
}^{\mathcal{O}}(\underline{\hat{\theta}},\underline{\check{\theta}%
})\label{iii2}\\
&  \times\left(  (-1)^{\left(  N-1\right)  k}1_{\underline{\hat{\alpha}%
}\underline{\check{\alpha}}}^{\underline{\hat{\alpha}}^{\prime}\underline
{\check{\alpha}}^{\prime}}-\dot{\sigma}_{1}^{\mathcal{O}}(n)S_{\underline
{\hat{\alpha}}\bar{1}}^{\bar{1}\underline{\hat{\alpha}}^{\prime}}%
(\underline{\hat{\theta}},\omega)S_{1\underline{\check{\alpha}}}%
^{\underline{\check{\alpha}}^{\prime}1}(\theta,\underline{\check{\theta}%
})\right) \nonumber
\end{align}
where $k=\left\vert \underline{\hat{\alpha}}\right\vert $. It has been used
that crossing \cite{BFK1} $S_{\alpha\bar{1}}^{\bar{1}\beta}(\theta
)=(-1)^{\left(  N-1\right)  }S_{1\alpha}^{\beta1}(i\pi-\theta)$ implies
\[
S_{\alpha^{\prime}1}^{1\alpha^{\prime\prime}}(\theta)S_{\alpha\bar{1}}%
^{\bar{1}\alpha^{\prime}}(\theta-i\pi)=(-1)^{\left(  N-1\right)  }%
S_{\alpha^{\prime}1}^{1\alpha^{\prime\prime}}(\theta)S_{1\alpha}%
^{\alpha^{\prime}1}(-\theta)=(-1)^{\left(  N-1\right)  }1_{\alpha}%
^{\alpha^{\prime\prime}}.
\]

\end{enumerate}

We consider 4 procedures:

\begin{enumerate}
\item Let $\underline{\theta}_{W}=(\omega+W,\theta+W,\underline{\hat{\theta}%
}+W,\underline{\check{\theta}}),~k=N+\left\vert \underline{\hat{\alpha}%
}\right\vert >N,~l=\left\vert \underline{\check{\alpha}}\right\vert $ then by
(\ref{iii1}) and (\ref{FFF})
\begin{align}
\operatorname*{Res}_{\omega=i\pi+\theta}F_{\bar{1}1\underline{\hat{\alpha}%
}\underline{\check{\alpha}}}^{\mathcal{O}}(\underline{\theta}_{W})  &
=2i\mathbf{C}_{\bar{1}1}F_{\underline{\hat{\alpha}}^{\prime}\underline
{\check{\alpha}}^{\prime}}^{\mathcal{O}}(\underline{\hat{\theta}}%
+W,\underline{\check{\theta}})\left(  1_{\underline{\hat{\alpha}}%
\underline{\check{\alpha}}}^{\underline{\hat{\alpha}}^{\prime}\underline
{\check{\alpha}}^{\prime}}-\dot{\sigma}_{1}^{\mathcal{O}}(n)S_{1\underline
{\hat{\alpha}}}^{\underline{\hat{\alpha}}^{\prime}1}(\theta,\underline
{\hat{\theta}})S_{1\underline{\check{\alpha}}}^{\underline{\check{\alpha}%
}^{\prime}1}(\theta+W,\underline{\check{\theta}})\right) \label{c1}\\
&  \overset{W\rightarrow\infty}{\rightarrow}2ic_{\hat{\mathcal{O}%
}\mathcal{\check{O}}}^{\mathcal{O}}(k-N,l,W)F_{\underline{\hat{\alpha}%
}^{\prime}}^{\hat{\mathcal{O}}}(\underline{\hat{\theta}})\left(
1_{\underline{\hat{\alpha}}}^{\underline{\hat{\alpha}}^{\prime}}-\dot{\sigma
}_{1}^{\hat{\mathcal{O}}}(k)S_{1\underline{\hat{\alpha}}}^{\underline
{\hat{\alpha}}^{\prime}1}(\theta,\underline{\hat{\theta}})\right)
F_{\underline{\check{\alpha}}}^{\mathcal{\check{O}}}(\underline{\check{\theta
}})\nonumber
\end{align}
if
\begin{equation}
\dot{\sigma}_{1}^{\mathcal{O}}(n)S_{1\underline{\check{\alpha}}}%
^{\underline{\check{\alpha}}^{\prime}1}(\theta+W,\underline{\check{\theta}%
})\overset{W\rightarrow\infty}{\rightarrow}\dot{\sigma}_{1}^{\hat{\mathcal{O}%
}}(k)1_{\underline{\check{\alpha}}}^{\underline{\check{\alpha}}^{\prime}}\,.
\label{tocheck}%
\end{equation}

\item Inverting the procedures%
\begin{multline}
\operatorname*{Res}_{\omega=i\pi+\theta}\left\{  F_{\bar{1}1\underline
{\hat{\alpha}}\underline{\check{\alpha}}}^{\mathcal{O}}(\underline{\theta}%
_{W})\overset{W\rightarrow\infty}{\rightarrow}c_{\hat{\mathcal{O}%
}\mathcal{\check{O}}}^{\mathcal{O}}\left(  k,l,W\right)  F_{\bar{1}%
1\underline{\hat{\alpha}}}^{\hat{\mathcal{O}}}(\omega,\theta,\underline
{\hat{\theta}})F_{\underline{\check{\alpha}}}^{\mathcal{\check{O}}}%
(\underline{\check{\theta}})\right\} \label{c2}\\
=c_{\hat{\mathcal{O}}\mathcal{\check{O}}}^{\mathcal{O}}\left(  k,l,W\right)
2iF_{\underline{\hat{\alpha}}^{\prime}}^{\hat{\mathcal{O}}}(\underline
{\hat{\theta}})\mathbf{C}_{\bar{1}1}\left(  1_{\underline{\hat{\alpha}}%
}^{\underline{\hat{\alpha}}^{\prime}}-\dot{\sigma}_{1}^{\hat{\mathcal{O}}%
}(k)S_{1\underline{\hat{\alpha}}}^{\underline{\hat{\alpha}}^{\prime}1}\right)
F_{\underline{\check{\alpha}}}^{\mathcal{\check{O}}}(\underline{\check{\theta
}})
\end{multline}

\item Let $\underline{\theta}_{W}=(\underline{\hat{\theta}}+W,\omega
,\theta,\underline{\check{\theta}}),~k=\left\vert \underline{\hat{\alpha}%
}\right\vert ,~l=N+\left\vert \underline{\check{\alpha}}\right\vert >N$ then
by (\ref{iii2}) and (\ref{FFF})
\begin{multline}
\operatorname*{Res}_{\omega=i\pi+\theta}F_{\underline{\hat{\alpha}}\bar
{1}1\underline{\check{\alpha}}}^{\mathcal{O}}(\underline{\theta}%
_{W})=2i\,\mathbf{C}_{1\bar{1}}\,F_{\underline{\hat{\alpha}}^{\prime
}\underline{\check{\alpha}}^{\prime}}^{\mathcal{O}}(\underline{\hat{\theta}%
}+W,\underline{\check{\theta}})\\
\times\left(  (-1)^{\left(  N-1\right)  k}1_{\hat{\alpha}}^{\hat{\alpha
}^{\prime}}1_{\underline{\check{\alpha}}}^{\underline{\check{\alpha}}^{\prime
}}-\dot{\sigma}_{1}^{\mathcal{O}}(n)S_{\underline{\hat{\alpha}}\bar{1}}%
^{\bar{1}\hat{\alpha}^{\prime}}(\underline{\hat{\theta}}+W,\omega
)S_{1\underline{\check{\alpha}}}^{\underline{\check{\alpha}}^{\prime}1}%
(\theta,\underline{\check{\theta}})\right) \\
\overset{W\rightarrow\infty}{\rightarrow}2i\mathbf{C}_{\bar{1}1}%
c_{\hat{\mathcal{O}}\mathcal{\check{O}}}^{\mathcal{O}}(k,l-N,W)F_{\hat{\alpha
}}^{\hat{\mathcal{O}}}(\hat{\theta})F_{\underline{\check{\alpha}}^{\prime}%
}^{\mathcal{\check{O}}}(\underline{\check{\theta}})\,(-1)^{\left(  N-1\right)
k}\left(  1_{\underline{\check{\alpha}}}^{\underline{\check{\alpha}}^{\prime}%
}-\dot{\sigma}_{1}^{\mathcal{\check{O}}}(l)S_{1\underline{\check{\alpha}}%
}^{\underline{\check{\alpha}}^{\prime}1}(\theta,\underline{\check{\theta}%
})\right)  \label{c3}%
\end{multline}
if
\begin{equation}
\dot{\sigma}_{1}^{\mathcal{O}}(n)S_{\underline{\hat{\alpha}}\bar{1}}^{\bar
{1}\hat{\alpha}^{\prime}}(\underline{\hat{\theta}}+W,\omega)\overset
{W\rightarrow\infty}{\rightarrow}(-1)^{\left(  N-1\right)  k}\dot{\sigma}%
_{1}^{\mathcal{\check{O}}}(l)1_{\underline{\hat{\alpha}}}^{\underline
{\hat{\alpha}}^{\prime}}\,. \label{tocheck1}%
\end{equation}

\item Taking first $W\rightarrow\infty$ and then the $\operatorname*{Res}$
means%
\begin{multline}
\operatorname*{Res}_{\omega=i\pi+\theta}\left\{  F_{\underline{\hat{\alpha}%
}\bar{1}1\underline{\check{\alpha}}}^{\mathcal{O}}(\underline{\theta}%
_{W})\overset{W\rightarrow\infty}{\rightarrow}c_{\hat{\mathcal{O}%
}\mathcal{\check{O}}}^{\mathcal{O}}(k,l,W)F_{\hat{\alpha}}^{\hat{\mathcal{O}}%
}(\hat{\theta})F_{\bar{1}1\underline{\check{\alpha}}}^{\mathcal{\check{O}}%
}(\omega,\theta,\underline{\check{\theta}})\right\} \label{c4}\\
=c_{\hat{\mathcal{O}}\mathcal{\check{O}}}^{\mathcal{O}}(k,l,W)F_{\hat{\alpha}%
}^{\hat{\mathcal{O}}}(\hat{\theta})2i\mathbf{C}_{\bar{1}1}F_{\underline
{\check{\alpha}}^{\prime}}^{\mathcal{\check{O}}}(\underline{\check{\theta}%
})\left(  1_{\underline{\check{\alpha}}}^{\underline{\check{\alpha}}^{\prime}%
}-\dot{\sigma}_{1}^{\mathcal{\check{O}}}(l)S_{1\underline{\check{\alpha}}%
}^{\underline{\check{\alpha}}^{\prime}1}(\theta,\underline{\check{\theta}%
})\right)  .
\end{multline}
1. and 2. prove that $c_{\hat{\mathcal{O}}\mathcal{\check{O}}}^{\mathcal{O}%
}\left(  k,l,W\right)  $ is independent of $k$%
\[
c_{\hat{\mathcal{O}}\mathcal{\check{O}}}^{\mathcal{O}}(k-N,l,W)=c_{\hat
{\mathcal{O}}\mathcal{\check{O}}}^{\mathcal{O}}\left(  k,l,W\right)  .
\]
3. and 4. imply that $c_{\hat{\mathcal{O}}\mathcal{\check{O}}}^{\mathcal{O}%
}\left(  k,l,W\right)  $ depends on $l$ as
\begin{equation}
c_{\hat{\mathcal{O}}\mathcal{\check{O}}}^{\mathcal{O}}(k,l-N,W)=c_{\hat
{\mathcal{O}}\mathcal{\check{O}}}^{\mathcal{O}}(k,l,W)(-1)^{\left(
N-1\right)  k}\nonumber
\end{equation}
which means that $c_{\hat{\mathcal{O}}\mathcal{\check{O}}}^{\mathcal{O}%
}\left(  k,l,W\right)  $ is independent of $l$ if $k=0\operatorname{mod}N$ and
in general
\[
c_{\hat{\mathcal{O}}\mathcal{\check{O}}}^{\mathcal{O}}(k,l,W)=c_{\hat
{\mathcal{O}}\mathcal{\check{O}}}^{\mathcal{O}}(k_{0},l_{0},W)(-1)^{\left(
N-1\right)  k\left(  l-l_{0}\right)  /N}%
\]
where $\left(  l-l_{0}\right)  =0\operatorname{mod}N$ and $c_{\hat
{\mathcal{O}}\mathcal{\check{O}}}^{\mathcal{O}}(k_{0},l_{0},W)$ is obtained by
a simple example.
\end{enumerate}

\setcounter{equation}{0}

\section{Form factor equations}

\label{sf}

The co-vector valued function $F_{1\dots n}^{\mathcal{O}}({\underline{\theta}%
})$ is meromorphic in all variables $\theta_{1},\dots,\theta_{n}$ and
satisfies the following relations \cite{KW,Sm}:

\begin{itemize}
\item[(i)] The Watson's equations describe the symmetry property under the
permutation of both, the variables $\theta_{i},\theta_{j}$ and the spaces
$i,j=i+1$ at the same time
\begin{equation}
F_{\dots ij\dots}^{\mathcal{O}}(\dots,\theta_{i},\theta_{j},\dots)=F_{\dots
ji\dots}^{\mathcal{O}}(\dots,\theta_{j},\theta_{i},\dots)\,S_{ij}(\theta_{ij})
\label{1.10}%
\end{equation}
for all possible arrangements of the $\theta$'s.

\item[(ii)] The crossing relation implies a periodicity property under the
cyclic permutation of the rapidity variables and spaces
\begin{multline}
^{~\text{out,}\bar{1}}\langle\,p_{1}\,|\,\mathcal{O}(0)\,|\,p_{2},\dots
,p_{n}\,\rangle_{2\dots n}^{\text{in,conn.}}\\
=F_{1\ldots n}^{\mathcal{O}}(\theta_{1}+i\pi,\theta_{2},\dots,\theta_{n}%
)\dot{\sigma}_{1}^{\mathcal{O}}\mathbf{C}^{\bar{1}1}=F_{2\ldots n1}%
^{\mathcal{O}}(\theta_{2},\dots,\theta_{n},\theta_{1}-i\pi)\mathbf{C}%
^{1\bar{1}}. \label{1.12}%
\end{multline}
The components of the vector $\dot{\sigma}_{1}^{\mathcal{O}}$ are given by
$\dot{\sigma}_{\alpha}^{\mathcal{O}}=\sigma_{\alpha}^{\mathcal{O}%
}(-1)^{(N-1)+(1-1/N)(n-Q^{\mathcal{O}})}$ \cite{BFK1}, where the statistics
factor $\sigma_{\alpha}^{\mathcal{O}}$ is determined by the space-like
commutation rule of the operator $\mathcal{O}$ and the field which creates the
particle $\alpha$. The charge conjugation matrix $\mathbf{C}^{\bar{1}1}$ is
given by (\ref{C}).

\item[(iii)] There are poles determined by one-particle states in each
sub-channel given by a subset of particles of the state. In particular the
function $F_{\underline{\alpha}}^{\mathcal{O}}({\underline{\theta}})$ has a
pole at $\theta_{12}=i\pi$ such that
\begin{equation}
\operatorname*{Res}_{\theta_{12}=i\pi}F_{1\dots n}^{\mathcal{O}}(\theta
_{1},\dots,\theta_{n})=2i\,\mathbf{C}_{12}\,F_{3\dots n}^{\mathcal{O}}%
(\theta_{3},\dots,\theta_{n})\left(  \mathbf{1}-\dot{\sigma}^{\mathcal{O}%
}S_{2n}\dots S_{23}\right)  \,. \label{1.14}%
\end{equation}

\item[(iv)] If there are also bound states in the model the function
$F_{\underline{\alpha}}^{\mathcal{O}}({\underline{\theta}})$ has additional
poles. If for instance the particles 1 and 2 form a bound state (12), there is
a pole at $\theta_{12}=i\eta$ such that
\begin{equation}
\operatorname*{Res}_{\theta_{12}=i\eta}F_{12\dots n}^{\mathcal{O}}(\theta
_{1},\theta_{2},\dots,\theta_{n})\,=F_{(12)\dots n}^{\mathcal{O}}%
(\theta_{(12)},\dots,\theta_{n})\,\sqrt{2}\Gamma_{12}^{(12)} \label{1.16}%
\end{equation}
where the bound state intertwiner $\Gamma_{12}^{(12)}$ and the values of
$\theta_{1},\,\theta_{2},\,\theta_{(12)}$ are given in general in
\cite{K1,KT1,BK}.

\item[(v)] Naturally, since we are dealing with relativistic quantum field
theories we finally have
\begin{equation}
F_{1\dots n}^{\mathcal{O}}(\theta_{1}+\mu,\dots,\theta_{n}+\mu)=e^{s\mu
}\,F_{1\dots n}^{\mathcal{O}}(\theta_{1},\dots,\theta_{n}) \label{1.18}%
\end{equation}
if the local operator transforms under Lorentz transformations as
$\mathcal{O}\rightarrow e^{s\mu}\mathcal{O}$ where $s$ is the
\textquotedblleft spin\textquotedblright\ of $\mathcal{O}$.
\end{itemize}

\noindent There exist bound states of $r$ fundamental particles $(\rho
_{1}\dots\rho_{r})$ (with $\rho_{1}<\dots<\rho_{r}$) which transform as the
anti-symmetric $SU(N)$ tensor representation of rank $r,~(0<r<N)$.

\setcounter{equation}{0}

\section{Asymptotic behavior for $W\rightarrow\infty$}

\label{sa0}

We use the short notations of Section \ref{s4}.

\paragraph{S-matrix:}

For $W\rightarrow\infty$ (up to higher order)%
\begin{align}
a(\theta+W)  &  \rightarrow e^{-i\pi\left(  1-\frac{1}{N}\right)  }%
e^{-i\eta\left(  1-\frac{1}{N}\right)  \frac{1}{\theta+W}}\nonumber\\
\tilde{b}(\theta+W)  &  \rightarrow1+i\eta\frac{1}{\theta+W}-\eta^{2}\frac
{1}{\left(  \theta+W\right)  ^{2}}\label{Sa}\\
\tilde{c}(\theta+W)  &  \rightarrow-i\eta\frac{1}{W}.\nonumber
\end{align}

\paragraph{Minimal form factor function $F$ and $\tilde{\phi}$-, $\tau
$-function:}

The functions defined in (\ref{F}) -- (\ref{tau}) satisfy the asymptotic
behavior for $W\rightarrow\infty$ (up to higher order)%
\begin{align}
F\left(  \theta+W\right)   &  \rightarrow X(W)^{-\left(  1-\frac{1}{N}\right)
}\left(  e^{\frac{1}{2}\left(  \theta-i\pi\right)  }\right)  ^{1-\frac{1}{N}%
}\label{Fa}\\
\tilde{\phi}(\theta\pm W)  &  \rightarrow X(W)e^{\mp\frac{1}{2}\theta}e^{\mp
i\pi\frac{1}{2}\left(  1-\frac{1}{N}\right)  }\left(  1\mp\frac{1}{NW}\left(
\theta+i\pi\left(  1-\frac{1}{N}\right)  \right)  \right) \label{phia}\\
\tau(\theta+W)  &  \rightarrow X(W)^{-2}e^{\theta}\left(  1+\frac{2\theta}%
{NW}\right) \label{taua}\\
\bar{F}\left(  \theta+W\right)   &  \rightarrow X(W)^{-\frac{1}{N}}\left(
e^{\frac{1}{2}\left(  \theta-i\pi\right)  }\right)  ^{\frac{1}{N}}=\left(
\left(  2\pi\right)  ^{-1-\frac{1}{N}}W^{\frac{1}{N}}e^{\frac{1}{2}W}%
e^{\frac{1}{2}\left(  \theta-i\pi\right)  }\right)  ^{\frac{1}{N}}%
\label{Fba}\\
X(W)  &  =\left(  2\pi\right)  ^{1+\frac{1}{N}}W^{-\frac{1}{N}}e^{-\frac{1}%
{2}W}. \label{X}%
\end{align}

\paragraph{p-functions:}

The asymptotic behavior for $W\rightarrow\infty$ of the p-functions
(\ref{pJ}), (\ref{pphi}) and (\ref{ppsi}) are given by
\begin{equation}
p^{J}(\underline{\theta}_{W},\underline{\underline{z}}_{W})\rightarrow
e^{-\frac{1}{2}W\left(  k-k_{1}-k_{N-1}\right)  }\left(  {\textstyle\sum
\nolimits_{i=1}^{k}} e^{-\hat{\theta}_{i}}\right)  p^{J}(\underline
{\hat{\theta}},\underline{\underline{\hat{z}}})p^{J}(\underline{\check{\theta
}},\underline{\underline{\check{z}}}) \label{pJa}%
\end{equation}%
\begin{equation}
p^{T}(\underline{\theta}_{W},\underline{z}_{W})\rightarrow\frac{\sum
e^{\hat{z}_{i}}}{\sum e^{\hat{\theta}_{i}}}-\frac{\sum e^{-\check{z}_{i}}%
}{\sum e^{-\check{\theta}_{i}}}=p^{T_{+}}(\underline{\hat{\theta}}%
,\underline{\hat{z}})+p^{T-}(\underline{\check{\theta}},\underline{\check{z}})
\label{pTa}%
\end{equation}%
\begin{equation}
p^{\phi}(\underline{\theta}_{W},\underline{z}_{W})\rightarrow e^{-W\left(
\left(  1-\frac{1}{N}\right)  k-k_{1}\right)  }p^{\phi}(\underline{\hat
{\theta}},\underline{\hat{z}})p^{\phi}(\underline{\check{\theta}}%
,\underline{\check{z}}) \label{pphia}%
\end{equation}%
\begin{equation}
p^{\psi}(\underline{\theta}_{W},\underline{z}_{W})\rightarrow e^{-W\frac{1}%
{2}\left(  \left(  1-\frac{1}{N}\right)  k-k_{1}\right)  }p^{\psi}%
(\underline{\hat{\theta}},\underline{\hat{z}})p^{\psi}(\underline
{\check{\theta}},\underline{\check{z}}). \label{ppsia}%
\end{equation}

\paragraph{F-function:}

Using (\ref{Fa}) we calculate for $F(\underline{\theta})$ defined in
(\ref{FK}) for $W\rightarrow\infty$ (up to a constant factor)%
\begin{equation}
F(\underline{\theta}_{W})\rightarrow F_{0}(\underline{\theta}%
,W)=X(W)^{-\left(  1-\frac{1}{N}\right)  kl}F(\underline{\hat{\theta}%
})F(\underline{\check{\theta}})\left(  {\textstyle\prod\nolimits_{i=1}^{k}}
e^{\frac{1}{2}l\left(  1-\frac{1}{N}\right)  \hat{\theta}_{i}}\right)  \left(
{\textstyle\prod\nolimits_{i=1}^{l}} e^{-\frac{1}{2}k\left(  1-\frac{1}%
{N}\right)  \check{\theta}_{i}}\right)  \label{FFa}%
\end{equation}
with $X(W)$ defined in (\ref{X}).

\paragraph{The h-funktion}

defined in (\ref{h}) satisfies for $W\rightarrow\infty$ (up to a constant
factor)%
\begin{equation}
\tilde{h}(\underline{\theta}_{W},\underline{z}_{W})\rightarrow\tilde{h}%
_{0}(\underline{\theta},\underline{z},W)\left(  1+\frac{1}{W}\tilde{h}%
_{1}(\underline{\theta},\underline{z})+O(W^{-2})\right)  \label{ha}%
\end{equation}
where
\begin{align}
\tilde{h}_{0}(\underline{\theta},\underline{z},W)  &  =X(W)^{lk_{1}%
+kl_{1}-2k_{1}l_{1}}\tilde{h}(\underline{\hat{\theta}},\underline{\hat{z}%
})\tilde{h}(\underline{\check{\theta}},\underline{\check{z}})\label{h0}\\
&  \times\left(  {\textstyle\prod\nolimits_{i=1}^{k}} e^{-\frac{1}{2}l_{1}%
\hat{\theta}_{i}}\right)  \left(  {\textstyle\prod\nolimits_{i=1}^{l}}
e^{\frac{1}{2}k_{1}\check{\theta}_{i}}\right)  \left(  {\textstyle\prod
\nolimits_{i=1}^{k_{1}}} e^{\left(  l_{1}-\frac{1}{2}l\right)  \hat{z}_{i}%
}\right)  \left(  {\textstyle\prod\nolimits_{i=1}^{l_{1}}} e^{-\left(
k_{1}-\frac{1}{2}k\right)  \check{z}_{i}}\right) \nonumber
\end{align}
and (up to a constant\footnote{$\frac{1}{N}i\pi\left(  1-\frac{1}{N}\right)
\left(  lk_{1}+kl_{1}\right)  $} which will not contribute to our results)%
\begin{equation}
\tilde{h}_{1}(\underline{\theta},\underline{z})=\frac{1}{N}\left(  k_{1}%
\sum\check{\theta}_{i}-l_{1}\sum\hat{\theta}_{i}-\left(  l-2l_{1}\right)
\sum\hat{z}_{i}+\left(  k-2k_{1}\right)  \sum\check{z}_{i}\right)  .
\label{h1}%
\end{equation}

\subparagraph{The complete h-function}

defined by (\ref{hh}) satisfies%
\begin{align}
\tilde{h}\,(\underline{\theta}_{W},\underline{\underline{z}}_{W})  &
=\prod_{j=0}^{N-2}\tilde{h}(\underline{z}_{W}^{(j)},\underline{z}_{W}%
^{(j+1)})\rightarrow\tilde{h}_{0}\,(\underline{\theta},\underline
{\underline{z}},W)\left(  1+\frac{1}{W}\tilde{h}_{1}\,(\underline{\theta
},\underline{\underline{z}})+O(W^{-2})\right) \label{hha}\\
\tilde{h}_{0}\,(\underline{\theta},\underline{\underline{z}},W)  &
=\prod_{j=0}^{N-2}\tilde{h}_{0}(\underline{z}^{(j)},\underline{z}%
^{(j+1)},W)\label{hh0}\\
\tilde{h}_{1}\,(\underline{\theta},\underline{\underline{z}})  &  =\sum
_{j=0}^{N-2}\tilde{h}_{1}\left(  \underline{z}^{(j)},\underline{z}%
^{(j+1)}\right)  \label{hh1}%
\end{align}
where $\tilde{h}_{0}$ and $\tilde{h}_{1}$ given by (\ref{h0}) and (\ref{h1})
(with $\underline{z}^{(0)}=\underline{\theta}$ and $k,k_{1}\rightarrow
k_{j},k_{j+1}$ etc.). This means that in leading order%
\begin{align*}
\tilde{h}\,(\underline{\theta}_{W},\underline{\underline{z}}_{W})  &
\rightarrow\tilde{h}\,(\underline{\hat{\theta}},\underline{\underline{\hat{z}%
}})\tilde{h}\,(\underline{\check{\theta}},\underline{\underline{\check{z}}})\\
&  \times X(W)^{\sum_{j=0}^{N-2}\left(  l_{j}k_{j+1}+k_{j}l_{j+1}%
-2k_{j+1}l_{j+1}\right)  }\left(  {\textstyle\prod\nolimits_{i=1}^{k}}
e^{-\frac{1}{2}l_{1}\hat{\theta}_{i}}\right)  \left(  {\textstyle\prod
\nolimits_{i=1}^{l}} e^{\frac{1}{2}k_{1}\check{\theta}_{i}}\right) \\
&  \times\prod_{j=1}^{N-2}\left(  \left(  {\textstyle\prod\nolimits_{i=1}%
^{k_{j}}} e^{-\frac{1}{2}\left(  l_{j-1}-2l_{j}+l_{j+1}\right)  \hat{z}%
_{i}^{(j)}}\right)  \left(  {\textstyle\prod\nolimits_{i=1}^{l_{j}}}
e^{\frac{1}{2}\left(  k_{j-1}-2k_{j}+k_{j+1}\right)  \check{z}_{j}^{(j)}%
}\right)  \right) \\
&  \times\left(  {\textstyle\prod\nolimits_{i=1}^{k_{N-1}}} e^{-\frac{1}%
{2}\left(  l_{N-2}-2l_{N-1}\right)  \hat{z}_{i}^{(N-1)}}\right)  \left(
{\textstyle\prod\nolimits_{i=1}^{l_{N-1}}} e^{\frac{1}{2}\left(
k_{N-2}-2k_{N-1}\right)  \check{z}_{i}^{(N-1)}}\right)  .
\end{align*}

\paragraph{Function $F\ast h$:}

in leading order\textbf{ }%
\begin{align}
&  F(\underline{\theta}_{W})\tilde{h}\,(\underline{\theta}_{W},\underline
{\underline{z}}_{W})\label{Fha}\\
&  \rightarrow F(\underline{\hat{\theta}})\tilde{h}\,(\underline{\hat{\theta}%
},\underline{\underline{\hat{z}}})F(\underline{\check{\theta}})\tilde
{h}\,(\underline{\check{\theta}},\underline{\underline{\check{z}}})X^{-\left(
1-\frac{1}{N}\right)  kl+\sum_{j=0}^{N-2}\left(  l_{j}k_{j+1}+k_{j}%
l_{j+1}-2k_{j+1}l_{j+1}\right)  }\nonumber\\
&  \times\left(  {\textstyle\prod\nolimits_{i=1}^{k}} e^{\frac{1}{2}\left(
l\left(  1-\frac{1}{N}\right)  -l_{1}\right)  \hat{\theta}_{i}}\right)
\left(  {\textstyle\prod\nolimits_{i=1}^{l}} e^{-\frac{1}{2}\left(  k\left(
1-\frac{1}{N}\right)  -k_{1}\right)  \check{\theta}_{i}}\right) \nonumber\\
&  \times\prod_{j=1}^{N-2}\left(  \left(  {\textstyle\prod\nolimits_{i=1}%
^{k_{j}}} e^{-\frac{1}{2}\left(  l_{j-1}-2l_{j}+l_{j+1}\right)  \hat{z}%
_{i}^{(j)}}\right)  \left(  {\textstyle\prod\nolimits_{i=1}^{l_{j}}}
e^{\frac{1}{2}\left(  k_{j-1}-2k_{j}+k_{j+1}\right)  \check{z}_{j}^{(j)}%
}\right)  \right) \nonumber\\
&  \times\left(  {\textstyle\prod\nolimits_{i=1}^{k_{N-1}}} e^{-\frac{1}%
{2}\left(  l_{N-2}-2l_{N-1}\right)  \hat{z}_{i}^{(N-1)}}\right)  \left(
{\textstyle\prod\nolimits_{i=1}^{l_{N-1}}} e^{\frac{1}{2}\left(
k_{N-2}-2k_{N-1}\right)  \check{z}_{i}^{(N-1)}}\right)  .\nonumber
\end{align}

\paragraph{Bethe state:}

By the asymptotic expansion of the S-matrix (\ref{St})%
\begin{align*}
\tilde{S}_{\alpha\beta}^{\delta\gamma}(\theta)  &  =\mathbf{1}_{\alpha\beta
}^{\delta\gamma}\tilde{b}(\theta)\mathbf{+P}_{\alpha\beta}^{\delta\gamma
}\tilde{c}(\theta)=\tilde{b}(\theta)\left(  \mathbf{1}_{\alpha\beta}%
^{\delta\gamma}\mathbf{-}\frac{i\eta}{\theta}\mathbf{M}_{\alpha\beta}%
^{\delta\gamma}+O(\theta^{-2})\right) \\
\mathbf{1}_{\alpha\beta}^{\delta\gamma}  &  =\delta_{\alpha}^{\gamma}%
\delta_{\beta}^{\delta},~\mathbf{M}_{\alpha\beta}^{\delta\gamma}%
=\mathbf{P}_{\alpha\beta}^{\delta\gamma}=\delta_{\alpha}^{\delta}\delta
_{\beta}^{\gamma}%
\end{align*}
we obtain \cite{BKZ} for the monodromy matrix (\ref{T})\footnote{Up to
$\frac{1}{w}$ terms from $\tilde{b}$, which will not contribute to our
calculations, so we will skip them in the following.}
\begin{align}
{\tilde{T}}_{\underline{\alpha}\beta}^{\beta^{\prime}\underline{\alpha
}^{\prime}}({\underline{\theta}+W},z)  &  =(\mathbf{1}-W^{-1}\,i\eta
\,\mathbf{M)}_{\underline{\alpha}\beta}^{\beta^{\prime}\underline{\alpha
}^{\prime}}+O(W^{-2})\label{4.3}\\
{\tilde{T}}_{\underline{\alpha}\beta}^{\beta^{\prime}\underline{\alpha
}^{\prime}}({\underline{\theta}},z+W)  &  =(\mathbf{1}+W^{-1}\,i\eta
\,\mathbf{M)}_{\underline{\alpha}\beta}^{\beta^{\prime}\underline{\alpha
}^{\prime}}+O(W^{-2}).\nonumber
\end{align}
The matrix elements of $\mathbf{M}_{\underline{\alpha}\beta}^{\beta^{\prime
}\underline{\alpha}^{\prime}}=\left(  \sum_{i=1}^{n}\mathbf{1}\cdots
\mathbf{P}_{i}\cdots\mathbf{1}\right)  _{\underline{\alpha}\beta}%
^{\beta^{\prime}\underline{\alpha}^{\prime}}$, as a matrix in the auxiliary
space, yields the $su(N)$ Lie algebra generators.

More general, the product ${\tilde{T}}_{\underline{\alpha}\underline{\beta}%
}^{\underline{\beta}^{\prime}\underline{\alpha}^{\prime}}({\underline{\theta}%
},\underline{z})=\left(  {\tilde{T}}({\underline{\theta}},z_{m})\mathbf{\dots
}{\tilde{T}}({\underline{\theta}},z_{1})\right)  _{\underline{\alpha
}\underline{\beta}}^{\underline{\beta}^{\prime}\underline{\alpha}^{\prime}}$
satisfies%
\begin{align*}
{\tilde{T}}_{\underline{\alpha}\underline{\beta}}^{\underline{\beta}^{\prime
}\underline{\alpha}^{\prime}}({\underline{\theta}+W},\underline{z})  &
\rightarrow(\mathbf{1}-W^{-1}\,i\eta\,\mathbf{M)}_{\underline{\alpha
}\underline{\beta}}^{\underline{\beta}^{\prime}\underline{\alpha}^{\prime}%
},~~{\tilde{T}}_{\underline{\alpha}\underline{\beta}}^{\underline{\beta
}^{\prime}\underline{\alpha}^{\prime}}({\underline{\theta}},\underline
{z}+W)\rightarrow(\mathbf{1}+W^{-1}\,i\eta\,\mathbf{M)}_{\underline{\alpha
}\underline{\beta}}^{\underline{\beta}^{\prime}\underline{\alpha}^{\prime}}\\
\mathbf{M}_{\underline{\alpha}\underline{\beta}}^{\underline{\beta}^{\prime
}\underline{\alpha}^{\prime}}  &  =\sum_{i^{\prime}=1}^{m}\left(
\mathbf{1+\dots+M}_{i^{\prime}}\mathbf{+\dots+1}\right)  _{\underline{\alpha
}\underline{\beta}}^{\underline{\beta}^{\prime}\underline{\alpha}^{\prime}}.
\end{align*}
The basic Bethe ansatz state (\ref{Phi0}) for level $0$ may be written as%
\[
\tilde{\Phi}_{\underline{\alpha}}^{\underline{\beta}}(\underline{\theta
},\underline{z})=\left(  \Omega\tilde{C}^{\beta_{m}}(\underline{\theta}%
,z_{m})\cdots\tilde{C}^{\beta_{1}}(\underline{\theta},z_{1})\right)
_{\underline{\alpha}}={\tilde{T}}_{\underline{\alpha},\underline{1}%
}^{\underline{\beta},\underline{1}}({\underline{\theta}},\underline{z})
\]
and
\[
\tilde{\Phi}_{\underline{\alpha}}^{\underline{\beta}}(\underline{\theta}%
_{W},\underline{z}_{W})={\tilde{T}}_{\underline{\alpha},\underline{1}%
}^{\underline{\beta},\underline{1}}({\underline{\theta}}_{W},\underline{z}%
_{W})\rightarrow\tilde{\Phi}_{0}\,_{\underline{\alpha}}^{\underline{\beta}%
}(\underline{\theta},\underline{z})+\frac{1}{W}\tilde{\Phi}_{1}\,_{\underline
{\alpha}}^{\underline{\beta}}(\underline{\theta},\underline{z})
\]
where%
\begin{align}
\tilde{\Phi}_{0}\,_{\underline{\alpha}}^{\underline{\beta}}(\underline{\theta
},\underline{z})  &  =\tilde{\Phi}_{\underline{\hat{\alpha}}}^{\underline
{\hat{\beta}}}({\underline{\hat{\theta}}},\underline{\hat{z}})\tilde{\Phi
}_{\underline{\check{\alpha}}}^{\underline{\check{\beta}}}({\underline
{\check{\theta}}},\underline{\check{z}})\nonumber\\
\tilde{\Phi}_{1}\,_{\underline{\alpha}}^{\underline{\beta}}(\underline{\theta
},\underline{z})  &  =i\eta\left(  \left(  {\tilde{T}}_{\underline{\hat
{\alpha}},\underline{\hat{\beta}}^{\prime}}^{\underline{\hat{\beta}%
},\underline{\hat{\alpha}}^{\prime}}({\underline{\hat{\theta}}},\underline
{\hat{z}})\right)  \left(  {\tilde{T}}_{\underline{\check{\alpha}}^{\prime
},\underline{\check{1}}}^{\underline{\check{\beta}}^{\prime},\underline
{\check{1}}}({\underline{\check{\theta}}},\underline{\check{z}})\mathbf{\check
{M}}_{\underline{\check{\alpha}},\underline{\hat{1}}}^{\underline{\hat{\beta}%
}^{\prime},\underline{\check{\alpha}}^{\prime}}\right)  -\left(
\mathbf{\hat{M}}_{\underline{\hat{\alpha}}^{\prime},\underline{\check{\beta}%
}^{\prime}}^{\underline{\check{\beta}},\underline{\hat{1}}}{\tilde{T}%
}_{\underline{\hat{\alpha}},\underline{\hat{1}}}^{\underline{\hat{\beta}%
},\underline{\hat{\alpha}}^{\prime}}({\underline{\hat{\theta}}},\underline
{\hat{z}})\right)  \left(  {\tilde{T}}_{\underline{\check{\alpha}}%
,\underline{\check{1}}}^{\underline{\check{\beta}}^{\prime},\underline
{\check{1}}}({\underline{\check{\theta}}},\underline{\check{z}})\right)
\right)  . \label{PHI1a}%
\end{align}
Similarly, for the higher levels of the Bethe ansatz. The complete Bethe
ansatz state (\ref{PHI}) satisfies%
\begin{equation}
\tilde{\Phi}_{\underline{\alpha}}(\underline{\theta}_{W},\underline
{\underline{z}}_{W})=\tilde{\Phi}_{0}\,_{\underline{\alpha}}(\underline
{\theta},\underline{\underline{z}})+\frac{1}{W}\tilde{\Phi}_{1}\,_{\underline
{\alpha}}(\underline{\theta},\underline{\underline{z}})+O(W^{-2}) \label{PHIa}%
\end{equation}
where%
\begin{align}
\tilde{\Phi}_{0}\,_{\underline{\alpha}}(\underline{\theta},\underline
{\underline{z}})  &  =\tilde{\Phi}_{\underline{\hat{\alpha}}}({\underline
{\hat{\theta}}},\underline{\underline{\hat{z}}})\tilde{\Phi}_{\underline
{\check{\alpha}}}({\underline{\check{\theta}}},\underline{\underline{\check
{z}}})\label{PHI0}\\
\tilde{\Phi}_{1}\,_{\underline{\alpha}}(\underline{\theta},\underline
{\underline{z}})  &  =\sum_{j=0}^{N-2}\tilde{\Phi}_{0}^{(N-2)}\,_{\underline
{\alpha}_{N-2}}^{\underline{\alpha}_{N-1}}\,(\underline{z}^{(N-2)}%
,\underline{z}^{(N-1)})\dots\tilde{\Phi}_{1}^{(j)}\,_{\underline{\alpha}_{j}%
}^{\underline{\alpha}_{j+1}}(\underline{z}^{(j)},\underline{z}^{(j+1)}%
)\dots\tilde{\Phi}_{0}\,_{\underline{\alpha}}^{\underline{\alpha}_{1}%
}\,(\underline{\theta},\underline{z}^{(1)})\,. \label{PHI1}%
\end{align}

\setcounter{equation}{0}

\section{Exponential behavior}

\subparagraph{Iso-scalar:}

Let $\mathcal{O}$ be an iso-scalar operator, then (\ref{w}) implies
(\ref{nphi})
\[
n_{j}=n\left(  1-j/N\right)  \Rightarrow l_{j}=n\left(  1-j/N\right)  -k_{j}%
\]
The asymptotic relation (\ref{Fha}) gives the exponential behavior%
\begin{equation}
F(\underline{\theta}_{W})\tilde{h}\,(\underline{\theta}_{W},\underline
{\underline{z}}_{W})\propto X(W)^{-\left(  1-\frac{1}{N}\right)  kl+\sum
_{j=0}^{N-2}\left(  l_{j}k_{j+1}+k_{j}l_{j+1}-2k_{j+1}l_{j+1}\right)  }
\label{Fhex}%
\end{equation}
with $X(W)$ given in (\ref{X}). By elementary calculations one shows that the
exponent of $X(W)$ can be written as%
\begin{equation}
-\left(  1-1/N\right)  kl+\sum_{j=0}^{N-2}\left(  l_{j}k_{j+1}+\left(
k_{j}-2k_{j+1}\right)  l_{j+1}\right)  =\tilde{k}_{1}^{2}+\tilde{k}_{N-1}%
^{2}+\sum_{j=1}^{N-2}\left(  \tilde{k}_{j+1}-\tilde{k}_{j}\right)  ^{2}
\label{expsc}%
\end{equation}
where $\tilde{k}_{j}=k_{j}-k\left(  1-j/N\right)  $.

\subparagraph{Adjoint representation:}

Let $\mathcal{O}$ transform as the adjoint representation, then (\ref{w})
implies (\ref{nJ})
\[
n_{j}=n\left(  1-j/N\right)  -1\Rightarrow l_{j}=n\left(  1-j/N\right)
-k_{j}-1
\]
Therefore in (\ref{expsc}) there is in addition%
\[
-k+2k_{1}-\sum_{j=1}^{N-2}\left(  -k_{j+1}+k_{j}\right)  =-k+k_{1}%
+k_{N-1}=\tilde{k}_{1}+\tilde{k}_{N-1}%
\]
and the exponent of $X(W)$ in (\ref{Fhex}) is%
\begin{multline}
-\left(  1-1/N\right)  kl+\sum_{j=0}^{N-2}\left(  l_{j}k_{j+1}+k_{j}%
l_{j+1}-2k_{j+1}l_{j+1}\right) \label{expad}\\
=\tilde{k}_{1}^{2}+\tilde{k}_{N-1}^{2}+\sum_{j=1}^{N-2}\left(  \tilde{k}%
_{j+1}-\tilde{k}_{j}\right)  ^{2}+\tilde{k}_{1}+\tilde{k}_{N-1}.
\end{multline}

\subparagraph{Iso-vector:}

Let $\mathcal{O}$ be an iso-vector operator, then (\ref{w}) implies
(\ref{npsi})%
\[
n_{j}=\left(  n-1\right)  \left(  1-j/N\right)  \Rightarrow l_{j}=n\left(
1-j/N\right)  -k_{j}-\left(  1-j/N\right)  .
\]
Therefore in (\ref{expsc}) there is in addition%
\begin{align*}
&  -\left(  k-2k_{1}\right)  \left(  1-1/N\right)  -\sum_{j=1}^{N-2}\left(
\left(  1-j/N\right)  k_{j+1}+\left(  k_{j}-2k_{j+1}\right)  \left(  1-\left(
j+1\right)  /N\right)  \right)  \\
&  =k_{1}-k\left(  1-1/N\right)  =\tilde{k}_{1}%
\end{align*}
and the exponent of $X(W)$ in (\ref{Fhex}) is%
\begin{multline}
-\left(  1-1/N\right)  kl+\sum_{j=0}^{N-2}\left(  l_{j}k_{j+1}+k_{j}%
l_{j+1}-2k_{j+1}l_{j+1}\right)  \label{expvec}\\
=\tilde{k}_{1}^{2}+\tilde{k}_{N-1}^{2}+\sum_{j=1}^{N-2}\left(  \tilde{k}%
_{j+1}-\tilde{k}_{j}\right)  ^{2}+\tilde{k}_{1}.
\end{multline}

\providecommand{\href}[2]{#2}\begingroup\raggedright\endgroup

\end{document}